\documentclass[dvipsnames]{acmart}

\usepackage{hyperref}
\usepackage{cleveref}
\crefname{theorem}{thm.}{thm.}
\crefname{lemma}{lem.}{lem.}
\crefname{corollary}{cor.}{cor.}
\crefname{section}{sec.}{sec.}
\crefname{equation}{}{}
\crefname{conjecture}{conj.}{conj.}

\usepackage{mathtools}
\usepackage{todonotes}
\usepackage{tikz}
\usepackage{bm}
\usepackage{thmtools}

\usepackage{caption}
\usepackage{subcaption}

\newcommand{\powersofk}[1]{{#1}^{\mathbf{N}}}
\newcommand{\kthpowers}[1]{{\mathbf{N}}^{#1}}

\newcommand{\tuple}[1]{\langle{#1}\rangle}
\newcommand{\floor}[1]{\lfloor{#1}\rfloor}

\newcommand{\Acc}[1]{\mathsf{Acc}_{#1}}

\newcommand{\automaton}{\mathcal{A}}

\newcommand{\torus}{\mathbb{T}}
\newcommand{\qinit}{q_{\operatorname{init}}}
\newcommand{\rinit}{r_{\operatorname{init}}}

\newcommand{\journey}{\mathsf{jour}}

\newcommand{\naturals}{\mathbb{N}}

\newcommand{\nat}{\mathbb{N}}
\newcommand{\intg}{\mathbb{Z}}
\newcommand{\rel}{\mathbb{R}}
\newcommand{\rat}{\mathbb{Q}}
\newcommand{\com}{\mathbb{C}}
\newcommand{\alg}{\overline{\rat}}
\newcommand{\ralg}{\rel \cap \alg}

\newcommand{\Acal}{\mathcal{A}}
\newcommand{\Bcal}{\mathcal{B}}
\newcommand{\Ccal}{\mathcal{C}}

\newcommand{\Fcal}{\mathcal{F}}

\newcommand{\Lcal}{\mathcal{L}}
\newcommand{\Mcal}{\mathcal{M}}

\newcommand{\Tcal}{\mathcal{T}}

\newcommand{\logtwo}{0.693}

\newcommand{\logthree}{1.0986}

\newcommand{\logten}{2.302}

\newcommand{\seq}[1]{\langle #1 \rangle_{n=0}^{\infty}}
\newcommand{\st}{\colon}

\newcommand{\zerovec}{\mathbf{0}}
\newcommand{\rexp}{\mathbb{R}_{\exp}}

\newcommand{\characteristic}[1]{\operatorname{Char}(#1)}
\newcommand{\order}[1]{\operatorname{Ord}(#1)}

\newtheorem{problem}{Problem}
\newtheorem*{theorem*}{Theorem}
\newtheorem*{corollary*}{Corollary}

\begin{document}

	\title{On the Decidability of Monadic Theories of Arithmetic Predicates}


\author{Val\'erie Berth\'e}
\email{berthe@irif.fr}
\affiliation{%
\institution{Universit\'e Paris Cité, IRIF, CNRS}
\streetaddress{}
\city{Paris}
\country{France}
\postcode{75013}}
\orcid{0000-0001-5561-7882}
\author{Toghrul Karimov}
\email{toghs@mpi-sws.org}
\orcid{0000-0002-9405-2332}
\author{Joris Nieuwveld}
\email{jnieuwve@mpi-sws.org}
\orcid{0009-0002-0339-1230}
\author{Jo\"el Ouaknine}
\email{joel@mpi-sws.org}
\orcid{0000-0003-0031-9356}
\author{Mihir Vahanwala}
\email{mvahanwa@mpi-sws.org}
\orcid{0009-0008-5709-899X}
\affiliation{%
\institution{Max Planck Institute for Software Systems}
\streetaddress{Saarland Informatics Campus}
\city{Saarbr\"ucken}
\country{Germany}
\postcode{66123}
}
\author{James Worrell}
\email{jbw@cs.ox.ac.uk}
\orcid{0000-0001-8151-2443}
\affiliation{%
\institution{University of Oxford, Department of Computer Science}
\streetaddress{}
\city{Oxford}
\country{United Kingdom}
\postcode{OX1 3QD}
}
\renewcommand{\shortauthors}{Berth\'e \emph{et al.}}

\begin{abstract}
We investigate the decidability of the monadic second-order (MSO) theory of the structure $\langle \mathbb{N};<,P_1, \ldots,P_d \rangle$, for various unary
predicates $P_1,\ldots,P_d \subseteq \mathbb{N}$. We focus in particular on `arithmetic' predicates arising in the study of linear recurrence sequences, such
as fixed-base powers $\powersofk{k} = \{k^n : n \in \mathbb{N}\}$,
$k$th powers $\kthpowers{k} = \{n^k : n \in \mathbb{N}\}$, and the set of
terms of the Fibonacci sequence $\mathsf{Fib} =
\{0,1,2,3,5,8,13,\ldots\}$ (and similarly for other linear recurrence
sequences having a single, non-repeated, dominant characteristic root).
We obtain several new unconditional and conditional decidability results, a
select sample of which are the following:
\begin{itemize}
\item The MSO theory of
$\langle \mathbb{N};<,\powersofk{2}, \mathsf{Fib} \rangle$ is decidable;
\item The MSO theory of
$\langle \mathbb{N};<, \powersofk{2}, \powersofk{3}, \powersofk{6}
\rangle$ is decidable;
\item The MSO theory of
$\langle \mathbb{N};<, \powersofk{2}, \powersofk{3}, \powersofk{5}
\rangle$ is decidable assuming Schanuel's conjecture;
\item  The MSO theory of
$\langle \mathbb{N};<, \powersofk{4}, \kthpowers{2}\rangle$ is
decidable;
\item  The MSO theory of
$\langle \mathbb{N};<, \powersofk{2}, \kthpowers{2} \rangle$ is
Turing-equivalent
to the MSO theory of $\langle \mathbb{N};<,\gamma \rangle$, where $\gamma \st \nat \to \{0,1\}$ is the binary expansion of
$\sqrt{2}-1$. The widely believed conjecture that $\sqrt{2}$ is \emph{normal} implies decidability of both MSO theories.
\end{itemize}
These results are obtained by exploiting and combining techniques from
dynamical systems, number theory, and automata theory.
This paper is the journal version of \cite{berthe2024fullversion}.
\end{abstract}

	
	\keywords{Monadic second-order logic, linear recurrence
          sequences, toric words, cutting sequences, decidability}



	\maketitle

\section{Introduction}
\label{introduction}

B\"uchi's seminal 1962 paper~\cite{buchi-MSO} established the
decidability of the monadic second-order (MSO) theory of the structure
$\langle \mathbb{N};<\rangle$,
and in so doing brought to light the profound connections between
mathematical logic and automata theory. Over the ensuing decades,
considerable work has been devoted to the question of which expansions
of $\langle \mathbb{N};<\rangle$ retain MSO decidability. In other
words, for which unary predicates $P_1,\ldots,P_d$ is the MSO theory of
$\langle \mathbb{N};<,P_1,\ldots,P_d\rangle$ decidable?\footnote{The restricted focus on \emph{unary} (or \emph{monadic}) predicates is justified by the fact that most natural non-unary predicates immediately lead to undecidability; see, e.g., \cite[Thm.~3]{thomas1975note}.}
 Here by unary predicate we mean
a fixed set of non-negative integers $P \subseteq \mathbb{N}$. Taking,
for example, $P$ to be the set of prime numbers, B\"uchi and
Landweber~\cite{buchi1969definability}
observed in 1969 that a proof of decidability of the MSO theory of
$\langle \mathbb{N};<,P\rangle$ would ``seem very difficult'', as it
would \emph{inter alia} enable one (at least in principle) to settle the twin prime
conjecture. (Decidability was subsequently established assuming the linear case of Schinzel's hypothesis H~\cite{bateman1993decidability}, also known as Dickson's conjecture.)

The set of prime numbers is, of course, highly intricate. In 1966,
Elgot and Rabin~\cite{elgotrabin} considered
a large class of simpler predicates of `arithmetic' origin, such as, for
any fixed $k$, the set $\powersofk{k} = \{k^n : n \in \mathbb{N}\}$
of powers of $k$, and the set $\kthpowers{k} = \{n^k : n \in \mathbb{N}\}$ of $k$th powers. For any such predicate $P$,
they systematically established decidability of the MSO theory of
$\langle \mathbb{N};<,P\rangle$ by using the so-called \emph{contraction method}. 
Many years later, their automata-theoretic results were  substantially developed and extended by, among others, Carton and
Thomas~\cite{cartonthomas}, Rabinovich~\cite{rabinovich},
and Rabinovich and Thomas~\cite{rabinovich2006decidable}, using the framework of \emph{effectively profinite ultimate periodicity}.
A related concept that plays a crucial role in this paper\footnote{More precisely, we make an extensive use of \emph{uniform recurrence} (see \Cref{sec::unif-rec-words}), which is a special case of almost periodicity.} is that of \emph{effective almost-periodicity}, introduced in the 1980s by
Sem\"enov~\cite{semenov1980related,semenov84_logic_theor_one_place_funct}, and recently
brought to bear in the MSO model checking of linear dynamical
systems~\cite{POPL22}.

It is notable that whilst Elgot and Rabin established separately the
decidability of the MSO theories, for example, of
$\langle \mathbb{N};<, \powersofk{2}\rangle$ and
$\langle \mathbb{N};<, \powersofk{3}\rangle$, they remained resolutely
silent on the obvious joint expansion
$\langle \mathbb{N};<, \powersofk{2}, \powersofk{3}\rangle$.
This in hindsight is wholly unsurprising: there are various statements that
one can express in the above theory whose truth values are highly
non-trivial to determine. 
An example is, for given fixed $a,b$, the assertion that there exist infinitely many powers of $3$ whose distance to the next power of~$2$ is congruent to $a$ modulo $b$.
An immediate corollary of
our first main result (\Cref{powers-main-structure-parametrised}) is that the MSO theory of
$\langle \mathbb{N};<, \powersofk{2}, \powersofk{3}\rangle$
is indeed decidable. 
Although this is new, we should mention
that decidability of the \emph{first-order} theory of
$\langle \mathbb{N};<, \powersofk{2}, \powersofk{3}\rangle$ was proven (using quantifier elimination) over forty years ago by Sem\"enov~\cite{semenov1980related}.

Looking over the last several decades' worth of research work
on monadic second-order expansions of the structure $\langle \mathbb{N};<\rangle$, it is
fair to say that the bulk of the attention has focused on
the addition of a \emph{single} predicate $P$. The obvious
reason is that whilst, in general, the decidability of
single-predicate expansions of $\langle \mathbb{N};<\rangle$ can usually
be handled with automata-theoretic techniques alone, by reasoning about
individual patterns in isolation, this is not the case when multiple
predicates are at play simultaneously. Such collections of predicates
can exhibit highly complex interaction patterns, which existing
approaches are ill-equipped to handle.
In this paper, we overcome these difficulties by showing that key aspects of such interactions can be modelled in the theory of \emph{dynamical systems}.

Our approach to analysing decidability of the MSO theory of a structure $\mathbb{S} \coloneqq \langle \nat; <, P_1,\ldots,P_d\rangle$ with $d \ge 1$ predicates is as follows.
Firstly, using B\"uchi's original construction \cite{buchi-MSO}, given an MSO formula $\varphi$ we construct an automaton~$\Acal$ over infinite words such that $\varphi$ holds in $\mathbb{S}$ if and only if $\Acal$ accepts the \emph{characteristic word} $\alpha$ of $\mathbb{S}$, which records, for each predicate $P_i$, the positions $n \in \nat$ such that $n \in P_i$.
That is, the decision problem of the MSO theory of $\mathbb{S}$ is Turing-equivalent to the \emph{automaton acceptance problem} for $\alpha$, denoted $\Acc{\alpha}$.
Next, we use arithmetic properties of specific $P_1,\ldots,P_d$ to argue that $\Acc{\alpha}$ is Turing-equivalent to $\Acc{\beta}$, which is the automaton acceptance problem for the so-called \emph{order word}~$\beta$ of $\mathbb{S}$.
The order word is a compressed version of the characteristic word that keeps track of only the order in which the elements of $P_1,\ldots,P_d$ occur.
In the final step, we give dynamical systems that generate the order word $\beta$, and use their various properties to show decidability of $\Acc{\beta}$ and hence the MSO theory of $\mathbb{S}$.

In this paper, we study two kinds of structures $\langle \nat; <, P_1,\ldots,P_d\rangle$.
Firstly, we study the case of $P_i = \{u^{(i)}_n \ge 0\}$, where $\seq{u^{(i)}_n}$ is an integer \emph{linear recurrence sequence} (LRS) with a single, non-repeated dominant root.
These include geometric progressions $\seq{a \rho^n}$ for integers $a, \rho \ge 1$, as well as the Fibonacci numbers.
In this setting, the relevant dynamical systems are \emph{translations on the $d$-dimensional torus $\torus^d \coloneqq [0,1)^d$}, given by $x \mapsto x + t$ for some $t \in \torus^d$ where addition operates coordinatewise and modulo 1.
These are fundamental compact dynamical systems that have been extensively studied from the perspectives of symbolic dynamics, ergodic theory, and number theory \cite{MR2953186,fogg2002substitutions}.
Below we state specialised versions of our main results for MSO theories of \emph{value sets} of LRS with a single, non-repeated dominant root.

\begin{theorem*}[Weakened version of \Cref{powers-main-structure-parametrised}]
	Suppose we are given an MSO formula $\varphi$ and positive integers $a_1,\rho_1,\ldots,a_d,\rho_d$ such that $\frac 1 {\log(\rho_1)},\ldots,\frac 1 {\log(\rho_d)}$ are linearly independent over $\rat$.
	Then it is decidable whether $\varphi$ holds in $\langle \nat; <, P_1,\ldots,P_d \rangle$, where $P_i = \{a_i\rho_i^n \st n \in \nat\}$.
\end{theorem*}
The linear independence condition holds, for example, when $d \le 2$, or the set $\{\rho_1,\ldots,\rho_d\}$ contains at most two \emph{multiplicatively independent} elements; 
see \Cref{lem::rank-d-2-to-LI} for the precise formulation.
The latter captures, for example, the case of $\rho_1 = 2$, $\rho_2 = 3$, $\rho_3 = 6$.
The role of the linear independence condition is to ensure that a certain \emph{hypercubic billiard} dynamical system, which captures the order in which the elements of $P_1,\ldots,P_d$ occur, is non-degenerate; see  \Cref{sec::cutting-sequences}.

Now suppose we have $a_1 = a_2 = a_3 = 1$, $\rho_1 = 2$, $\rho_2 = 3$, and $\rho_3 = 5$.
It is widely believed that $1/\log(2), 1/\log(3), 1/\log(5)$ are linearly independent over $\rat$.
However, no proof is known, and hence the theorem above does not apply.
Nevertheless, we can conditionally deduce the required linear independence by invoking, for example, \emph{Schanuel's conjecture} in transcendental number theory \cite[Chap.~1.4]{waldschmidt2000}.
Schanuel's conjecture is a unifying conjecture that, among others, completely describes all \emph{polynomial} relations between logarithms of algebraic numbers, which in turn captures all \emph{linear} relations between the inverses thereof.
As another example, Schanuel's conjecture implies that for any non-zero polynomial $p(x,y)$ with rational coefficients, $p(e,\pi) \ne 0$, i.e., the constants $e$ and $\pi$ are algebraically independent.
It turns out that, for power predicates, assuming Schanuel's conjecture we can in fact prove the most general decidability result possible, handling in particular the cases where $1/\log(\rho_1),\ldots,1/\log(\rho_d)$ are linearly dependent.

\begin{theorem*}[See \Cref{thm::main-schanuel-weak}]
		Suppose we are given an MSO formula $\varphi$ and integers $a_1,\rho_1,\ldots,a_d,\rho_d \ge 1$.
		Assuming Schanuel's conjecture, it is decidable whether $\varphi$ holds in $\langle \nat; <, P_1,\ldots,P_d \rangle$, where $P_i = \{a_i\rho_i^n \st n\in\nat\}$.
\end{theorem*}

We mention that throughout this paper, all conditional decision procedures that we give rely on Schanuel's conjecture only for termination and not correctness: whenever they terminate, the output assertion as to whether $\varphi$ holds is unconditionally guaranteed to correct.

Having stated the theorems above, it is natural to ask: what if we fix the structure and allow only the formula $\varphi$ to be given as the input?
That is, for which $P_1,\ldots,P_d$ is the corresponding MSO theory decidable?
We obtain the following, somewhat surprising result..

\begin{theorem*}[See \Cref{thm::main-quirky}]
	For any integers $a_1,\rho_1,\ldots,a_d,\rho_d \ge 1$, there exists an algorithm that, given an MSO formula $\varphi$, decides whether $\varphi$ holds in $\langle \nat; <, P_1,\ldots,P_d \rangle$, where $P_i = \{a_i\rho_i^n \st n\in\nat\}$.
\end{theorem*}
The caveat is that the attendant decision procedure is \emph{non-uniform} in $a_1,\rho_1,\ldots,a_d,\rho_d$, in that it ``knows'' the ideal of all polynomial relations between $\log(a_1), \log(\rho_1),\ldots,\log(a_d),\log(\rho_d)$.

When studying the ordering in which powers of integers occur (i.e., the order word corresponding to the predicates $P_1,\ldots,P_d$ above), we discover interesting connections to word combinatorics and the problem of determining the \emph{factor complexity} (i.e., the number of distinct subwords of a given length $n$) for certain classes of infinite words.
In \Cref{sec::cutting-sequences}, we discuss how it is possible to replace roughly half of the number theory we make use of with arguments from word combinatorics and automata theory, to establish weakened versions of our main theorems.

The second setting we consider is that of $\langle \nat; <, P_1, P_2 \rangle$, where $P_1 = \{qn^d \st n \in \nat\}$ and $P_2 = \{pb^n \st n \in \nat\}$ for integers $q,b,d,b$.
In this case, the underlying dynamical systems are given by maps $T_b \st [0,1) \to [0,1)$, $T_b(x) = \{b \cdot x\}$ where $b \ge 2$ is an integer and $\{y\}$ denotes the fractional part of $y\in\rel$.
Iteratively applying $T_b$ starting from $x \in [0,1)$ generates the expansion of $x$ in base $b$.
These are also fundamental dynamical systems that go as far back as the work of R\'enyi on $\beta$-expansions \cite{renyi1957representations}.
For the predicates $P_1,P_2$ above, we give a complete result that links their MSO theory to that of base-$b$ expansions of certain algebraic numbers.\footnote{We view the base-$b$ expansion of $x \in [0,1)$ as an infinite word over $\{0,\ldots,b-1\}$, or equivalently, as a function of type $\nat \to \{0,\ldots,b-1\}$.}

\begin{theorem*}[See \Cref{thm::main poly vs exp}]
	Let $b, d \ge 2$ and $p, q \ge 1$ be integers, $P_1 = \{qn^d \st n \in \nat\}$, and $P_2 = \{pb^n \st n \in \nat\}$.
	Write $\eta = \sqrt[d]{p/q}$, $\zeta = \sqrt[d]{1/b}$, and let $\gamma_0,\ldots,\gamma_{d-1} \in \{0,\ldots,b-1\}^\omega$ be the base-$b$ expansions of $\{\eta\}, \{\eta \zeta\},\dots,\{\eta\zeta^{d-1}\}$, respectively.
	Then the MSO theories of $\langle \nat; <, P_1,P_2\rangle$ and $\langle \nat; <, \gamma_0,\ldots,\gamma_{d-1} \rangle$ are Turing-equivalent.
\end{theorem*}

Various interesting MSO decidability results follow readily from the theorem above; see \Cref{sec:MSO normal numbers} for a detailed discussion.
\begin{enumerate}
	\item[(A)] Taking $p=q=1$ and $b=d=2$, we obtain that the MSO theory of $\langle \nat; <, \kthpowers{2}, \powersofk{2}\rangle$ is Turing-equivalent to that of $\langle \nat; <, \gamma\rangle$, where $\gamma$ is the binary expansion of $\sqrt{2}-1$ viewed as function of type $\nat \to \{0,1\}$.
	\item[(B)] The MSO theory of 
	$\langle \nat; <,\powersofk{b}, \kthpowers{d}\rangle$ is decidable for any  $d \ge 2$ and $b = k^d$ for some $k \ge 2$.
	(For example, the MSO theory of $\langle \nat; <, \kthpowers{4}, \powersofk{2}\rangle$ is decidable.)
	In this case, $\eta =1$, $\zeta = 1/k$, and hence $\{\eta\}, \{\eta \zeta\},\dots,\{\eta\zeta^{d-1}\} \in \rat$.
	Recall that expansions of rational numbers in any integer base are ultimately periodic, and hence their representations as functions of type $\nat \to \{0,\ldots,b-1\}$ are MSO-definable in $\langle \nat; < \rangle$, which itself has a decidable MSO theory.
\end{enumerate}

But what do we know about the expansion of an irrational algebraic number in an integer base $b$?
For such an expansion $\alpha$, it is known, for example, that $\liminf_{n \to \infty} \frac{\pi_\alpha(n)}{n} = + \infty$
where $\pi_\alpha(n)$ denotes the number of distinct finite words of length $w$ that appear in $\alpha$ (\Cref{thm::bugeaud-adamcz}).
On the other hand, many simple results that would be subsumed by decidability of the MSO theory of an expansion in base $b$ remain elusive: for example, at the time of writing no algorithm is known that decides whether a given finite word occurs in a given expansion.
Nevertheless, expansions of irrational algebraic numbers in integer bases are widely conjectured to be \emph{normal}, and \emph{a fortiori}
\emph{disjunctive}: every finite pattern of digits should occur infinitely
often. 
As the MSO theory of any disjunctive word is decidable
(\Cref{normal-decidable}), the MSO theory of $\langle \nat; <, \kthpowers{2}, \powersofk{2}\rangle$ is decidable assuming the binary expansion of $\{\sqrt{2}\} = \sqrt{2}-1$ is disjunctive.

\paragraph{Structure of the paper} We recall necessary preliminaries from logic, automata theory, number theory, and word combinatorics in \Cref{sec::prelims}.
In \Cref{sec::automata-theory}, we develop a range of automata-theoretic tools that allow us to reduce between the (decision problems of) MSO theories of various structures.
In \Cref{sec::lrs-with-one-dominant-root}, we apply our toolbox to show how to decide MSO theories of multiple linear recurrence sequences with a single, non-repeated dominant root.
\Cref{sec:MSO normal numbers} is dedicated to studying the MSO theories of structures $\langle \nat; <, \{qn^d \st n \in \nat\}, \{qn^d \st n \in \nat\}\rangle$ through the lens of base-$b$ expansions and normal numbers.
Finally, \Cref{sec::discussion} contains a brief discussion of our results and open problems concerning decidability of MSO theories.

\section{Preliminaries}
\label{sec::prelims}

We denote by $\zerovec$ the tuple $(0,\ldots,0)$ whose dimensions will be clear from the context.
We write $\torus$ for $[0,1)$, viewed as an additive group where addition operates modulo 1.
For $x \in \rel$, we write $\{x\}$ for the fractional part $x - \lfloor x \rfloor$ of $x$.

\subsection{Words and automata}
\label{constructs}

By an alphabet $\Sigma$ we mean a finite non-empty set of letters. The sets of finite, finite non-empty, and infinite words over~$\Sigma$ are denoted $\Sigma^*, \Sigma^+$, and $\Sigma^\omega$, respectively.
For a finite or infinite word $\alpha$ and $n \in \nat$, we write $\alpha(n)$ for the $n$th letter of~$\alpha$.
Thus $\alpha = \alpha(0)\alpha(1)\cdots$.
We define $\alpha[n, m) \coloneqq \alpha(n)\cdots\alpha(m-1)$, and assuming $\alpha$ is infinite, $\alpha[n,\infty) \coloneqq \alpha(n)\alpha(n+1)\cdots$.
We denote the length of a finite word $w$ by $|w|$.
A finite word $w \in \Sigma^*$ \emph{occurs} at a position $n$ in $\alpha$ if $\alpha[n, n+|w|) = w$.
Such $w$ is called a \emph{factor} of $\alpha$.
We denote by $\pi_\alpha(n)$ the number of distinct factors of length $n$ of $\alpha$.
The function $\pi_\alpha$ is called the \emph{factor complexity} of $\alpha$.

A \emph{deterministic finite Muller automaton} (simply called an \emph{automaton} throughout the paper) $\Acal$ over an alphabet $\Sigma$ is given by a tuple $(Q, \qinit, \delta, \Fcal)$,
where $Q$ is the (finite) set of states, $\qinit \in Q$ is the initial state, $ \delta\colon Q \times \Sigma \rightarrow Q$ is the transition function, and $\Fcal \subseteq 2^Q$ is the acceptance condition.
For $q \in Q$ and $u \in \Sigma^*$, we denote by $\delta(q, u)$ the state obtained when the automaton reads $u$ starting at the state $q$.
We denote by $\Acal(\alpha)$ the sequence of states visited when $\Acal$ reads $\alpha$. A word $\alpha \in\Sigma^\omega$ \emph{is accepted} by $\Acal$ if the set~$S$ of states appearing infinitely often in $\Acal(\alpha)$ is present in~$\Fcal$.
We write $\alpha \in L(\Acal)$ to mean that $\Acal$ accepts $\alpha$.

A \emph{deterministic finite transducer} (simply called a \emph{transducer} throughout the paper)  $\Bcal$ over an input alphabet $\Sigma$ and an output alphabet $\Gamma$ is given by $(R, \rinit, \sigma)$, where $R$ is the (finite) set of states, $\rinit \in R$ is the initial state, and $\sigma\st R \times \Sigma \rightarrow R \times \Gamma^*$ is the transition function.
At every step, $\Bcal$ reads a letter from the input alphabet $\Sigma$, transitions to the next state, and outputs a finite word over the output alphabet~$\Gamma$.
We denote by $\Bcal(\alpha)$ the (possibly finite) word over~$\Gamma$ output by $\Bcal$ upon reading $\alpha \in \Sigma^\omega$.

Let $\Acal$ be an automaton as above.
By a \emph{journey} on $\Acal$ we mean an element of $J \coloneqq Q \times Q \times 2^{Q}$.
A path $q_0q_1q_2\cdots q_n \in Q^{n+1}$ \emph{makes} the journey $(q_0, q_n, V)$ where $V$ is the set of states occurring in the proper suffix $q_1q_2\cdots q_n$.
If $n\ge 1$, then $q_n \in V$ necessarily, but $q_0$ may not belong to $V$.
The unique journey a word $w \in \Sigma^*$ makes starting in $q_0 \in Q$, denoted by $\journey(w, q_0)$, is the journey made by the path $q_0\cdots q_{|w|}$, where $q_{i+1} = \delta(q_i, w(i))$ for $1 \le i < |w|$.
The empty word makes journeys of the form $(q, q, \emptyset)$.
If $v$ makes the journey $(q_1, q_3, V_1)$ and $w$ makes the journey $(q_3, q_2, V_2)$, then $vw$ makes the journey $(q_1, q_2, V_1 \cup V_2)$.

Next, we define the  equivalence relation $\sim_{\automaton}$ as follows.
Two words $v, w \in \Sigma^*$ are equivalent, denoted $v \sim_{\automaton} w$ if the sets of journeys they can undertake (starting from various states) are identical.
The equivalence is moreover a congruence: if $v \sim_{\automaton} w$ and $x \sim_{\automaton} y$, then $vx \sim_{\automaton} wy$.
Observe that ${\sim_{\Acal}}$ is not the classical congruence (see, e.g., \cite[Sec.~2]{cartonthomas}) associated with the automaton $\Acal$.
Our choice, however, will be more convenient for technical reasons.

Since there are only finitely many equivalence classes of ${\sim_{\Acal}}$, the quotient of $\Sigma^*$ by ${\sim_{\automaton}}$ is a finite monoid $M$, called the \emph{journey monoid}.
We use $h$ to denote the natural morphism from $\Sigma^*$ into $M$.
The morphism $h$ maps each letter to its equivalence class modulo~${\sim_{\automaton}}$.
We also extend the function $\journey$ to take inputs from $M \times Q$:
for an equivalence class $m = [w]$ and state $q$, we define $\journey(m, q) = \journey(w, q)$.
Finally, we will need the following lemma, whose proof is immediate.
\begin{lemma}
	\label{journeys}
	Let $\Acal$ be an automaton as above and $\alpha = u_0u_1\cdots \in \Sigma^\omega$, where $u_n \in \Sigma^*$ for all $n$.
	Then run of $\Acal$ on $\alpha$ can be decomposed as the concatenation of journeys
	\[
	(q_0, q_1, V_0)(q_1, q_2, V_1)(q_2, q_3, V_2)\cdots
	\]
	where $q_0 = \qinit$ and $\journey(u_n, q_n) = (q_n, q_{n+1}, V_n)$ for all $n$.
	Moreover, every $q \in Q$ appears infinitely often in $\Acal(\alpha)$ if and only if $q \in V_n$ for infinitely many $n \in \naturals$.
\end{lemma}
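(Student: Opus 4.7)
The plan is purely bookkeeping, which is why the lemma is flagged as immediate; there is no substantive obstacle to overcome. The idea is to align the combinatorial ``journey'' abstraction with the actual state sequence of the run and then read off both the decomposition and the recurrence characterisation from how the $V_n$'s are formed out of proper suffixes.

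Concretely, I would first fix notation for the run. Let $\rho \in Q^\omega$ denote $\Acal(\alpha)$, so $\rho(0) = \qinit$ and $\rho(k+1) = \delta(\rho(k), \alpha(k))$ for all $k$. Setting $k_0 = 0$ and $k_{n+1} = k_n + |u_n|$, define $q_n \coloneqq \rho(k_n)$, which gives $q_0 = \qinit$ by construction. The finite path $\rho(k_n)\rho(k_n+1)\cdots\rho(k_{n+1})$ is precisely what $\Acal$ traces when reading $u_n$ starting from $q_n$, so by the definition of $\journey$ it makes the journey $(q_n, q_{n+1}, V_n)$ with $V_n = \{\rho(m) : k_n < m \le k_{n+1}\}$. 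Concatenating these segment-paths over $n = 0, 1, 2, \ldots$ recovers $\rho$ in full, establishing the claimed decomposition and the identity $\journey(u_n, q_n) = (q_n, q_{n+1}, V_n)$.

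For the ``infinitely often'' clause, the decisive observation is that $\alpha$ is infinite, hence $k_n \to \infty$, and every position $m \ge 1$ of $\rho$ lies in a unique interval $(k_n, k_{n+1}]$. If $q$ occurs in $\rho$ infinitely often, then infinitely many of its witnessing positions $m \ge 1$ lie in distinct such intervals, placing $q$ into infinitely many $V_n$. Conversely, each $n$ with $q \in V_n$ certifies an occurrence of $q$ at some position strictly greater than $k_n$, so infinitely many such $n$ force infinitely many distinct occurrences of $q$ in $\rho$. The only delicate case is a state visited exclusively at position $0$ (possible for $\qinit$ if the automaton never re-enters it): such a state is never recorded in any $V_n$, which is consistent with the statement because it then appears only once, hence finitely often.
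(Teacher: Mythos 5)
Your proof is correct and is exactly the routine bookkeeping the paper has in mind when it declares the lemma "immediate" and omits the argument. The cumulative offsets $k_n$, the identification $V_n = \{\rho(m) : k_n < m \le k_{n+1}\}$ via the proper-suffix convention, and the partition of positions $m \ge 1$ into the (possibly empty) intervals $(k_n, k_{n+1}]$ — using that $k_n \to \infty$ because $\alpha$ is infinite even if some $u_n$ are empty — give both the decomposition and the recurrence equivalence; the remark about a state visited only at position $0$ correctly accounts for the one asymmetry the proper-suffix definition introduces.
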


\subsection{Monadic second-order logic}\label{sec::mso}
Monadic second-order logic (MSO) is an extension of first-order logic that allows quantification over subsets of the universe.
Such subsets can be viewed as unary (that is, monadic) predicates.
We will only be interpreting MSO formulas over expansions of the structure $\langle \nat; < \rangle$.
For a general perspective on MSO, see \cite{mso-notes}.

Let $\mathbb{S} \coloneqq \langle \nat; <, P_1,\ldots,P_d\rangle$ be a structure where each $P_i \subseteq \nat$ is a unary predicate.
We associate a language $\Lcal_{\mathbb{S}}$ of terms and formulas with $\mathbb{S}$ as follows.
The terms of $\Lcal_{\mathbb{S}}$ are the countably many constant symbols $\{0,1,2,\ldots\}$, lowercase variables that stand for elements of $\nat$, uppercase variables that denote subsets of $\nat$, and the predicate symbols $P_1,\ldots,P_d$ which stand for the predicates $P_1,\ldots,P_d$, respectively.
The formulas of $\Lcal_{\mathbb{S}}$ are the well-formed statements constructed from  the terms, the built-in equality ($=$) and membership ($\in$) symbols, logical connectives, quantification over elements of $\nat$ (written $Qx$ for a quantifier $Q$), and quantification over subsets (written $QX$ for a quantifier $Q$).
The MSO theory of the structure $\mathbb{S}$ is the set of all sentences belonging to $\Lcal_{\mathbb{S}}$ that are true in $\mathbb{S}$.
We write $\mathbb{S} \models \varphi$ to mean that the formula $\varphi$ holds in the structure $\mathbb{S}$.
The MSO theory of~$\mathbb{S}$ is \emph{decidable} if there exists an algorithm that, given a sentence $\varphi \in \Lcal_{\mathbb{S}}$, decides whether $\mathbb{S} \models \varphi$.

As an example, consider $\mathbb{S} = \langle\nat;<, P\rangle$ where $P$ is the set of all primes.
Let $s(\cdot)$ be the successor function defined by $s(x)=y$ if and only if
\[
x < y \:\:\:\land\:\:\: \forall z\st (x< z \Rightarrow y \le z).
\]
That is, $s(x) = x+1$.
Further let
\begin{align*}
	\varphi(X) &\coloneqq 1 \in X \:\land\: 0,2 \notin X \:\land\: \forall x\st (x \in X \Leftrightarrow s(s(s(x))) \in X)\\
	\psi &\coloneqq \exists X\st \big(\varphi(X) \:\land\: \forall y \, \exists z \st (z > y \:\land\: z\in X \:\land\: P(z))\big).
\end{align*}
The formula $\varphi$ defines the subset ${\{n \st n \equiv 1 \, (\bmod \, 3)\}}$ of $\nat$, and $\psi$ is the sentence ``there are infinitely many primes congruent to 1 modulo 3'', which is true.
At the time of writing, it is not known whether the MSO theory of the structure~$\mathbb{S}$ above is decidable.

The \emph{automaton acceptance problem} for an infinite word $\alpha$, denoted $\Acc{\alpha}$, is to determine, given an automaton $\Acal$, whether $\alpha \in L(\Acal)$.
Let $P_1,\ldots,P_d \subseteq \nat$ be predicates and $\Sigma = \{0,1\}^d$.
\begin{definition}
	\label{def:characteristicword}
	The \emph{characteristic word} of $P_1,\ldots,P_d$, written $\alpha \coloneqq \characteristic{P_1,\ldots,P_d}\in \Sigma^\omega$, is defined by $\alpha(n) = (b_{n,1},\ldots,b_{n,d})$ where $b_{n,i} = 1$ if $n \in P_i$ and $b_{n,i} = 0$ otherwise.
\end{definition}
The following is a reformulation of the seminal result of B\"uchi  through which he showed decidability of the MSO theory of $\langle \nat; < \rangle$.
It will also be used to prove all of our MSO decidability results.
We state it for deterministic Muller automata, which are equivalent to \emph{non-deterministic B\"uchi automata} \cite{thomas1997languages} used in the original proof of B\"uchi.

\begin{theorem}[\unexpanded{\cite[Thms. 5.4 and 5.9]{thomas1997languages}}]
	\label{thm:MSOautomaton}
	Given an MSO formula $\varphi$ over predicates\footnote{Here we identify the predicate symbols $P_1,\ldots,P_d$ with the predicates $P_1,\ldots,P_d$.} $P_1,\ldots,P_d$, we can construct an automaton $\Acal$ over $\Sigma \coloneqq \{0,1\}^d$ such that for any structure $\mathbb{S} \coloneqq \langle \nat; <, P_1,\ldots,P_d\rangle$ with the characteristic word $\alpha$,
	\[
	\mathbb{S} \models \varphi \Leftrightarrow \alpha \in L(\Acal).
	\]
	In particular, the decision problem of the MSO theory of $\mathbb{S}$ is Turing-equivalent to the decision problem $\Acc{\alpha}$.
\end{theorem}

\subsection{Algebraic numbers}
\label{sec::alg-numbers}

A complex number $\lambda$ is algebraic if there exists a non-zero polynomial $p \in \rat[x]$ such that $p(\lambda) = 0$.
The set of algebraic numbers is denoted by $\alg$.
The unique irreducible monic polynomial $p \in \rat[x]$ that has $\lambda$ as a root is called the \emph{minimal polynomial} of $\lambda$.
A \emph{representation} of an algebraic number $\lambda$ consists of its minimal polynomial $p$ and sufficiently accurate rational approximations of the real and imaginary parts of $\lambda$ to distinguish it from the other roots of $p$.
All arithmetic operations can be performed effectively on representations of algebraic numbers; see, for example, \cite[Sec.~4.2]{cohen2013course} and \cite[Sec.~1.5.4]{karimov2023algorithmic}.


Let $z_1,\ldots,z_d \in \com$ be non-zero.
The free abelian group 
\[
G_M(z_1,\ldots,z_d) \coloneqq \{ (k_1,\ldots,k_d) \st z_1^{k_1}\cdots z_d^{k_d} = 1\}
\]
is called the \emph{group of multiplicative relations} of $z_1,\ldots,z_d$.
We say that $z_1,\ldots,z_d$ are \emph{multiplicatively independent} if $G \coloneqq G_M(z_1,\ldots,z_d)$ is the trivial group.
A \emph{basis} of $G$ is a set $B = \{\mathbf{v}_1,\ldots,\mathbf{v}_m\} \subseteq G$ that is linearly independent over $\intg$ with the property that every $\mathbf{z}\in G$ can be uniquely written as an integer linear combination of $\mathbf{v}_1,\ldots,\mathbf{v}_m$.
The \emph{rank} of $G$ is the size of any of its \emph{bases}.
For non-zero $z_1,\ldots,z_d \in \alg$, we can compute a basis of $G$ using a deep result of Masser~\cite{masser1988linear}, or the more recent polynomial-time algorithm of Combot \cite{combot2025computing}.


\subsection{Linear recurrence sequences}
\label{sec::lrs}

A sequence $\seq{u_n}$ over a ring $R$ is a \emph{linear recurrence sequence} (LRS) over $R$ if there exist $d \ge 0$ and $c_1,\dots,c_d \in R$ such that 
\begin{equation}\label{eq::lrs-2}
	u_{n+d} = c_1u_{n+d-1}+ \cdots +c_du_n
\end{equation}
for all $n \in \nat$.
The smallest such $d$ is called the \emph{order} of $\seq{u_n}$. 
We will mostly work with LRS over $\intg$, which we also call \emph{integer LRS}.
For example, the Fibonacci sequence satisfies $u_{n+2} = u_{n+1} + u_n$ for all $n \in \nat$, and is an integer LRS of order two.
We refer the reader to the book \cite{recseq} for a detailed account of LRS\@.

Let $R \subseteq \alg$ and $\seq{u_n}$ be an LRS over $R$ of order $d$.
Then there exist unique $c_1,\ldots,c_d \in R$ such that $\seq{u_n}$ satisfies the recurrence relation \Cref{eq::lrs-2}.
The \emph{minimal polynomial} of $\seq{u_n}$ is $p(x) = x^d - \sum_{i=1}^d c_i x^{d-i}$.
Suppose $p$ has the (distinct) roots $\lambda_1,\dots,\lambda_m \in \alg$, called the \emph{characteristic roots} of $\seq{u_n}$.
Then there exist unique non-zero polynomials $q_1,\ldots,q_m \in \alg[x]$ such that
\begin{equation}\label{eq::lrs-1}
	u_n = q_1(n)\lambda_1^n + \cdots+ q_m(n)\lambda_m^n
\end{equation}
for all $n \in \nat$.
Equation~\eqref{eq::lrs-1} is known as the \emph{exponential-polynomial form} of $\tuple{u_n}_{n=0}^\infty$.
A characteristic root $\lambda_i$ is called \emph{non-repeated} (alternatively, simple) if $q_i$ is constant.
The sequence $\seq{u_n}$ is called \emph{diagonalisable} (alternatively, \emph{simple}) if every $\lambda_i$ is non-repeated.
A characteristic root $\lambda_i$ is called \emph{dominant} if $|\lambda_i| \ge |\lambda_j|$ for all $1 \le j \le m$.
The sequence $\seq{v_n}$ given by $v_n = \sum_{i \in I} q_i(n)\lambda_i^n$, where $I = \{i \st \lambda_i \textrm{ is dominant}\}$, is called the \emph{dominant part} of $\seq{u_n}$.
Similarly, $\seq{u_n - v_n}$ is called the \emph{non-dominant part} of $\seq{u_n}$.
Both the dominant and the non-dominant parts are LRS over $\alg$.

Decision problems of linear recurrence sequences, despite being of central interest in algebraic number theory, largely remain open.
The most famous example is the \emph{Skolem Problem}, which asks to decide whether a given integer LRS $\seq{u_n}$ contains zero.
It is known to be decidable for sequences of order $d \le 4$, and is open at order 5 and above~\cite{mignotte-shorey-tijdeman-skolem}.
Similarly, the \emph{Positivity Problem} asks to decide whether a given integer LRS $\seq{u_n}$ satisfies $u_n \ge 0$ for all $n$.
Decidability (or, for that matter, undecidability) of the Positivity Problem would imply substantial new results in Diophantine approximation that are currently believed to be out of reach~\cite{ouaknine2014positivity}.
For this reason, any decision problem to which the Positivity Problem can be reduced is referred to as \emph{Positivity-hard}.
Finally, the \emph{Ultimate Positivity Problem} asks to decide whether a given integer LRS $\seq{u_n}$ satisfies $u_n \ge 0$ for all sufficiently large $n$.
This problem is known to be decidable for LRS of order at most 5 as well as LRS (of any order) whose characteristic roots are all non-repeated~\cite{ouaknine2014positivity, ouaknine2014ultimate}.
At order 6, decidability is again linked to certain open problems in Diophantine approximation \cite{ouaknine2014positivity}.

We conclude this section two straightforward lemmas about linear recurrence sequences.
First, the exponential-polynomial form \eqref{eq::lrs-1} immediately implies an exponential upper bound on $|u_n|$, formalised below.
\begin{lemma}\label{lem:growth LRS}
	Let $\tuple{u_n}_{n=0}^\infty$ be an LRS over $\alg$ and $r, L > 0$ be algebraic.
	Suppose $L > |\lambda_i|$ for any characteristic root $\lambda_i$ of $\seq{u_n}$.
	Then we can compute $N \in \naturals$ such that $|u_n| \le rL^n$ for all $n \ge N$.
\end{lemma}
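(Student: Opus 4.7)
The plan is to start from the exponential-polynomial form \eqref{eq::lrs-1} and bound each summand individually, then argue that all ingredients of the bound are effectively computable over $\alg$. Writing $u_n = \sum_{i=1}^m q_i(n) \lambda_i^n$, the triangle inequality gives
\[
|u_n| \;\le\; \sum_{i=1}^m |q_i(n)|\,|\lambda_i|^n.
\]
Since $|\lambda_i| < R$ for every characteristic root, each term decays like $|q_i(n)|(|\lambda_i|/R)^n \cdot R^n$, and polynomial growth is eventually dominated by any exponential with base in $(0,1)$. The task is thus to turn this asymptotic statement into an explicit, computable threshold $N$.

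The first step is to extract the data of the exponential-polynomial form. From the minimal recurrence, factor the characteristic polynomial to obtain canonical representations of $\lambda_1, \ldots, \lambda_m$, and then solve the linear system determined by the initial values $u_0, \ldots, u_{d-1}$ to obtain canonical representations of the coefficient polynomials $q_1, \ldots, q_m \in \alg[x]$. All of this is effective in $\alg$ by the standard arithmetic on canonical representations cited in the preliminaries. Next, compute a rational $\rho \in (0,1)$ with $|\lambda_i| \le \rho R$ for every $i$ (possible because $|\lambda_i|/R < 1$ is an effective strict inequality between algebraic reals), as well as the degree $D := \max_i \deg q_i$ and a rational $C > 0$ with $|q_i(n)| \le C(n+1)^D$ for every $n \in \nat$ and every $i$ (for instance, take $C$ to be any rational upper bound on the sum of the moduli of all coefficients of all $q_i$, which is a finite set of algebraic numbers).

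Combining these estimates gives $|u_n| \le mC(n+1)^D \rho^n R^n$, so it suffices to choose $N$ such that
\[
mC(n+1)^D \rho^n \le r \qquad \text{for all } n \ge N.
\]
Taking logarithms, this reduces to the one-variable inequality $\log(mC/r) + D\log(n+1) + n\log\rho \le 0$. Since $\log \rho < 0$, the left-hand side is eventually negative and a sufficiently large $N$ can be computed by picking any rational upper bound on $\log(mC/r)$, any rational upper bound on $D$, and any negative rational upper bound on $\log \rho$, then solving the resulting elementary inequality; this uses only rational approximations to logarithms of positive rationals, which are plainly effective.

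No single step is a genuine obstacle: the only mildly delicate point is ensuring that the comparisons $|\lambda_i| < R$ and the bound on the coefficients of $q_i$ really produce rational witnesses $\rho$ and $C$, but both follow from the fact that arithmetic and comparisons on canonical representations of algebraic numbers are effective. The lemma therefore follows from a routine effective version of the asymptotic domination of polynomial growth by a sub-$R$ exponential.
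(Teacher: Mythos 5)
Your proof is correct and follows what is essentially the only reasonable route; the paper itself offers no explicit proof, introducing the lemma with "the exponential-polynomial form \eqref{eq::lrs-1} immediately implies an exponential upper bound," and your careful extraction of the effective witnesses $\rho$, $C$, $D$ and the subsequent one-variable inequality is precisely the argument that word "immediately" is eliding.
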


From \Cref{eq::lrs-2} it readily follows that an integer LRS is ultimately periodic modulo any $m \in \nat_{\ge 1}$.
This is known as being \emph{procyclic}.

\begin{lemma}\label{lem:periodic LRS}
	Let $\tuple{u_n}_{n=0}^\infty$ be an integer LRS, $m$ be a positive integer, and define the sequence $\tuple{v_n}_{n=0}^\infty$ as $v_n = u_n \bmod m$.
	We can compute $N \ge 0$ and $p > 0$ such that $v_{n+p} = v_n$ for all $n \ge N$.
\end{lemma}

\subsection{Schanuel's conjecture}
A set $X = \{\alpha_1,\dots,\alpha_d\}$ of complex numbers is said to be \emph{algebraically independent} over $\rat$ if $p(\alpha_1,\dots,\alpha_d) \ne 0$ for any non-zero polynomial $p \in \rat[x_1,\ldots,x_d]$.
The \emph{transcendence degree} of $X$ is the size of a largest subset of $X$ that is algebraically independent over~$\rat$.
Below we state Schanuel's conjecture, a classical conjecture in transcendental number theory with far-reaching implications \cite{Lang66}.
\begin{conjecture}[Schanuel's conjecture]\label{conj:schanuel}
	If $\alpha_1,\dots,\alpha_d \in \com$ are linearly independent over $\rat$, then the transcendence degree of $\{\alpha_1,\dots,\alpha_d, e^{\alpha_1},\dots, e^{\alpha_d}\}$ is at least $d$.
\end{conjecture}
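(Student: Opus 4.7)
The ``statement'' in question is Schanuel's conjecture itself, which is one of the central open problems of transcendental number theory, so what follows is not really a proof sketch but an honest inventory of what I would try and where every approach currently breaks down.

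The natural plan is induction on $d$. The base case $d=1$ is fully settled: if $\alpha_1$ is algebraic and nonzero then $\exp(\alpha_1)$ is transcendental by the Hermite--Lindemann theorem, while if $\alpha_1$ itself is transcendental there is nothing to prove; in either case the transcendence degree of $\{\alpha_1, \exp(\alpha_1)\}$ is at least $1$. The next thing I would do is record the known ``pure'' cases that any proof must specialise to: when $\alpha_1,\ldots,\alpha_d$ are all algebraic and $\rat$-linearly independent, the conclusion is exactly the Lindemann--Weierstrass theorem, giving algebraic independence of $\exp(\alpha_1),\ldots,\exp(\alpha_d)$ over $\alg$; when instead the $\alpha_i$ are logarithms of nonzero algebraic numbers, Baker's theorem provides the corresponding $\rat$-linear independence of $1,\alpha_1,\ldots,\alpha_d$, but not the stronger algebraic independence that Schanuel demands. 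These two theorems should be thought of as orthogonal shadows of the conjecture, and a putative proof would have to interpolate smoothly between them.

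For the inductive step, the strategy that has historically been pursued is the construction of an \emph{auxiliary function}. One fixes a candidate algebraic relation
$$P\bigl(\alpha_1,\ldots,\alpha_d,\exp(\alpha_1),\ldots,\exp(\alpha_d)\bigr)=0$$
of low degree and, assuming the transcendence degree is at most $d-1$, builds an entire function $F(z)$ of controlled order, vanishing to high order at lattice points generated by the $\alpha_i$, and derives a contradiction from Schwarz's lemma combined with an arithmetic lower bound (a Liouville-type estimate) on $|F(z_0)|$ at some well-chosen point $z_0$. This is the Gelfond--Schneider--Baker paradigm, refined through the zero-estimates of Masser--W\"ustholz and the formalism of commutative algebraic groups. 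The concrete task inside the induction would be to choose $F$ so that the hypothesis of $\rat$-linear independence of $\alpha_1,\ldots,\alpha_d$ translates into a nonvanishing statement at the required evaluation point.

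The hard part, and the reason no one has a proof, is precisely this last translation: when the $\alpha_i$ are arbitrary complex numbers with no a priori arithmetic structure, we have no mechanism for producing the arithmetic lower bound that fuels the Gelfond--Baker machinery. All extant transcendence proofs exploit either algebraicity of the $\alpha_i$ (Lindemann--Weierstrass) or algebraicity of $\exp(\alpha_i)$ (Baker), and use this algebraicity to obtain heights, denominators, and ultimately the Liouville inequality that contradicts the analytic upper bound. In Schanuel's setting one is allowed \emph{neither} hypothesis, and the conjecture is essentially the assertion that the analytic and arithmetic sides of exponentiation always interact as favourably as in the two classical cases. Consequently my proposal reduces to: carry out the inductive/auxiliary-function strategy outlined above, and at the step where a height estimate is required, invent a genuinely new arithmetic input that does not presuppose algebraicity of either $\alpha_i$ or $\exp(\alpha_i)$. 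I do not see how to supply that input, and it is fair to say that no one presently does; this is why the decidability results in the introduction that invoke Conjecture~\ref{conj:schanuel} are stated as conditional.
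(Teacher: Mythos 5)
You have correctly identified that this statement is Schanuel's conjecture itself, a famous open problem; the paper does not prove it either, but merely states it (citing \cite{Lang66}) and uses it as a hypothesis for the conditional decidability results, so there is no proof in the paper to compare against. Your survey of the known special cases (Hermite--Lindemann for $d=1$, Lindemann--Weierstrass when the $\alpha_i$ are algebraic, Baker's theorem when they are logarithms of algebraic numbers) and of why the auxiliary-function machinery breaks down without an algebraicity hypothesis is accurate, and your conclusion — that the results invoking Conj.~\ref{conj:schanuel} must remain conditional — is exactly the stance the paper takes.
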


In particular, Schanuel's conjecture implies that $\log(\lambda_1),\ldots,\log(\lambda_d)$ are algebraically independent over $\rat$ for multiplicatively independent $\lambda_1,\ldots,\lambda_d \in \alg$.
We will use Schanuel's conjecture in the following way.
Consider the structure $\rexp \coloneqq \tuple{\rel; <, +, -, \cdot, \exp(\cdot), 0, 1}$ of real numbers equipped with arithmetic and (real) exponentiation.
By the first-order theory of $\rexp$ we mean the set of all well-formed first-order sentences that are true in~$\rexp$.
In \cite{MacintyreW96}, Macintyre and Wilkie show that the first-order theory of the structure $\rexp$ is decidable assuming Schanuel's conjecture.

\begin{theorem}
	Assuming Schanuel's conjecture, given a sentence $\varphi$, we can decide whether $\rexp \models \varphi$.
	Moreover, Schanuel's~conjecture is only required for termination and not correctness.
\end{theorem}

The second statement in the theorem above means that the algorithm of Macintyre and Wilkie always terminates assuming Schanuel's conjecture, and whenever it does, it is unconditionally guaranteed to correctly decide whether $\rexp \models \varphi$.

\subsection{Baker's theorem}
By an \emph{$R$-affine form}, where $R$ is a ring, we mean $h(x_1,\ldots,x_d) = a_0 + \sum_{i=1}^da_ix_i$, where $a_i \in R$ for all $0 \le i \le d$.
The following is a (rather weak) version of Baker's celebrated theorem on $\intg$-affine forms of logarithms \cite{baker-rational-sharp-version-1993}.

\begin{theorem}
	\label{thm::baker}
	Let $\rho_1, \ldots, \rho_d \in \rel_{>0}\cap \alg$.
	There exists a computable constant $C$ such that for all $b_1,\ldots,b_d \in \intg$ with  $B \coloneqq \max \{|b_1|+2, \ldots, |b_d|+2\}$ and $\Lambda \coloneqq b_1 \log(\rho_1) + \cdots + b_d \log(\rho_d)$,
	\[
	\Lambda \ne 0 \Rightarrow |\Lambda| > B^{-C}.
	\]
\end{theorem}
From Baker's theorem we derive the following.
\begin{theorem}
	\label{thm::baker-sum-of-two-powers}
	Let $\rho_1, \rho_2 \in \rel_{>1} \cap \alg$ and $c_1,c_2 \in \ralg$.
	There exists a computable constant $C$ with the following property.
	Write $E(n_1,n_2) \coloneqq c_1\rho_1^{n_1} + c_2\rho_2^{n_2}$ for $n_1, n_2 \in \nat$.
	Then
	\[
	E(n_1,n_2) \ne 0 \Rightarrow |E(n_1,n_2)| > \frac{\rho_1^{n_1}}{({n_1}+2)^{C}} \, , \frac{\rho_2^{n_2}}{(n_2+2)^{C}}
	\]
	for all $n_1, n_2 \in \nat$.
\end{theorem}
\begin{proof}
	We can assume that $c_1c_2 < 0$; otherwise the result is trivial.
	Suppose $E(n_1,n_2) \ne 0$.
	Then, for $(i,j) = (1,2)$ as well as $(i,j) = (2,1)$,
	\begin{align*}
		|E(n_1,n_2)| = |c_j| \cdot \rho_j^{n_j} \cdot
		\bigg \vert
		\frac{c_i}{c_j}\rho_i^{n_i} \rho_j^{-n_j} - 1
		\bigg \vert
		> 
		|c_j| \cdot \rho_j^{n_j} \cdot  
		\bigg| \log \bigg(\frac{c_i}{c_j}\bigg) + n_i \log(\rho_i) - n_j \log(\rho_j)
		\bigg|
	\end{align*}
	where we have used the fact that $e^x - 1 > x$ for all $x > 0$.
	Therefore,
	\[
	|E(n_1,n_2)| > |c_j| \cdot \rho_j^{n_j} \cdot |\Lambda(n_1,n_2)|\,,
	\]
	where $\Lambda(n_1, n_2) = |\log(c_1/c_2) + n_1\log(\rho_1) - n_2\log(\rho_2)| = |\log(c_2/c_1) + n_2\log(\rho_2) - n_1\log(\rho_1)|$.
	Therefore, we need to construct $C$ such that
	\[
	|\Lambda(n_1,n_2)| > \frac{1}{|c_1| (n_1+2)^C} \, , \frac{1}{|c_2| (n_2+2)^C}.
	\]	
	Compute $\kappa > 1$ such that for all $n_1,n_2 \in \nat_{\ge1}$,
	\begin{itemize}
		\item $n_1 > \kappa n_2 \Rightarrow \log(c_1/c_2) + n_2 \log(\rho_2) < \frac{1}{2} n_1 \log(\rho_1)$, which, in turn, implies that $|\Lambda(n_1,n_2)| > \frac{1}{2} n_1 \log(\rho_1)$ as well as $|\Lambda(n_1,n_2)| > \frac{1}{2} n_2 \log(\rho_2)$, and
		\item $n_2 > \kappa n_1 \Rightarrow \log(c_1/c_2) + n_1 \log(\rho_1) < \frac{1}{2} n_2 \log(\rho_2)$, which implies that $|\Lambda(n_1,n_2)| > \frac{1}{2} n_2 \log(\rho_2), \frac{1}{2} n_1 \log(\rho_1)$.
	\end{itemize}
	It remains to handle the values $n_1,n_2\in\nat$ such that $\frac{n_2}{\kappa} \le n_1 \le \kappa n_2$ or $n_1n_2 = 0$.
	In the former case, by Baker's theorem, there exists a computable constant $D$ such that
	\[
	|\Lambda(n_1,n_2)| > (\max \{2+n_1, 2+n_2\})^{-D} \ge (2 + \kappa n_i)^{-D}
	\]
	for $i = 1,2$.
    For the latter case, compute $M_1 := \min_{n \in \nat}\{|\Lambda(0, n)| \st \Lambda(0, n) \ne 0\}$ and $M_2 := \min_{n \in \nat}\{|\Lambda(n, 0)| \st \Lambda(n, 0) \ne 0\}$.
	It remains to choose $C$ sufficiently large such that 
	\[
	\frac{1}{(2+\kappa n)^D}, \frac{1}{2} n \log(\rho_i), M_1, M_2 > \frac{1}{|c_i| (n+2)^C}
	\]
	for all $n \in \nat$ and $i = 1,2$.
\end{proof}

Baker also proved the following theorem, which is a very special case of Schanuel's conjecture.
We use it to prove the two lemmas given below.

\begin{theorem}[Thm.~1.6 in \unexpanded{\cite{waldschmidt2000}}]
	\label{thm:Baker}
	Suppose $\rho_1,\dots,\rho_d \in \rel_{>0} \cap \alg$ are multiplicatively independent, i.e., $\log(\rho_1),\ldots,\log(\rho_d)$ are linearly independent over $\rat$.
	Then $1,\log(\lambda_1),\dots,\log(\lambda_d)$ are linearly independent over $\overline{\mathbb{Q}}$.
\end{theorem}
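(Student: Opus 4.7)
The plan is to prove this by contradiction via the classical Gelfond--Baker transcendence method. Suppose there exist algebraic numbers $\beta_0, \beta_1, \ldots, \beta_d$, not all zero, such that $\beta_0 + \sum_{i=1}^{d} \beta_i \log \lambda_i = 0$. If $\beta_0 = 0$, then a descent argument (passing to Galois conjugates and clearing denominators) converts a nontrivial $\overline{\mathbb{Q}}$-linear dependence among $\log \lambda_1, \ldots, \log \lambda_d$ into a nontrivial $\mathbb{Z}$-linear dependence, contradicting the multiplicative independence of $\lambda_1, \ldots, \lambda_d$. So I may assume $\beta_0 \neq 0$ and, after rescaling, I obtain the analytic identity $\lambda_d = e^{\beta_0} \lambda_1^{\beta_1} \cdots \lambda_{d-1}^{\beta_{d-1}}$.

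Next I would construct an auxiliary entire function. Fix large integer parameters $L$ and $S$ to be tuned later. Since the number $(L+1)^{d}$ of unknowns comfortably exceeds the number of linear constraints to be imposed, Siegel's lemma yields a nonzero polynomial $P \in \mathbb{Z}[X_0, X_1, \ldots, X_{d-1}]$ of multi-degree at most $L$ in each variable and of controlled logarithmic height such that
\[
\Phi(z) \;=\; P\bigl(z,\, \lambda_1^z,\, \ldots,\, \lambda_{d-1}^z\bigr)
\]
vanishes at every integer $z \in \{1, \ldots, S\}$. The hypothesised identity is used implicitly to rewrite every instance of $\lambda_d^z$ inside the vanishing constraints in terms of the remaining $\lambda_i^z$, which is what ties the construction to the supposed linear relation.

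The third step is extrapolation. Using Schwarz's lemma on a disc of radius $R \gg S$ together with the moderate growth of each $\lambda_i^z$, one parlays the vanishing of $\Phi$ on $\{1, \ldots, S\}$ into very small values of $\Phi$ on a substantially larger integer set $\{1, \ldots, S'\}$. A Liouville-type lower bound on nonzero algebraic numbers then forces $\Phi$ to vanish outright at every point of this enlarged set. Iterating the extrapolation eventually produces so many zeros of $\Phi$ that, combined with the growth bound on $|\Phi|$, all coefficients of $P$ must vanish --- contradicting the nontriviality supplied by Siegel's lemma.

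The main obstacle is the delicate quantitative balancing of the parameters $L$, $S$, the extrapolation factor, and the height of $P$: the Siegel existence bound, the analytic growth estimate on a large disc, and the Liouville lower bound must all align so that the forced vanishing genuinely outstrips what a nonzero $P$ can sustain. This parameter juggling --- Baker's refinement of Gelfond's two-variable argument --- is the technical heart of the theorem, and is precisely what takes the proof from a few pages of soft reasoning into a full chapter of quantitative transcendence theory. Given this depth, my realistic plan mirrors the paper's: invoke the statement directly from Thm.~1.6 of Waldschmidt's monograph rather than reproducing its proof.
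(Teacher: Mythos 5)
Your proposal ultimately lands where the paper does: Thm.~\ref{thm:Baker} is a deep classical result that the paper does not prove but simply imports from Thm.~1.6 of Waldschmidt's monograph, and your concluding plan to invoke it by citation is exactly the paper's approach. Your sketch of the Gelfond--Baker method (auxiliary polynomial via Siegel's lemma, extrapolation by Schwarz's lemma, Liouville lower bounds) is a faithful high-level outline of the actual proof, but since neither you nor the paper attempts to carry it out, there is nothing further to compare.
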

\begin{lemma}
	\label{lem::rank-d-2-to-LI}
	Let $d \ge 2$, $\lambda_1,\dots,\lambda_d \in \rel_{>1} \cap \alg$ be pairwise multiplicatively independent, and suppose
	\[
	\operatorname{rank}(G_M(\lambda_1,\ldots,\lambda_d)) \ge d - 2.
	\]
	Then $1/\log(\lambda_1),\dots,1/\log(\lambda_d)$ are linearly independent over $\rat$.
\end{lemma}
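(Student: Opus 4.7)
The plan is to argue by contradiction: I will suppose $\sum_{i=1}^d c_i / \log\lambda_i = 0$ for some rationals $c_i$ not all zero, and derive a contradiction with Baker's theorem (Thm.~\ref{thm:Baker}).

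First I will pin down the dimension of the $\rat$-span of the logarithms. Since each $\lambda_i > 0$, a multiplicative relation $\prod \lambda_i^{k_i} = 1$ is equivalent to the additive relation $\sum k_i \log\lambda_i = 0$, so $\dim_\rat \operatorname{span}_\rat(\log\lambda_1,\ldots,\log\lambda_d) = d - \operatorname{rank}(G_M(\lambda_1,\ldots,\lambda_d)) \le 2$. Pairwise multiplicative independence of the $\lambda_i$ forces this dimension to be at least $2$, hence exactly $2$. Taking $x \coloneqq \log\lambda_1$ and $y \coloneqq \log\lambda_2$ as a basis, I can write $\log\lambda_i = a_i x + b_i y$ for unique $a_i, b_i \in \rat$.

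Multiplying the assumed relation through by $\prod_{j=1}^d (a_j x + b_j y)$ then yields the homogeneous polynomial identity
\[
P(x, y) \coloneqq \sum_{i=1}^d c_i \, Q_i(x, y) = 0, \qquad Q_i(x, y) \coloneqq \prod_{j \ne i} (a_j x + b_j y).
\]
The crux of the argument is to verify that $P$ is \emph{not} the zero polynomial. This I will do via a Lagrange-interpolation-style calculation: the linear forms $L_j \coloneqq a_j x + b_j y$ are pairwise non-proportional over $\rat$ (by pairwise multiplicative independence of the $\lambda_i$), so for any fixed index $i$, the point $(x_0, y_0) \coloneqq (-b_i, a_i) \ne (0,0)$ satisfies $L_i(x_0, y_0) = 0$ but $L_j(x_0, y_0) \ne 0$ for $j \ne i$. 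Hence $Q_i(x_0, y_0) \ne 0$ while $Q_k(x_0, y_0) = 0$ for every $k \ne i$, giving $P(x_0, y_0) = c_i Q_i(x_0, y_0)$, which is non-zero as soon as $c_i \ne 0$. I expect recognising this Lagrange-interpolation phenomenon to be the main obstacle of the proof; once it is in place, the rest is routine.

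Finally, being homogeneous of degree $d-1$ and non-zero, $P$ dehomogenises to a non-zero polynomial $p(t) \coloneqq P(1, t) \in \rat[t]$. Since $\log\lambda_1 > 0$, dividing the identity $P(\log\lambda_1, \log\lambda_2) = 0$ by $(\log\lambda_1)^{d-1}$ yields $p(\log\lambda_2 / \log\lambda_1) = 0$, so the ratio $\log\lambda_2/\log\lambda_1$ is algebraic. This contradicts Thm.~\ref{thm:Baker}: since $\lambda_1, \lambda_2$ are multiplicatively independent, $\log\lambda_1$ and $\log\lambda_2$ are linearly independent over $\alg$, and hence their ratio must be transcendental.
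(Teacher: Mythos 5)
Your proof is correct, and its overall architecture is the same as the paper's: express each $\log\lambda_i$ over the $\rat$-basis $\log\lambda_1,\log\lambda_2$, clear denominators to obtain a homogeneous polynomial $P$ of degree $d-1$ vanishing at $(\log\lambda_1,\log\lambda_2)$, and invoke Thm.~\ref{thm:Baker} to forbid $\log\lambda_2/\log\lambda_1$ from being algebraic. The point where you genuinely diverge is how non-vanishing of $P$ is established. The paper runs the logic in the other order: it first uses Baker to conclude that every coefficient $e_k$ of $P$ must be zero (so that $P$ is the zero polynomial), and then appeals to an auxiliary result (Lemma~\ref{lem::recips rat func}) to deduce from the identical vanishing of the rational function $\sum_j c_j/L_j$ that all $c_j=0$. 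That auxiliary lemma is proved by a limiting argument about where the denominators blow up and a dimension-count over the hyperplanes $V_j$. You instead argue by contrapositive: if some $c_i\neq 0$, your Lagrange-interpolation evaluation at the explicit point $(-b_i,a_i)$ (which lies on $L_i=0$ but, by pairwise non-proportionality, on no other $L_j=0$) shows $P$ is a nonzero polynomial, so Baker gives the contradiction immediately. Your argument is more elementary and self-contained for the case at hand; in effect it inlines the two-variable specialisation of Lemma~\ref{lem::recips rat func}. The paper formulates that lemma in $e$ variables because it is re-used in Lemma~\ref{lem::recips lin ind} for the Schanuel-conditional statement, so the auxiliary result is not redundant in the paper's larger context.

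One small wording caveat: after multiplying through by $\prod_j L_j$, what you obtain is the numerical identity $P(\log\lambda_1,\log\lambda_2)=0$, not a polynomial identity $P\equiv 0$ -- indeed the whole point of your interpolation step is to show $P$ is \emph{not} the zero polynomial. The mathematics is right; just avoid the phrase ``polynomial identity'' there, since it suggests the opposite of what you go on to prove.
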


Note that the condition $\operatorname{rank}(G_M(\rho_1,\ldots,\rho_d)) \ge d-2$ is exactly the same as ``each triple of distinct $\rho_i, \rho_j, \rho_k$ is multiplicatively dependent'', as well as ``each triple of distinct $\log(\rho_i), \log(\rho_j), \log(\rho_k)$ is linearly dependent over $\rat$''.
This holds trivially when $d = 2$.

\begin{proof}[Proof of \Cref{lem::rank-d-2-to-LI}]
	By the two assumptions, for any distinct $i,j,k$ there exist $b_i,b_j,b_k \in \intg_{\ne 0}$ such that $\lambda_i^{b_i}\lambda_j^{b_j}\lambda_k^{b_k} = 1$.
	Equivalently, $b_i\log(\lambda_i) + b_j\log(\lambda_j) + b_k \log(\lambda_k) = 0$.
	For $1 \le j \le d$, let $b_{1,j}, b_{2,j} \in \rat$ be such that $\log(\lambda_j) = b_{1,j} \log(\lambda_1) + b_{2,j} \log(\lambda_2)$.
	Then $b_{1, 1} = b_{2, 2} = 0$ and all other $b_{i, j}$ are non-zero.

	Let $c_1,\dots,c_d \in \rat$ be such that  $\sum_{j=1}^d \frac{c_j}{\log(\lambda_j)} = 0$.
	We will argue that $c_j = 0$ for all $j$.
	Define $f(x,y) = \sum_{j=1}^d \frac{c_j}{b_{1,j} x + b_{2, j} y}$.
	Then $f(\log(\lambda_1),\log(\lambda_2)) = 0$.
	Let $g(x,y) = \prod_{j=1}^d (b_{1,j} x + b_{2, j} y)$ and
	\[
	h(x,y) 
	\coloneqq 
	f(x,y) \cdot g(x,y)
	=
	\sum_{j=1}^d  c_j\prod_{1 \le i \ne j \le d} (b_{1,i}x + b_{2,i}y)
	\]
	which simplifies to
	\[
	h(x,y) = \sum_{i=0}^{d-1} e_i x^i y^{d-i}
	\]
	for some $e_i \in \rat$.
	We have that $h(\log(\lambda_1),\log(\lambda_2)) = 0$.
	
	Suppose, for the sake of a contradiction, that $h$ is not identically zero. Then dividing by $y^{d-1}$ gives that $\log(\lambda_1)/\log(\lambda_2)$ is the root of the non-zero polynomial $\sum_{i=0}^{d-1} e_i x^i \in \rat[x]$.
	That is, $\log(\lambda_1)/\log(\lambda_2)$ is an algebraic number, say $\alpha$, and hence $\log(\lambda_1) - \alpha\log(\lambda_2) = 0$, contradicting \Cref{thm:Baker} since $\lambda_1$ and $\lambda_2$ are multiplicatively independent.
	Therefore, all $e_i$ have to be zero.
	It follows that $h$ is zero and $f$ is zero everywhere it is defined.
	
	Now suppose, for the sake of a contradiction, that $c_i \ne 0$ for some $i$.
	By the multiplicative independence assumption, the following holds.
	For every $\mu \in \rat$ and $j \ne i$, it \emph{is not} the case that for all $x,y \in \rel$, $b_{1,i}x + b_{2,i}y = \mu (b_{1,j}x + b_{2,j}y)$.
	Therefore, we can find $\widetilde{x}, \widetilde{y} \in \rat$ such that $b_{1,i}\widetilde{x} + b_{2,i}\widetilde{y} = 0$ and $b_{1,j}\widetilde{x} + b_{2,j}\widetilde{y} \ne 0$ for all $j \ne i$.
	Now consider what happens as $(x,y) \to (\widetilde{x}, \widetilde{y})$.
	We have that for all $j \ne i$, $b_{1,j}\widetilde{x} + b_{2,j}\widetilde{y} \to z_j$ for some non-zero $z_j \in \rat$, but $b_{1,i}\widetilde{x} + b_{2,i}\widetilde{y} \to 0$, which implies that $|f(x,y)|$ must diverge.
	This contradicts the zeroness of $f(x,y)$ wherever it is defined.
\end{proof}

\begin{lemma}
	\label{lem::baker-determining-sign}
	Given $\lambda_1,\ldots,\lambda_d \in \alg$ and $a_0, \ldots, a_d \in \rat$, we can effectively determine the sign of $a_0 + \sum_{i=1}^d a_i \log(\lambda_i)$.
\end{lemma}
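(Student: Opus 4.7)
The plan is to reduce $L := a_0 + \sum_{i=1}^d a_i \log(\lambda_i)$ to a linear form in the logarithms of \emph{multiplicatively independent} algebraic numbers, and then to invoke Baker's theorem (Theorem~\ref{thm:Baker}) both qualitatively (to test whether $L=0$) and in effective quantitative form (to bound $|L|$ away from $0$ when it is nonzero). I shall assume $\lambda_i \in \rel_{>0} \cap \alg$ so that each $\log(\lambda_i)$ is a well-defined real number; this is the setting in which the lemma is subsequently applied (as, e.g., in the proof of Lemma~\ref{lem::rank-d-2-to-LI}).

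First, I would invoke Masser's theorem (Theorem~\ref{thm:Masser}) to compute a basis of the multiplicative relation group $G_M(\lambda_1,\ldots,\lambda_d)$. Elementary linear algebra on this basis then produces a maximal multiplicatively independent subset $\{\mu_1,\ldots,\mu_r\} \subseteq \{\lambda_1,\ldots,\lambda_d\}$ together with rational exponents $s_{i,j}$ such that $\lambda_i = \prod_{j=1}^r \mu_j^{s_{i,j}}$ for each $i$. Substituting and grouping, the given form rewrites as
\[
L \;=\; a_0 + \sum_{j=1}^r b_j \log(\mu_j), \qquad b_j \coloneqq \sum_{i=1}^d a_i s_{i,j} \in \rat,
\]
and the rationals $b_j$ are effectively computable.

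Baker's theorem now applies to the multiplicatively independent tuple $(\mu_1,\ldots,\mu_r)$, so $1, \log(\mu_1), \ldots, \log(\mu_r)$ are linearly independent over $\rat$. Hence $L = 0$ if and only if $a_0 = 0$ and every $b_j = 0$, a finite rational check. If this holds, output ``zero''. Otherwise $L \ne 0$, and I would apply the effective Baker--W\"ustholz lower bound for nonzero linear forms in logarithms to obtain a computable $\epsilon > 0$ with $|L| > \epsilon$. Approximating each $\log(\mu_j)$ to within precision $\epsilon / (2(1 + \sum_j |b_j|))$ (which is achievable by standard algorithms for algebraic numbers) yields an approximation $\tilde L$ with $|L - \tilde L| < \epsilon/2 < |L|$, so the sign of $\tilde L$ coincides with that of $L$.

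The main obstacle is this last step: extracting a \emph{usable} $\epsilon$ from the Baker--W\"ustholz bound. Although the existence of effective lower bounds on nonzero linear forms in logarithms is classical, one must carefully track the heights and degrees of the $\mu_j$ (obtained from the inputs $\lambda_i$ and the basis returned by Masser's theorem), combine them with the heights of the rational coefficients $a_0, b_1, \ldots, b_r$, and plug everything into the quantitative bound to produce an explicit $\epsilon$. The remaining ingredients---computing the multiplicative relations, extracting $\{\mu_1, \ldots, \mu_r\}$, and high-precision numerical approximation of algebraic logarithms---are routine.
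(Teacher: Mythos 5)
Your proposal is correct and its first two steps coincide with the paper's proof: compute the multiplicative relations via Masser's theorem (Thm.~\ref{thm:Masser}), rewrite the form over a maximal multiplicatively independent subset with computable rational coefficients, and use Baker's theorem (Thm.~\ref{thm:Baker}) to decide whether the form vanishes. The divergence is in the final step, and it is exactly the step you flag as ``the main obstacle'': you invoke an effective Baker--W\"ustholz lower bound to produce an explicit $\epsilon$ with $|L| > \epsilon$ before approximating. The paper avoids this entirely. Once the qualitative Baker theorem certifies $L \ne 0$, one simply computes $L$ to successively higher precision until the resulting interval excludes $0$; termination is guaranteed precisely because $L \ne 0$, and no a priori lower bound on $|L|$ is needed. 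This is the standard observation that the sign of a nonzero computable real is computable. So your argument works, but the height-and-degree bookkeeping you describe as the main difficulty is unnecessary for this lemma (the paper does use a quantitative version of Baker's theorem, namely Matveev's bound, but only later, in Thm.~\ref{thm:Baker for collisions}, where one genuinely needs to bound the set of solutions rather than merely determine a sign).
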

\begin{proof}
	By computing a basis of $G_M(\lambda_1,\ldots,\lambda_d)$ (see \Cref{sec::alg-numbers}), we can rewrite this expression as $a_0 + \sum_{i=1}^e b_i \log(\mu_i)$, where for all $1 \le i \le e$, $b_i \ne 0$, $\mu_i = \lambda_j$ for some $j$, and $\mu_1,\ldots,\mu_e$ are multiplicatively independent.
    By~\Cref{thm:Baker}, this expression is zero if and only if $a_0 = b_1 = \cdots = b_e = 0$.
    In case it is non-zero, we can compute its sign by approximating its value from above and below to arbitrary precision.
\end{proof}

\subsection{Kronecker's theorem and translations on a torus}
\label{sec::toric-words}

Recall that $\torus$ denotes $[0,1)$, viewed as a group with addition modulo 1.
In our analysis of MSO theories of powers of integers, we will frequently encounter compact dynamical systems with state space $\torus^d$ and the update function $g_\delta \st \torus^d \to \torus^d$ where
\[
g_\delta((x_1,\ldots,x_d)) = (\{x_1 + \delta_1\},\ldots,\{x_d+\delta_d\})
\]
for some $\delta = (\delta_1,\ldots,\delta_d) \in \torus^d$.
These dynamical systems are well-understood thanks to Kronecker's theorem on Diophantine approximation.
Let $\delta = (\delta_1,\ldots,\delta_d) \in \torus^d$, and define the groups of affine relations and linear relations of $\delta$, respectively, as
\begin{align*}
	G_A(\delta_1,\ldots,\delta_d) &= \{(k_0, \ldots,k_d) \in \intg^{d+1} \st k_1\delta_1 + \cdots + k_d\delta_d = k_0\},\\
	G_L(\delta_1,\ldots,\delta_d) &= \{(k_1, \ldots,k_d) \in \intg^{d} \st k_1\delta_1 + \cdots + k_d\delta_d = 0\}.
\end{align*}
Further, let
\begin{equation*}
    X_\delta = \{(x_1,\ldots,x_d) \in \torus^d \st G_A(x_1,\ldots,x_d) \supseteq  G_A(\delta_1,\ldots,\delta_d)\}.
\end{equation*}
The set $X_\delta$ is closed by construction.
Because $G_A(\delta_1,\ldots,\delta_d)$ has a finite basis, $X_\delta$ can be defined by a conjunction of affine equalities with integer coefficients.
The following is a rephrasing of Kronecker's theorem in Diophantine approximation~\cite{gonek-kronecker}.
\begin{theorem}\label{thm::kronecker}
	The orbit of $\zerovec \in \torus^d$ under $g_\delta$ is dense in $X_\delta$.
\end{theorem}
That is, every open subset of $X_\delta$ is visited infinitely often by the sequence $\langle \zerovec, g_\delta(\zerovec), g_\delta(g_\delta(\zerovec)), \ldots \rangle$.
For $s \in \torus^d$, define $X_{\delta, s} = \{s+x \st x \in X_\delta\}$. 
The following is an immediate consequence of Kronecker's theorem.
\begin{theorem}\label{thm::kronecker-torus}
	The orbit of $s \in \torus^d$ under $g_\delta$ is dense in $X_{\delta,s}$.
\end{theorem}
\begin{corollary}
	\label{thm::ur-on-torus}
	Let $Z \subseteq \torus^d$ be open.
	For any $s \in \torus^d$, the set $T \coloneqq \{n \in \nat \st g_\delta^{(n)}(s) \in Z\}$ is either empty, or infinite and with uniformly bounded gaps.
\end{corollary}
\begin{proof}
	Let $\widetilde{Z} = Z \cap X_{\delta,s}$.
	If $\widetilde{Z} = \varnothing$, then $T$ is empty.
	Now suppose $\widetilde{Z} \ne \varnothing$.
	For $k \ge 1$, let $\widetilde{Z}_k = g_\delta^{(-k)}(\widetilde{Z})$, which is open and non-empty.
	By \Cref{thm::kronecker-torus}, these sets form an open cover of $\widetilde{Z}$.
	Because $X_{\delta,s}$ is compact (since it is closed and bounded), there exists $K$ such that $\widetilde{Z}_{0}, \ldots, \widetilde{Z}_{-K}$ covers $X_{\delta,s}$.
	It follows that for any $x \in X_{\delta,s}$, there exists $1 \le k \le K$ such that $g^{(k)}_\delta(x) \in \widetilde{Z}$.
	Therefore, for any $n \ge K$ there exist $n_1,n_2 \in T$ such that $n - K \le n_1 < n < n_2 \le n+K$.
\end{proof}

\subsection{Normal numbers}
Call a number $a \in \rel$ is \emph{disjunctive in base $b$} if its base-$b$ expansion $\alpha$ is a disjunctive word, i.e.\ $\alpha$ contains infinitely many occurrences of every $w \in \{0,\ldots,b-1\}^*$.
Disjunctivity is a weaker property than the more well-known property of being a \emph{normal number}: in the latter case, the \emph{frequency} of occurrences of every finite word $w \in \{0,\ldots,b-1\}$ has to be equal to $b^{-|w|}$. 
For a detailed discussion of disjunctivity, see the book~\cite{bugeaud2012distribution} and for normal numbers, see the surveys~\cite{harman2002one, queffelec2006old} and the book~\cite{MR2953186}.
In particular, \cite{harman2002one} states the following conjecture.
\begin{conjecture}\label{conj: normal}
	A real irrational algebraic number $\alpha$ is normal in any integer base $b \ge 2$.
\end{conjecture}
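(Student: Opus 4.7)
Conjecture~\ref{conj: normal} is a well-known open problem in number theory, implied by Borel's 1909 conjecture that every irrational algebraic number is normal in every integer base. In fact, no specific irrational algebraic number is currently known to be even weakly normal in any base: one cannot presently rule out, say, that the binary expansion of $\sqrt{2}$ eventually omits some particular bit pattern. I therefore do not expect to resolve the conjecture and can only describe the natural line of attack and identify where it breaks down.

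The state of the art on digit expansions of algebraic irrationals is due to Adamczewski and Bugeaud, who used the Schmidt subspace theorem to prove that the base-$b$ expansion $\alpha$ of any irrational algebraic number has super-linear factor complexity, $p_\alpha(n)/n \to \infty$, and in particular is not generated by any finite automaton. Their argument proceeds by contradiction: if $\alpha$ had low enough factor complexity, one could exploit long repeated factors in its expansion to build rational approximations so sharp that the subspace theorem is violated. A plausible route to Conjecture~\ref{conj: normal} would adapt this strategy: fix $w \in \{0,\dots,b-1\}^+$, assume for contradiction that $w$ occurs only finitely often in $\alpha$, and extract from this avoidance either a drop in factor complexity below the Adamczewski--Bugeaud threshold or a family of abnormally good Diophantine approximations to $\alpha$, deriving a contradiction from quantitative transcendence bounds.

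The principal obstacle, and the reason the conjecture has stood for over a century, is that pattern-avoidance is a dramatically weaker structural constraint than the presence of long repetitions. Forbidding a single finite factor does not force the digit sequence to contain long repeated blocks, nor does it directly bound the factor complexity, so none of the known approximation constructions survive the translation. A full proof would plausibly require a genuinely new ingredient from transcendence theory --- for instance, a quantitative bound relating the asymptotic density of a prescribed factor in the base-$b$ expansion of an algebraic number $\alpha$ to the height and degree of $\alpha$ --- of a kind not presently available. For this reason, the decidability results of this section should be regarded as conditional, with unconditional progress on the relevant MSO theories apparently as hard as weak normality itself.
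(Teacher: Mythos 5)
The statement you were asked about is labelled a \emph{conjecture} in the paper, not a theorem, and the paper correctly offers no proof of it: it is an open problem, attributed to the survey of Harman, and the paper merely uses it as a hypothesis for conditional decidability results (see the discussion following Thm.~\ref{baseb-complete}). You have recognised this accurately, declined to claim a proof, and your account of the state of the art matches the paper's own: the paper likewise cites the Adamczewski--Bugeaud result that $\liminf p(n)/n = +\infty$ as the strongest known partial progress, and like you it notes that this falls well short of weak normality. Your additional remarks on why pattern-avoidance is a weaker constraint than long repetitions, and on what a proof might require, are a fair and informed gloss; they go slightly beyond what the paper says but do not conflict with it. In short, your response is correct in substance and in spirit: the honest answer here is that no proof exists, and you gave it.
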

Thus, in particular, this conjecture implies that every real irrational algebraic number is disjunctive in any integer base.
The strongest result towards this conjecture is due to Adamczewski and Bugeaud~\cite{adamczewski2007complexity}.
Recall that $\pi_\alpha$ denotes the factor complexity of an infinite word $\alpha$.
\begin{theorem}\label{thm::bugeaud-adamcz}
	If $b \ge 2$ and $\alpha$ is the base-$b$ expansion of a real irrational algebraic number, then
	\begin{equation*}
		\liminf_{n \to \infty} \frac{\pi_\alpha(n)}{n} = + \infty.
	\end{equation*}
\end{theorem}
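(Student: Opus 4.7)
The plan is to argue by contradiction using the interplay between the combinatorics of low-complexity words and Diophantine approximation, following the strategy of Adamczewski and Bugeaud. Suppose for contradiction that $\liminf_{n\to\infty} p(n)/n = c < \infty$. Then there exists a constant $C$ and an infinite set $N \subseteq \nat$ such that $p(n) \le Cn$ for all $n \in N$. My goal is to derive that $\alpha$ admits sequences of extraordinarily good rational approximations, contradicting the fact that $\alpha$ is algebraic irrational.

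The first step is combinatorial: I would invoke the standard fact that any infinite word whose factor complexity satisfies $p(n) \le Cn$ infinitely often necessarily contains long \emph{near-repetitions}. Concretely, by a counting/pigeonhole argument (essentially due to Cassaigne and refined in the Adamczewski--Bugeaud framework), one can extract infinitely many factorisations of prefixes of $\alpha$ of the form $U_k V_k V_k'$, where $V_k'$ is a prefix of $V_k$, $|V_k| \to \infty$, the ratio $|U_k|/|V_k|$ stays bounded, and $|V_k'|/|V_k|$ is bounded below by some $\eta > 0$. The intuition is that if too few distinct factors appear, many long factors must coincide, forcing the emergence of such stammering patterns.

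The second, and main, step is to convert each near-repetition into a sharp Diophantine approximation. Given a factorisation $U_k V_k V_k'$, the rational number $\beta_k$ whose base-$b$ expansion is the eventually periodic word $U_k V_k V_k V_k \cdots$ has denominator of the form $b^{|U_k|}(b^{|V_k|}-1)$ and satisfies $|\alpha - \beta_k| \le b^{-(|U_k|+|V_k|+|V_k'|)}$. Writing $q_k = b^{|U_k|}(b^{|V_k|}-1)$, one obtains $|\alpha - p_k/q_k| \le q_k^{-(1+\eta')}$ for some fixed $\eta' > 0$ depending on $\eta$ and the ratio $|U_k|/|V_k|$. Moreover, the special multiplicative structure of $q_k$ (a power of $b$ times $b^{|V_k|}-1$) means the approximations live in a controlled family. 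Feeding this family into the $p$-adic Subspace Theorem of Schmidt--Schlickewei, applied to a carefully chosen system of linear forms in $1, \alpha, \ldots$ and $b$-adic valuations at the primes dividing $b$, yields that infinitely many of the $(p_k, q_k)$ must lie in a proper linear subspace. A final bookkeeping argument shows this forces $\alpha$ to be ultimately periodic in base $b$, hence rational, contradicting the hypothesis.

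The main obstacle is unquestionably the third paragraph: the Subspace Theorem is a deep and delicate tool, and setting up the linear forms so that the resulting exceptional subspace yields genuine periodicity (rather than a vacuous degeneracy) requires care. The combinatorial extraction of stammering prefixes is comparatively routine, but marrying it to the correct $p$-adic valuations at the prime divisors of $b$ is the crux. I would expect the cleanest exposition to lift these technicalities into a single black-box lemma: ``an irrational algebraic number cannot be approximated by infinitely many rationals of the form $p/\bigl(b^{u}(b^{v}-1)\bigr)$ to within $q^{-(1+\eta')}$,'' and then the factor-complexity statement follows immediately from the combinatorial reduction.
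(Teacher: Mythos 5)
The paper does not prove this theorem: it is quoted verbatim as the main result of Adamczewski and Bugeaud \cite{adamczewski2007complexity}, cited as the strongest known progress towards Conjecture~\ref{conj: normal}. Your outline faithfully reconstructs the strategy of that cited proof --- the combinatorial extraction of stammering prefixes $U_kV_kV_k'$ from the assumption $\liminf p(n)/n < \infty$, the passage to rational approximations with denominators $b^{|U_k|}(b^{|V_k|}-1)$, and the application of the $p$-adic Subspace Theorem at the primes dividing $b$ to rule out an algebraic irrational limit --- and you correctly identify the Subspace Theorem step as the genuine crux. So there is nothing to compare against within the paper itself; your proposal is a correct (if, as you acknowledge, schematic) account of the argument in the reference.
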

\subsection{Fourier-Motzkin elimination}
\label{sec::fourier-motzkin}

Fourier-Motzkin elimination is a method for solving systems of affine equalities and inequalities over the reals. 
Suppose we are given the system
\[
\Phi(x_1,\ldots,x_m) \coloneqq \bigwedge_{i \in I} a_{i,1}x_1 + \cdots + a_{i,m}x_m \sim_i b_i \:\land\: \bigwedge_{j \in J} a_{j,1}x_1 + \cdots + a_{j,m}x_m \sim_j b_j
\]
where $a_{i,k}, a_{j,k}, b_i, b_j \in \rel$ are constants, $a_{i,1}, a_{j,1} > 0$, ${\sim_i} \in \{>, \ge\}$, and ${\sim_j} \in \{<, \le\}$ for all $i \in I, j \in J, 1 \le k \le m$.
Then $\Phi(x_1,\ldots,x_m)$ can be written as
\[
\bigwedge_{i \in I} x_1 \sim_i \frac{b_i}{a_{i,1}} - \frac{a_{i,2}}{a_{i,1}} x_2 - \cdots - \frac{a_{i,m}}{a_{i,1}}x_m
\:\land\:
 \bigwedge_{j \in J} x_1 \sim_j \frac{b_j}{a_{j,1}} - \frac{a_{j,2}}{a_{j,1}} x_2 - \cdots - \frac{a_{j,m}}{a_{j,1}}x_m
\]
and hence the formula $\exists x_1 \st \Phi(x_1,\ldots,x_m)$ is equivalent to
\[
\bigwedge_{i \in I, j \in J}
 \frac{b_i}{a_{i,1}} - \frac{a_{i,2}}{a_{i,1}} x_2 - \cdots - \frac{a_{i,m}}{a_{i,1}}x_m
\sim_{i,j} 
\frac{b_j}{a_{j,1}} - \frac{a_{j,2}}{a_{j,1}} x_2 - \cdots - \frac{a_{j,m}}{a_{j,1}}x_m
\]
where ${\sim_{i,j}}$ is $\le$ if both ${\sim_i}$ and ${\sim_j}$ are non-strict, and ${\sim_{i,j}}$ is $<$ otherwise.
Hence we can eliminate the variable $x_1$.

Now suppose we are given a Boolean combination $\Psi$ of affine inequalities in real variables $x_1,\ldots,x_m$.
Let $A$ be the set of all constants that appear as the coefficient of some $x_i$ in $\Psi$ (in $\Phi$ above, this is the set of all $a_{i,k}, a_{j,k}$), and $B$ be the set of all other constants (in $\Phi$ above, this is the set of all $b_i,b_j$).
By iteratively applying the process described above, we can construct a Boolean combination $\Gamma$ of linear inequalities (without any free variables) in the elements of $B$ with coefficients that are polynomial in the elements of $A$ such that $\Gamma$ is true if and only if $\exists x_1,\ldots,x_m \st \Psi(x_1,\ldots,x_m)$ is true.

\section{A toolbox for proving MSO decidability}
\label{sec::automata-theory}

We recall classical results in \Cref{sec::unif-rec-words,sec::pup-words}.
Thereafter we give our main new results concerning MSO decidability.

\subsection{Uniformly recurrent words}
\label{sec::unif-rec-words}

A word $\alpha \in \Sigma^\omega$ is
\begin{itemize}
	\item \emph{recurrent} if every factor $u \in \Sigma^*$ of $\alpha$ occurs infinitely often in $\alpha$, and
	\item \emph{uniformly recurrent} if it is recurrent and for every factor $u$, the gaps between consecutive occurrences of $u$ in $\alpha$ are bounded.
	In this case, we define $R_\alpha(n)$ to be the smallest integer $R$ such that every factor $u$ of $\alpha$ with $|u| = n$ appears in $\alpha[k, k+R)$ for all $k$.
	The function $R_\alpha$ is called the \emph{recurrence function} of $\alpha$.
\end{itemize}

Prominent examples of uniformly recurrent words include the Thue-Morse word \cite[Chap.~1]{allouche2003automatic} and all Sturmian words \cite[Chap.~2]{lothaire2002algebraic}.
The following is due to Sem\"enov \cite{semenov84_logic_theor_one_place_funct}, and a modernized version of this proof is given in~\cite[Section 3.1]{karimov2023algorithmic}.\footnote{In fact, Sem\"enov's result works for the more general class of \emph{effectively almost-periodic} words, in which every factor occurs either finitely many times, or with bounded gaps.}

\begin{theorem}
	\label{thm::semenov}
	Suppose we are given an automaton $\Acal$ and a uniformly recurrent word $\alpha$, represented by (i) an oracle to compute $\alpha(n)$ given $n$, and (ii) an oracle to compute $R_\alpha(n)$ given $n$.
	Then we can decide whether $\Acal$ accepts $\alpha$.
\end{theorem}
\begin{corollary}
	\label{thm::semenov-2}
	Suppose we are given an automaton $\Acal$ and a uniformly recurrent word $\alpha$ represented by the oracle (i) above and either 
	\begin{itemize}
		\item[(iii)] an oracle to decide whether a given finite word $u$ occurs in $\alpha$, or
		\item[(iv)] an oracle to compute the factor complexity $\pi_\alpha(n)$ given $n$.
	\end{itemize}
	Then we can decide whether $\Acal$ accepts $\alpha$.
\end{corollary}
\begin{proof}
	Suppose we can decide whether a given $u$ occurs in $\alpha$.
	Then for any $k$, we can compute the set $L(k)$ of all factors of $\alpha$ of length $k$.
	Thus we can compute $R_\alpha(n)$ by computing $L(k)$ for increasing values of $k \ge n$ until we arrive at a set of finite words all of which contain all of $L(n)$ as factors.
	
	Now suppose we can compute $\pi_\alpha(n)$ given $n$.
	Then we can decide whether $u$ occurs in $\alpha$ by first computing $\pi_\alpha(|u|)$, and then computing increasingly larger prefixes of $\alpha$ until we have seen $\pi_\alpha(|u|)$ different factors of length $|u|$.
	At that point we can decide whether $u$ occurs in $\alpha$.
\end{proof}

\subsection{Profinitely ultimately periodic words}
\label{sec::pup-words}

Carton and Thomas \cite{cartonthomas} introduced the class of profinitely ultimately periodic words as a framework to generalise the \emph{contraction method} of Elgot and Rabin \cite{elgotrabin}.
Let $\Sigma$ be an alphabet.

\begin{itemize}
	\item[(a)] A sequence of finite words $\tuple{u_n}_{n \in \naturals}$ is \emph{effectively profinitely ultimately periodic} \cite{cartonthomas} if for any morphism $h\st \Sigma^* \to M$ into a finite monoid $M$, we can compute integers $N, p$ with $p \ge 1$ such that for all $n \ge N$, $h(u_n) = h(u_{n+p})$.
	\item[(b)] An infinite word $\alpha$ is \emph{effectively profinitely ultimately periodic} \cite{rabinovich} if it can be effectively factorised (by, e.g., a Turing machine that is given $\alpha$ as input) as an infinite concatenation $u_0u_1\cdots$ of finite non-empty words forming an effectively profinitely ultimately periodic sequence.
	\item[(c)] We say that a strictly increasing function $f \st \nat \to \nat$ is \emph{effectively contractive} if for any morphism $h \st \nat \to M$ into a finite monoid, the sequence $\seq{h(f(n+1)-f(n)-1)}$ is ultimately periodic with computable period and pre-period.
\end{itemize}

We have the following results about MSO decidability.

\begin{theorem}
	\label{pup-decidable}
	Let $\alpha \in \Sigma^\omega$, and suppose we can compute $\alpha(n)$ given $n$.
	The problem $\Acc{\alpha}$ is decidable if and only if $\alpha$ is effectively profinitely ultimately periodic.
\end{theorem}

The ``if'' direction of \Cref{pup-decidable} is due to Carton and Thomas \cite{cartonthomas}, and the converse is due to Rabinovich \cite{rabinovich}. Every infinite word is profinitely ultimately periodic, as a close inspection of the use of Ramsey's theorem in Rabinovich's proof reveals.
The effectiveness distinguishes words whose automaton acceptance problem is decidable.
A comprehensive class of predicates whose characteristic words ($\alpha \in \{0, 1\}^\omega$) are effectively profinitely ultimately periodic is identified by \cite[Thm.~5.2]{cartonthomas}. This class includes predicates $\{p(n)k^n  \ge 0 \st n \in \nat\}$ for any integer $k \ge 1$ and polynomial $p \in \intg[x]$.

Characteristic words of predicates obtained from effectively contractive and computable $f$ are more restrictive than effectively profinitely ultimately periodic words: in the latter case, we are free to choose the factorisation, whereas in the former case, the factorisation is the unique one into words of the form $0^k1$ for some $k$.
It turns out, however, that effectively contractible words have better compositional properties (\Cref{sec::compositions}).
For now, we record the following corollary of \Cref{pup-decidable}.

\begin{corollary}\label{cor::epup-functions}
	The MSO theory of $\langle \nat; <, P\rangle$, where $P = \{f(n) \st n\in\nat\}$ for a computable and effectively contractive function $f$, is decidable.
\end{corollary}
\begin{proof}
	The characteristic word $\alpha$ of $P$ is effectively profinitely ultimately periodic, with the required factorisation of $\alpha$ being $\langle 0^{f(0)} 1, 0^{f(1)-f(0)-1} 1, 0^{f(2)-f(1)-1} 1 , \ldots \rangle$.
\end{proof}

\subsection{Disjunctive words}\label{Sec: Normal Words}

Recall that a word $\alpha \in \Sigma^\omega$ is \emph{disjunctive} if every $u \in~\Sigma^*$ appears infinitely often in $\alpha$. %
Disjunctivity is usually considered when $\Sigma = \{0,\dots,b-1\}$ is the alphabet of digits and $\alpha$ is the base-$b$ expansion of a real number $x \in [0,1)$.
Thus, when $x = \sqrt{2}-1 = 0.41421356\cdots$ and $b=10$, $\alpha = 41421356 \cdots$.
We have the following result about such words.
\begin{theorem}\label{normal-decidable}
	If $\alpha$ is disjunctive and $\alpha(n)$ can be effectively computed given $n$, then $\Acc{\alpha}$ is decidable.
\end{theorem}
Intuitively, the proof uses the abundance of all factors in $\alpha$ to deduce that the set of states visited infinitely often is an entire bottom strongly connected component in the graph induced by the automaton.
\begin{proof}
	Consider an automaton $\automaton$ as a directed graph allowing multiple edges. We partition the graph into its strongly connected components (SCCs) and call an SCC without outgoing edges a bottom SCC\@. We will show that the set of states visited infinitely often by the run of $\automaton$ on $\alpha$ is precisely a bottom SCC\@. Hence we can decide $\Acc{\alpha}$ by simulating the run of given $\Acal$ on $\alpha$ until a bottom SCC is reached. Then $\alpha$ is accepted if and only if this bottom SCC is in the (Muller) acceptance condition of $\Acal$.
	
	We need to show that (a) if an SCC is not a bottom SCC, then the run eventually exits it; and (b) if the run enters a bottom SCC, all of its states are visited infinitely often.
	To prove~(a), let $S$ be a non-bottom SCC\@. \
	Enumerate the states of~$S$ as $q_1,\ldots,q_n$.
	Inductively construct a sequence of words $u_1, \ldots, u_n \in \Sigma^+$, starting with $k = 1$, such that for all $1 \le k \le n$, $\delta(q_k, u_1 \cdots u_k) \notin S$.
	Here, $u_{k+1}$ can be chosen to be any word such that takes the automaton from the state $\delta(q_{k+1}, u_1\cdots u_k)$ to state outside $S$.
	Then the word $u = u_1\cdots u_n$ is such that for any $q \in S$, $\delta(q,u) \notin S$.
	By disjunctivity, $u$ occurs in $\alpha$.
	It follows that when reading $\alpha$ from any $q \in S$ the run will exit $S$.
	
	We prove (b) similarly. 
	Let $S$ be a bottom SCC consisting of the states $q_1,\ldots,q_n$, and $q \in S$.
	Inductively construct $u_1, \ldots, u_n \in \Sigma^+$ such that $\delta(q_k, u_1\cdots u_k) = q$ for all $k$.
	Here, $u_{k+1}$ can be chosen to be any word that takes the automaton from the state $\delta(q_{k+1}, u_1 \cdots u_{k})$ to the state $q$.
	It follows that whenever the word $u = u_1\cdots u_n$ is read, regardless of from which state in $S$, the state $q$ will be visited.
	By disjunctivity we have that every $q \in S$ is visited infinitely often.
\end{proof}

\subsection{Closure under transductions}
\label{sec::closure}

In this section, when we say a word $\alpha \in \Sigma^\omega$ is given, we mean that an oracle is given that computes $\alpha(n)$ on input~$n$.
Our first result is that automaton acceptance for an infinite word $\alpha = \Tcal(\beta)$, where $\Tcal$ is a transducer, can be reduced to automaton acceptance for $\beta$. 

\begin{lemma}\label{thm::transduction}
	Suppose we are given $\alpha \in \Sigma_1^\omega$, $\beta \in \Sigma_2^\omega$, an automaton $\Acal$ over $\Sigma_1$, and a transducer $\Tcal$ such that $\alpha = \Tcal(\beta)$.
	We can compute an automaton $\Bcal$ such that $\alpha \in L(\Acal) \Leftrightarrow \beta \in L(\Bcal)$.
\end{lemma}
\begin{proof}
	Write $\automaton = (Q, \qinit, \delta, \Fcal)$ and $\Tcal = (R, \rinit, \sigma)$.
	The automaton $\Bcal$ simulates what $\Tcal$ would do upon reading~$\alpha$, and furthermore, what $\Acal$ would do upon reading each output block of $\Tcal(\alpha)$.

	The set of states of $\Bcal$ is $Q' = J \times R$, where $J$ is the set of journeys in $\Acal$.
	Each state keeps track of the last journey made in $\Acal$ and the current state in~$\Tcal$.
	We further define $\qinit' = ((\qinit, \qinit, \emptyset), \rinit)$ and
	\[
	\delta'(((p, q, V), r), b) = (\journey(q, u), r')
	\]
	where $\sigma(r, b) = (r', u)$.
	By \Cref{journeys}, a state $s$ appears infinitely often in $\Acal(\alpha)$ if and only if a state $((p,q,V), r)$ with $s \in V$ appears infinitely often in $\Bcal(\beta)$.
	We therefore define $\Fcal'$ by
	\[
	F' \in \mathcal{F}' \iff \bigg(\bigcup_{((p, q, V), r) \in F'} V \: \bigg) \in \mathcal{F} 
	\]
	and $\Bcal = (Q', \qinit', \delta', \Fcal')$.
\end{proof}
\begin{corollary}\label{cor::transducer reduction}
	Let $\alpha \in \Sigma_1^\omega$, $\beta \in \Sigma_2^\omega$, and suppose that $\alpha = \Tcal(\beta)$ for a transducer $\Tcal$.
	Then $\Acc{\alpha}$ reduces to $\Acc{\beta}$.
\end{corollary}

We next give a generalisation of \Cref{thm::transduction} that we will be crucial throughout the paper.
Here we transduce from $\beta$ not into $\alpha$ but rather to ``compressed'' versions thereof, where the compression is performed by a morphism into a finite monoid. 

\begin{theorem}\label{thm::fancy-transduction-taylor's-version}
	Suppose we are given $\alpha \in \Sigma_1^\omega$, $\beta \in \Sigma_2^\omega$, an automaton $\Acal$ over $\Sigma_1$, and an oracle that, given a morphism $h \st \Sigma_1^* \to M$ into a finite monoid $M$, computes a transducer $\Tcal$ (with input alphabet $\Sigma_2$ and output alphabet $M$) with the following property:
	there exists a factorisation $\alpha = \prod_{n=0}^\infty w_n$ such that $\Tcal(\beta) = h(w_0) h(w_1) h(w_2) \cdots$.
	Then we can compute an automaton $\Bcal$ such that $\alpha \in L(\Acal) \Leftrightarrow \beta \in L(\Bcal)$.
\end{theorem}
\begin{proof}
	Let $J$ be the set of all journeys in $\Acal$, $M$ be the journey monoid of $\Acal$, $h \st \Sigma_1^* \to M$ be the morphism mapping each finite word to the equivalence class of its set of journeys in $\Acal$, $\Tcal$ be the corresponding transducer (given in the statement of \Cref{thm::fancy-transduction-taylor's-version}), and $\alpha = \prod_{n=0}^\infty w_n$ be the corresponding factorisation.
	Write $\Acal = (Q, \qinit, \delta, \Fcal)$ and $\Tcal = (R, \rinit, \sigma)$.
	We construct an automaton $\Mcal = (J, (\qinit, \qinit, \varnothing), \delta', \Fcal')$ over the alphabet $M$ as follows.
	The states simply record the last journey taken.
	For $(p,q,V) \in J$ and $x \in M$, we define $\delta'((p,q,V), x) = \journey(q, x)$.
	Finally, we define the acceptance condition by
	\[
	F' \in \mathcal{F}' \iff \bigg(\bigcup_{(q, r, V) \in F'} V \: \bigg) \in \mathcal{F}.
	\]
	Then $\Mcal(\Tcal(\beta))$ is factorisation of the run of $\Acal$ on $\alpha$ into journeys as in \Cref{journeys}, and $\alpha \in L(\Acal)  \Leftrightarrow \Tcal(\beta) \in L(\Mcal)$.
	Applying \Cref{thm::transduction} with $\beta$, $\Tcal(\beta)$ and $\Mcal$, we construct an automaton $\Bcal$ such that $\Tcal(\beta) \in L(\Mcal) \Leftrightarrow \beta \in L(\Bcal)$.
\end{proof}

\subsection{Sparse procyclic predicates}

We now discuss the main automata-theoretic tool that will be used \Cref{sec::lrs-with-one-dominant-root}.
The idea is that, when considering the automaton acceptance problem, we reduce from the characteristic word to its compressed version, called the \emph{order word}.
Let $P_1,\ldots,P_d$ be infinite predicates with the characteristic word $\alpha \in (\{0,1\}^d)^\omega$.
Order the elements of $P_i$ as $\seq{p^{(i)}_n}$.

\begin{definition}[Order Word]
	The order word of $P_1,\ldots,P_d$, written $\order{P_1, \dots, P_d}$, is the infinite word obtained by deleting all occurrences of $\zerovec$ from $\alpha$.
\end{definition}

In compressing the characteristic word $\alpha$ to the order word $\beta$, one only retains partial information, i.e.\ a particular aspect of the interaction between predicates. 
Not surprisingly, by an immediate application of \Cref{thm::transduction},  given an automaton $\Bcal$ we can construct an automaton~$\Acal$ such that $\alpha \in L(\Acal) \Leftrightarrow \beta \in L(\Bcal)$ for any characteristic word $\alpha$.
In particular, $\Acc{\beta}$ reduces to $\Acc{\alpha}$.
Remarkably however, under certain circumstances, the order word captures all the information we need to decide automaton acceptance, and we can perform the  reverse reduction.
To see this, write
\begin{equation*}
	\alpha = \mathbf{0}^{k_0}\beta(0)\cdots\mathbf{0}^{k_n}\beta(n)\cdots
\end{equation*}
and suppose we want to decide whether $\alpha \in L(\Acal)$.
The automaton $\Acal$ is finite, and consequently one can compute $L, p > 0$ such that for all $n \ge L$, it cannot distinguish $\zerovec^{n}$ from~$\zerovec^{n+p}$.
Provided that $k_n$ is persistently larger than $L$, it suffices to only keep track of $k_n$ modulo $p$.
That is, we can construct an automaton $\Acal'$ such that $\alpha \in L(\Acal) \Leftrightarrow \alpha' \in L(\Acal')$ where $\alpha' = \zerovec^{m_0} \beta(0) \cdots \zerovec^{m_n}\beta(n)\cdots$ and each $m_n$ is small but indistinguishable from $k_n$ by $\Acal$.
We can then hope to insert the sequence $\seq{\zerovec^{m_n}}$ into $\beta$ using a transducer $\Ccal$, i.e., $\alpha' = \Ccal(\beta)$.
Then by \Cref{thm::transduction}, deciding whether $\alpha' \in L(\Acal')$ reduces to deciding whether $\beta \in L(\Bcal)$ for an automaton $\Bcal$ computed from $\Acal'$ and $\Ccal$.
We next formalise this argument.
We say that a sequence $\seq{u_n}$ of integers is \emph{effectively procyclic} if, given $m \ge 1$, we can compute $N, p$ such that $u_{n+p} \equiv u_n \, (\bmod \, m)$ for all $n \ge N$.

\begin{definition}[Effectively Procyclic Predicates]
	\label{procyclicpredicate}
	The predicates $P_1, \ldots, P_d$ are effectively procyclic if each $\seq{p^{(i)}_n}$ is effectively procyclic.
\end{definition}

We briefly argue that for a single predicate $P$, the notions of being effectively procyclic and having an effectively profinitely ultimately periodic characteristic word are incomparable.
Firstly, let $\alpha \in \{0,1\}^\omega$ be such that $\alpha(n)$ can be computed given $n$, but $\Acc{\alpha}$ is undecidable.
For example, we can take $\alpha = \prod_{i=0}^\infty v_{\sigma(i)}$ where $\sigma \st \nat \to (\nat_{>0})^2$ enumerates all possible pairs $(t, k)$ consisting of a Turing machine $t$ (itself represented by a positive integer; we assume the initial and halting states are always distinct) and a positive integer $k$, and
\[
v_{\sigma(i)} = 
\begin{cases}
	0^k 1^t \textrm{ if $t$ halts on $\varepsilon$ exactly after $k$ steps,}\\
	\varepsilon, \textrm{ otherwise.}
\end{cases}
\]
Then the Turing machine $t \in \nat_{>0}$ halts on empty input if and only if $10^t1$ occurs in $\alpha$.
Note that in this example, in fact, the first-order theory of $\langle \nat; <, \alpha\rangle$ is undecidable.

By undecidability of $\Acc{\alpha}$, both $0$ and $1$ must occur infinitely often in $\alpha$.
Construct $\beta$ by replacing in $\alpha$ each 0 by $10$ and the $k$th occurrence of $1$ by $1 0^{k! + 1}$.
Let $P = \{n \st \beta(n) = 1\}$, which is effectively procyclic by construction.
However, since $\alpha = \Bcal(\beta)$ for a transducer, by \Cref{thm::transduction}, $\Acc{\beta}$ and hence the MSO theory of $\langle \nat; <, P \rangle$ must be undecidable.
Hence $\beta$ cannot be effectively profinitely ultimately periodic by \Cref{pup-decidable}.

Now suppose $\alpha \in \{0,1\}^\omega$ is a non-periodic word such that $\Acc{\alpha}$ is decidable, e.g., the Thue-Morse word.
Let $\beta = 0^{\alpha(0)} 1 0^{\alpha(1)}1 0^{\alpha(2)} 1\cdots$, $P = \{n \in \nat \st \beta(n) = 1\}$, and $\seq{p_n}$ be the ordering of $P$.
Observe that $p_{n+1} - p_n \in \{1,2\}$ for all $n$.
By construction, for all $m \ge 2$, $\seq{p_n \bmod m}$ is not ultimately periodic, which implies that $P$ is not procyclic.
However, by a transduction argument, $\Acc{\alpha}$ is Turing-equivalent to $\Acc{\beta}$.
Therefore, $\Acc{\beta}$ is decidable and hence $\beta$ is effectively profinitely ultimately periodic.

\begin{definition}[Effectively Sparse Predicates]
	\label{sparsepredicate}
	The predicates $P_1,\ldots,P_d$ are effectively sparse if for every $i,j$ and $K$, $0 < |p^{(i)}_n - p^{(j)}_m| < K$ has only finitely many solutions that can be effectively determined.
\end{definition}

This is equivalent to effective divergence of $k_n$ to $+\infty$: for every $K$, we can compute $N$ such that for all $n \ge N$, $k_n \ge K$.
We are now ready to state our reduction from characteristic words to order words.
We assume that effectively procyclic and effectively sparse predicates $P_1,\ldots,P_d$ are given by oracles to check whether $n \in P_i$ for all $n \in \nat$ and $1 \le i \le d$, as well as the oracles described in the definitions above.

\begin{theorem}
	\label{thm::sparse-procyclic-reduction-to-order-word}
	There exists an algorithm that takes an automaton $\Acal$ and effectively procyclic and effectively sparse predicates $P_1,\ldots,P_d$, and outputs an automaton $\Bcal$ such that $\alpha \in L(\Acal) \Leftrightarrow \beta \in L(\Bcal)$, where $\alpha$ is the characteristic word of $P_1,\ldots,P_d$ and $\beta$ is the corresponding order word.
\end{theorem}
\begin{proof}
	We factorise $\alpha = \prod_{n=0}^\infty w_n$, where $w_n = \zerovec^{k_n} \beta(n)$.
	By \Cref{thm::fancy-transduction-taylor's-version} it suffices to show how to construct, given a morphism $h \st (\{0,1\}^d)^* \to M$, a transducer $\Tcal$ such that $\Tcal(\beta) = h(w_0) h(w_1) h(w_2) \cdots$.
	By the classical lasso argument, from $M$ we can construct $L, m \ge 1$ such that for any $x \in M$ and $k \ge L + m$, $x^k = x^{L + (k-L) \bmod m}$.
	From effective sparsity, we can compute $N$ such that for all $n \ge N$, $k_n > L$.
	
	Define, for all $n \ge 0$ and $1 \le i \le d$,
	\begin{itemize}
		\item $t_n = \sum_{i=0}^{n} (k_i + 1)$, which is equal to $|w_0| + \cdots + |w_n|$,
		\item $l^{(i)}_n =  \max \{t_k \le t_n \st \beta(k) = (b_1,\ldots,b_d), b_i = 1\} - 1$, i.e.\ the largest $j \le t_n$ such that $j \in P_i$, and
		\item $s^{(i)}_n =  \min \{t_k > t_n \st \beta(k) = (b_1,\ldots,b_d), b_i = 1\} - 1$, i.e.\ the smallest $j > t_n$ such that $j \in P_i$.
	\end{itemize}
	Since $P_1,\ldots,P_d$ are effectively procyclic, $(s^{(i)}_n - l^{(i)}_n)$ is periodic modulo every $m$ with a computable period and pre-period.
	
	The transducer $\Tcal$ has the values of $h(w_0), \ldots, h(w_N)$ hard-coded into it.
	Before reading $\beta(n+1)$ for $n \ge N$, it has in memory $t_n \bmod m$ for each $1 \le i \le d$, as well as the values of $l^{(i)}_n \bmod m$ and $\big( s^{(i)}_n - l^{(i)}_n \big) \bmod m$, the latter via effective procyclicity.
	Upon reading $\beta(n+1) = (b_1,\ldots,b_d)$, it first determines some $i$ such that $b_i = 1$.
	At this point we have that $s^{(i)}_n = t_{n+1} - 1$
	and hence $t_{n+1} = l^{(i)}_n + (s^{(i)}_n - l^{(i)}_n) + 1$.
    Therefore, the transducer can compute  $t_{n+1} \bmod m$ using the values it has in memory.
    Next, it computes $k_{n+1} \bmod m$ using the relation $k_{n+1} = t_{n+1}-t_n-1$.
	Since $k_{n+1} \ge L + m$, we have that
	\[
	h(w_n) = (h(\zerovec))^{L + (k_{n+1} - L) \bmod m} \cdot h(\beta(n))
	\]
	which can now be computed and output by $\Tcal$.
	Finally, the transducer computes the values of $l^{(i)}_{n+1} \bmod m$ as well as $s^{(i)}_{n+1} - l^{(i)}_{n+1} \bmod m$ for all $i$, by checking whether $b_i = 1$ or $b_i = 0$ and using the value of $k_{n+1} \bmod m$.
\end{proof}

\begin{corollary}\label{cor::sparse-char-eq-order}
	Let $P_1,\ldots,P_d$ be effectively procyclic and effectively sparse predicates with the characteristic word $\alpha$ and the order word $\beta$.
	The problems $\Acc{\alpha}$ and $\Acc{\beta}$ are Turing-equivalent.
\end{corollary}
\begin{proof}
	That $\Acc{\alpha}$ reduces to $\Acc{\beta}$ follows from \Cref{thm::sparse-procyclic-reduction-to-order-word}, and that $\Acc{\beta}$ reduces to $\Acc{\alpha}$ is true for any predicates $P_1,\ldots,P_d$ as discussed earlier.
\end{proof}

\subsection{Closure under compositions}
\label{sec::compositions}

We now study predicates linked by function composition.
\begin{theorem}\label{thm::chains}
	Let $g_1,\ldots,g_d\st\nat\to\nat$ be effectively contractive functions and $f_i = g_1 \circ \cdots \circ g_i$ for $1 \le i \le d$.
	The MSO theory of $\langle \nat; <, P_1,\ldots,P_d\rangle$, where $P_i = \{f_i(n) \st n \in \nat\}$, is decidable.
\end{theorem}
\begin{proof}
	Observe that $f_1(n) = g_1(n)$, $f_2(n) = g_1(g_2(n))$, $f_3(n) = g_1(g_2(g_3(n)))$, and so on.
	Hence $P_d \subseteq \cdots \subseteq P_1$.
	
	We proceed by induction.
	If $d = 1$, then decidability follows from \Cref{cor::epup-functions}.
	Next, consider $g_1,\ldots,g_d$ for some $d > 1$.
	Let $\Sigma = (\{0,1\})^{d}$,  $\alpha \in \Sigma^\omega$ be the characteristic word of $\langle \nat; <, P_1,\ldots,P_d\rangle$, and $\beta \in \Sigma^\omega$ be the order word.
	Further let $h_i = g_2 \circ \cdots \circ g_i$ for $2 \le i \le d$, $Q_i = \{h_i(n) \st n \in \nat\}$, and $\sigma$ be the characteristic word of $\langle \nat; <, Q_2,\ldots,Q_d\rangle$.
	By the induction hypothesis, $\Acc{\sigma}$ is decidable.
	We will show how to transduce from $\sigma$ into (any compressed version, as appropriate, of) $\alpha$.
	The decidability then follows from \Cref{thm::fancy-transduction-taylor's-version}.
	
	Let $h \st \Sigma^* \to M$ be a morphism into a finite monoid $M$.
	For $n \in \nat$, write $\beta(n) = (b_{n,1}, \ldots, b_{n,d})$ and $\sigma_n = (s_{n,2},\ldots,s_{n,d})$.
	Observe that $b_{n,k} = s_{n,k}$ for all $2 \le k\le d$.
	We factorise $\alpha = \prod_{n=0}^\infty w_n$ where $w_n = \zerovec^{k_n} \beta(n)$, $k_0 = g_1(0)$, and $k_{n} = g(n) - g(n-1) -1$ for all $n \ge 1$.
	By the assumption on $g_1 = f_1$, $\seq{k_n}$ is effectively profinitely ultimately periodic.
	In particular, a transducer can compute $\seq{h(\zerovec^{k_n})}$ at the $n$th step.
	Therefore, we can construct a transducer $\Tcal$ that, upon reading $\sigma(n)$, outputs $h(\zerovec^{k_n}) h((1,s_{n,2},\ldots,s_{n,d}))$.
	Then $\Tcal(\sigma) = h(w_0)h(w_1)\cdots$.
\end{proof}

We will not use \Cref{thm::chains} to prove any of our main theorems.
However, we believe that it is of independent interest, as it gives us other new decidability results: for example, if we take $g_1(n) = n_2$, $g_2(n) = 2^n$, and $g_3(n) = 3^n$, we obtain that the MSO theory of $\langle \nat; <, \kthpowers{2}, \powersofk{4}, \{2^{2\cdot 3^n} \st n \in \nat\}\rangle$ is decidable.
By a similar argument, we obtain that the MSO theory of 
$\langle \nat; <, \{(k^d)^n \st n\in\nat\}, \kthpowers{d}\rangle$ is decidable for any integers $k,d \ge 2$, which also follows from our second main result (see \Cref{thm::main-2-cor-3}).

\section{Linear recurrence sequences with a single non-repeated dominant root}

\label{sec::lrs-with-one-dominant-root}

\subsection{Overview}
\label{sec::4-1}

By the\emph{value set} of an integer sequence $\seq{u_n}$ we mean the set $\{u_n \ge 0 \st n \in \nat\}$.
In this section, we consider integer LRS $\seq{u^{(i)}_n}$ with respective value sets $P_i \subseteq \nat$ for $1 \le i \le d$ that satisfy the following condition.

\begin{itemize}
	\item[(A)] Each $\seq{u^{(i)}_n}$ has the exponential polynomial representation
	\[
	u^{(i)}_n = c_i \rho_i^n + \sum_{j=1}^{k_i} p_{i,j}(n)\rho_{i,j}^n
	\]
	and the dominant part $\seq{c_i \rho_i^n}$, where $c_i > 0$ and $\rho_i > 1$.
	That is, $\seq{u^{(i)}_n}$ is non-constant, has the single non-repeated dominant root $\rho_i$, and $P_i$ is infinite.
\end{itemize}
Note that if $\rho_i = 1$, because $\seq{u^{(i)}_n}$ takes integer values, $P_i$ must be finite.
The same conclusion holds if $c_i < 0$.
Such~$P_i$ can be defined in the structure $\langle \nat; < \rangle$, and are not of interest to us.
Our main decision problem is the following.

\begin{problem}
	\label{problem-mso-1-dom-root}
	Given $\seq{u^{(1)}_n},\ldots, \seq{u^{(d)}_n}$ as above and an MSO formula $\varphi$, decide whether $\langle \nat; <, P_1, \ldots, P_d \rangle \models \varphi$.
\end{problem}

When $d=1$, decidability of Problem~\ref{problem-mso-1-dom-root} can be shown using the contraction method of Elgot and Rabin.
(The decidability also follows from the main results in this section.)
On the other hand, it is easily shown that Problem~\ref{problem-mso-1-dom-root}, even in case of $d=2$, subsumes many major open decision problems about integer LRS.

\begin{lemma}
	\label{thm::sign-pattern-to-prob-1}
	Let $\Acal$ be an automaton over $\Sigma \coloneqq \{-,0,+\}$ and $\seq{v_n}$ an integer LRS with sign pattern $\sigma \in \Sigma^\omega$, where $\sigma(n)$ is defined by the sign of $v_n$.
	We can construct an automaton $\Bcal$ and integer LRS $\seq{u^{(1)}_n}, \seq{u^{(2)}_n}$ satisfying condition~(A) with respective value sets $P_1,P_2$ such that $\sigma \in L(\Acal) \Leftrightarrow \tau \in L(\Bcal)$, where $\tau$ is the characteristic word of $P_1,P_2$.
\end{lemma}
\begin{proof}
	Let $\rho \in \nat_{>1}$ be such that $|v_n| = o(\rho^n)$ (see \Cref{lem:growth LRS}). 
	Construct $\varepsilon \in (0,1)$ such that $\frac 1 \rho (1+\varepsilon) < (1-\varepsilon) < (1+\varepsilon) < \rho$, and $C \in \nat_{>0}$ such that $|v_n| < \varepsilon \cdot C\rho^n$ for all $n$. 
	Define $u^{(1)}_n = C \rho^n$ and $u^{(2)}_n = C\rho^n + v_n$.
	We give a transducer~$\Gamma$ such that $\Gamma(\tau) = \sigma$, and then invoke \Cref{thm::transduction}.

	By construction of $C, \rho$ and $\varepsilon$, for all $n \ge 0$, $u^{(2)}_n$ is the unique term in the sequence $\seq{u^{(2)}_n}$ belonging to the interval $((1-\varepsilon) \cdot u^{(1)}_n, (1+\varepsilon) \cdot u^{(1)}_n)$.
	Moreover, by construction of $\varepsilon$ these intervals are disjoint and ordered according to $n$.
	Hence the order word $\beta$ of $P_1,P_2$ can be uniquely factorised as $\beta = w_0 w_1 w_2 \cdots$ where each $w_n$ is either $(1,1)$ (when $v_n = 0$), or $(0,1)(1,0)$ (when $v_n > 0$), or $(1,0)(0,1)$ (when $v_n < 0$).
	The transducer $\Gamma$ simply discards all occurrences of $\zerovec$ from $\tau$, performs the factorisation above, and outputs the corresponding sign $-$, $0$ or $+$ for each factor $w_n$.
\end{proof}

By the MSO theory of the sign pattern of an LRS $\seq{v_n}$ we mean the MSO theory of the structure $\langle \nat; <, P_-, P_0, P_+\rangle$ where $P_-, P_0, P_+$ partition $\nat$ and the unique predicate $P \in \{P_-, P_0, P_+\}$ containing $n$ is determined by the sign of $v_n$ (i.e., $n \in P_-$ if and only if $v_n < 0$).
By querying this structure using MSO formulas, we can ask whether $v_n = 0$ for some $n$, whether $v_n \ge 0$ for all (sufficiently large) $n$, etc.
The two corollaries of \Cref{thm::sign-pattern-to-prob-1} below follow immediately from B\"uchi's theorem (\Cref{thm:MSOautomaton}) and the definitions of Skolem and other problems of LRS (\Cref{sec::lrs}).

\begin{corollary}
	Given an integer LRS $\seq{v_n}$ and an MSO formula $\varphi$, we can construct integer LRS $\seq{u^{(1)}_n}, \seq{u^{(2)}_n}$ and an MSO formula $\psi$ such that $\langle \nat; <, P_-, P_0, P_+\rangle \models \varphi \Leftrightarrow \langle \nat; <, P_1, P_2 \rangle \models \psi$.
\end{corollary}

\begin{corollary}
	\label{thm::pos-hardness}
	The Skolem Problem, the Positivity Problem, and the Ultimate Positivity Problem reduce to Problem 1.
\end{corollary}
These results establish that Problem 1 is hard because is subsumes the problem of model checking sign patterns of arbitrary integer LRS.
It is natural to ask: in the context of sign patterns, how hard  is full model checking in comparison to just, for example, the Skolem Problem?
Surprisingly, the problem of deciding, given an MSO formula $\varphi$ and a \emph{diagonalisable} integer LRS $\seq{v_n}$,  whether $\langle \nat; <, P_-, P_0, P_+\rangle \models \varphi$ is Turing-equivalent to the Positivity Problem for \emph{diagonalisable} integer LRS; see \cite[Thm.~12]{karimov2023power} and \cite[Lem.~7.0.2]{karimov2023algorithmic}.
For non-diagonalisable LRS, no such result is known.

Examining the construction of \Cref{thm::sign-pattern-to-prob-1}, it is not difficult to see that all three problems, in fact, reduce to the restriction of Problem~\ref{problem-mso-1-dom-root} to first-order formulas $\varphi$.
By a similar argument, the problem of deciding, given two integer LRS $\seq{v_n}$ and $\seq{w_n}$, whether $v_n = w_m$ for some $n,m$ also reduces to Problem 1 with first-order $\varphi$ and $d = 2$.
The latter is another problem about LRS that is wide-open at the moment.

Given the hardness of Problem~\ref{problem-mso-1-dom-root}, we will further restrict the linear recurrence sequences that we consider.
We will be interested in the following conditions in addition to (A).
\begin{itemize}
	\item[(B)] For any $i \ne j$, $c_i\rho_i^n = c_j \rho_j^m$ holds for finitely many $(n,m) \in \nat^2$.
	Equivalently, for every $i \ne j$, the equation $c_i\rho_i^n = c_j\rho_j^m$ has at most one solution $(n,m)$ in~$\nat^2$.
	\item[(C)] The numbers $1/\log(\rho_1),\ldots,1/\log(\rho_d)$ are linearly independent over $\rat$.
\end{itemize}
Observe that (C) implies that $\rho_i, \rho_j$ are multiplicatively independent for every $i \ne j$, which implies (B).
Our main result is the following.

\begin{theorem}
	\label{powers-main-structure-parametrised}
	Problem 1 is decidable for $\seq{u^{(1)}_n},\ldots, \seq{u^{(d)}_n}$ that satisfy the conditions (A) and (C).
\end{theorem}

To see how (C) can be applied in practice, recall from \Cref{lem::rank-d-2-to-LI} that if $\operatorname{rank}(G_M(\rho_1,\ldots,\rho_d)) \ge d-2$ (in particular, if $d\le 2$) and $\rho_1,\ldots,\rho_d$ are pairwise multiplicatively independent, then (C) is satisfied.
Our next main result is that if we assume Schanuel's conjecture, then we can replace the condition (C) with the weaker condition (B).

\begin{theorem}
	\label{thm::main-schanuel}
	Assuming Schanuel's conjecture, Problem~\ref{problem-mso-1-dom-root} is decidable for integer LRS satisfying conditions (A) and (B).
	Moreover, the decision procedure relies on Schanuel's conjecture only for termination.
\end{theorem}

The second part of the statement of \Cref{thm::main-schanuel} means that termination of the decision procedure is guaranteed by Schanuel's conjecture, and whenever the procedure terminates, its output is unconditionally guaranteed to be correct.
As a corollary of \Cref{thm::main-schanuel}, we obtain the following theorem from the Introduction.

\begin{theorem}
	\label{thm::main-schanuel-weak}
	Suppose we are given an MSO formula $\varphi$ and integers $a_1,\rho_1,\ldots,a_d,\rho_d \ge 1$.
	Assuming Schanuel's conjecture, it is decidable whether $\varphi$ holds in $\langle \nat; <, P_1,\ldots,P_d \rangle$, where $P_i = \{a_i\rho_i^n \st n\in\nat\}$.
\end{theorem}

\begin{proof}
	The idea is that, for a collection of LRS of the form $u_n = a \rho^n$, we  get the condition (B) for free by taking subsequences.
	Observe that if $a_i \rho_i^n = a_j \rho_j^m$ has more that one solution in $n,m$, then $\rho_i^b = \rho_j^c$ for some integers $b,c > 0$.
	Therefore, from $\rho_1,\ldots,\rho_d$ we can compute $k \ge 1$ such that, writing $b_i = a_i \rho_i^r$, $\mu_i = \rho_i^k$, and
	\[
	Q_{i,r} = \{a_i \rho_i^{kn+r} \st n \in \nat\} = \{b_i \mu_i^n \st n \in \nat\}
	\] for $1 \le i \le d$ and $0 \le r < k$, we have the following.
	For all $i, i', r, r'$, either $Q_{i,r} = Q_{i',r'}$ or $Q_{i,r} \cap Q_{i',r'}$ is finite. 
	Let $\widetilde{Q}_1, \ldots, \widetilde{Q}_m$ be all the distinct predicates from $\{Q_{i,r} \st 1 \le i \le d, \, 0 \le r < k\}$.
	Then $\widetilde{Q}_i \cap \widetilde{Q}_j$ is finite for all $i \ne j$.
	Since $P_i = Q_{i,0} \cup \cdots \cup Q_{i,k-1}$, we can construct an MSO formula $\psi$ that holds in $\langle \nat; <, \widetilde{Q}_1, \ldots, \widetilde{Q}_m\rangle$ if and only if $\varphi$ holds in $\langle \nat; <, P_1,\ldots,P_d \rangle$.
	It remains to invoke \Cref{thm::main-schanuel}.
\end{proof}

What about decidability of the individual MSO theories $\langle \nat; <, P_1,\ldots,P_d\rangle$?
That is, what happens if we fix the sequences $\seq{u^{(1)}_n},\ldots, \seq{u^{(d)}_n}$, and allow only the MSO formula $\varphi$ to be given as input?
What information about the specific LRS do we need to decide the MSO theory?
The answer is given in \Cref{thm::main-ideal-version-consequence}, which is based on the following theorem.
In short, in this setting too we can replace (C) with the weaker condition (B).

\begin{theorem}
	\label{thm::main-ideal-version}
	Given integer LRS $\seq{u^{(1)}_n},\ldots, \seq{u^{(d)}_n}$ satisfying conditions (A) and (B), the ideal of all polynomial relations between $\log(c_1),\log(\rho_1),\ldots,\log(c_d),\log(\rho_d)$, a basis of $G_L(1/\log(\rho_1),\ldots,1/\log(\rho_d))$, and an MSO formula $\varphi$, we can decide whether $\langle \nat; <, P_1,\ldots,P_d\rangle \models \varphi$.
\end{theorem}
We note that a basis for $G_L(1/\log(\rho_1),\ldots,1/\log(\rho_d))$ can be extracted from the aforementioned ideal by using, for example, Gr\"obner bases and modifying the algorithm for deciding whether a given polynomial occurs in an ideal \cite{Cox2025}: observe that for all $k_1,\ldots,k_d \in \intg$, $\sum_{i=1}^d \frac{k_i}{\log(\rho_i)} = 0$ if and only if $\sum_{i=1}^d k_i \prod_{j \ne i} z_j = 0$.
In our formulation, we chose emphasise that there are two distinct sources of non-uniformity.
\begin{corollary}
	\label{thm::main-ideal-version-consequence}
	Let $\seq{u^{(1)}_n},\ldots, \seq{u^{(d)}_n}$ be integer LRS satisfying conditions (A) and (B).
	Then the MSO theory of $\langle \nat; <, P_1,\ldots,P_d\rangle$ is decidable.
\end{corollary}
\begin{proof}
	The Turing machine for deciding the MSO theory of $\langle \nat; <, P_1,\ldots,P_d\rangle$ simply has the ideal of all polynomial relations between $\log(c_1),\log(\rho_1),\ldots,\log(c_d),\log(\rho_d)$, which has a finite representation thanks to Hilbert basis theorem, hard-coded into it.
	This representation can be effectively operated on using, for example, Gr\"obner bases \cite{Cox2025}.
\end{proof}

From \Cref{thm::main-ideal-version} we also obtain the following result stated in the Introduction.
\begin{theorem}
	\label{thm::main-quirky}
	For any integers $a_1,\rho_1,\ldots,a_d,\rho_d \ge 1$, there exists an algorithm that, given an MSO formula $\varphi$, decides whether $\varphi$ holds in $\langle \nat; <, P_1,\ldots,P_d \rangle$, where $P_i = \{a_i\rho_i^n \st n\in\nat\}$.
\end{theorem}

\begin{proof}
	Construct $k \ge 1$ and $\widetilde{Q}_1, \ldots, \widetilde{Q}_m$ as in the proof of \Cref{thm::main-schanuel-weak}.
	By \Cref{thm::main-ideal-version-consequence}, the MSO theory of $\langle \nat; <, \widetilde{Q}_1, \ldots, \widetilde{Q}_m\rangle$ is decidable.
	Given $\varphi$, construct $\psi$ as in the proof of \Cref{thm::main-schanuel-weak}.
	We have that $\varphi$ holds in $\langle\nat; <, P_1,\ldots,P_d\rangle$ if and only if $\psi$ holds in $\langle \nat; <, \widetilde{Q}_1, \ldots, \widetilde{Q}_m\rangle$.
	It remains to check whether the latter is the case.
\end{proof}

We mention that the first part of the statement of \Cref{thm::main-schanuel} immediately follows from \Cref{thm::main-ideal-version}: Schanuel's conjecture tells us that all polynomial relations between $\log(c_1),\log(\rho_1),\ldots,\log(c_d),\log(\rho_d)$ come from multiplicative relations between $c_1,\rho_1,\ldots,c_d,\rho_d$, thus explicitly giving us the required polynomial ideal.
Making sure that Schanuel's conjecture is only needed for termination and not for correctness is non-trivial.
(This issue is explored in some detail in \cite{karimov2025algorithmic}.)
However, no unconditional method is known for computing all polynomial relations between logarithms of algebraic numbers.
Hence, for example, the MSO theory of $\langle \nat; <, {a_1}^\nat, \ldots, {a_d}^\nat \rangle$ is decidable for any positive integers $a_1,\ldots,a_d$, but with a caveat: we know that a decision procedure exists, but to write it down we need to solve a difficult problem about polynomial relations between logarithms of algebraic numbers.

Can we drop the condition (B) in the statement of \Cref{thm::main-ideal-version-consequence}?
If we could, then applying \Cref{thm::sign-pattern-to-prob-1} we would obtain that the MSO theory of the sign pattern of any LRS $\seq{v_n}$ is decidable.
For diagonalisable LRS, this is indeed the case, but with a caveat similar to the above: the sign pattern of a diagonalisable LRS has a toric suffix \cite[Sec.~3]{almagor2021deciding}, and hence a decidable MSO theory~\cite{berthe2023monadic}.
However, no method is known for determining this suffix efficiently.
For non-diagonalisable LRS, essentially nothing is known about the decidability of the MSO theory of the sign pattern.

We will next prove our main theorems (\Cref{powers-main-structure-parametrised}, \Cref{thm::main-schanuel}, \Cref{thm::main-ideal-version}) together.
In the remainder of \Cref{sec::lrs-with-one-dominant-root}, we denote by $P_1,\ldots,P_d$ the respective value sets of integer linear recurrence sequences $\seq{u^{(1)}_n},\ldots, \seq{u^{(d)}_n}$ satisfying the conditions (A-B), by $\varphi$ an MSO formula and by $\Acal$ the corresponding automaton, by $\alpha$ the characteristic word of $P_1,\ldots,P_d$, and by $\beta$ the order word of $P_1,\ldots,P_d$.
Recall from \Cref{thm:MSOautomaton} that deciding whether $\langle \nat; <, P_1,\ldots,P_d\rangle \models \varphi$ is equivalent to deciding whether $\alpha \in L(\Acal)$.

\subsection{From characteristic words to order words}

In this section, we fix LRS $\seq{u^{(1)}_n},\ldots, \seq{u^{(d)}_n}$ satisfying the conditions (A-B) and use the notation above. 
Our approach to deciding whether an automaton $\Acal$ accepts $\alpha$ is to first reduce this to the problem of checking whether an automaton $\Bcal$ accepts the order word $\beta$.
We will do this by showing that the predicates $P_1,\ldots,P_d$ are effectively procyclic and effectively sparse, and then invoking \Cref{thm::sparse-procyclic-reduction-to-order-word}.
Thereafter, we will reduce from the order word to a uniformly recurrent word obtained by ordering the dominant parts of our recurrence sequences.
For this we will need order words obtained from sequences that do not
necessarily take integer values.

\begin{definition}
	\label{def::order-word-for-sequences}
	Let $f_1(n),\ldots,f_d(n) \st \nat \to \rel$ be strictly increasing functions such that $f_i(n) \to +\infty$ as $n \to \infty$ for all~$i$.
	Write $T = \bigcup_{i=1}^d \{f_i(n) \st n \in \nat\}$, and order the terms of $T$ as $\seq{t_n}$.
	The $n$th letter $(b_1,\ldots,b_d)$ of the order word
	\[
	\gamma \coloneqq \operatorname{Ord}(f_1(n), \ldots, f_d(n)) \in (\{0,1\}^d)^\omega
	\]
	 is defined by $b_i = 1 \Leftrightarrow t_n = f_i(m)$ for some $m$.
\end{definition}

The following two lemmas contain the main number-theoretic arguments that we will need.

\begin{lemma}
	\label{lem::lrs-to-exp-baker}
	We can compute $N \ge 0$ with the following properties.
	For $1 \le i \le d$, let $m_i$ be the smallest integer with the property that $u^{(i)}_{m_i} \ge N$.
	\begin{enumerate}
		\item[(a)] For all $i$, $\seq{u^{(i)}_{m_i+n}}$ is strictly increasing.
		\item[(b)] For all $i \ne j$ and $n_i,n_j$ such that $n_i \ge m_i$, we have that $c_i\rho_i^{n_i} \ne c_j\rho_j^{n_j}$, $u^{(i)}_{n_i} \ne u^{(j)}_{n_j}$, and $c_i\rho_i^{n_i} > c_j\rho_j^{n_j} \Leftrightarrow u^{(i)}_{n_i} > u^{(j)}_{n_j}$.
		\item[(c)] For all $i \ne j$ and $n_i \ge m_i$, there exists $n_j \ge m_j$ such that $c_i \rho_i^{n_i} \le c_j \rho_j^{n_j} < \rho_j \cdot c_i \rho_i^{n_i}$.
	\end{enumerate}
\end{lemma}
\begin{proof}
	Denote by $\seq{v^{(i)}_n}$ the non-dominant part of $\seq{u^{(i)}_n}$, and recall that $|v^{(i)}_n| = o((\rho_i^{1-\varepsilon})^n)$ for all sufficiently small $\varepsilon > 0$ (where the implied constant is effective due to \Cref{lem:growth LRS}).
	For all $i$, we have that
	\begin{equation}
		\label{eq::baker-application-1}
		u^{(i)}_{n+1} - u^{(i)}_n = c_i(\rho_i-1)\rho_i^n + v^{(i)}_{n+1} - v^{(i)}_{n}.
	\end{equation}
	Hence we can compute $N_i$ such that for all $m$ with $u^{(i)}_m \ge N_i$, the sequence $\seq{u^{(i)}_{m+n}}$ is strictly increasing.
	We then take $\widetilde{N}_1 = \max_i N_i$.

	We move on to (b) and (c).
	Compute for every pair $i \ne j$, an integer $N_{i,j}$ such that $c_i \rho_i^n \ne c_j \rho_j^m$ for all $n,m$ satisfying $u^{(i)}_n \ge N_{i,j}$.
	To do this, we first compute a basis $X$ of $G_M(\rho_i, \rho_j, c_i/c_j)$ (see \Cref{sec::alg-numbers}).
	If $X = \{(0,0,0)\}$, then $c_i \rho_i^n = c_j \rho_j^m$ has no non-trivial solution in $n,m \in \intg$, and we can take $N_{i,j} = 1$.
	Otherwise, by assumption~(B), we will have $X = \{(k_1,k_2,k_3)\}$ with $k_3\ne 0$.
	If $|k_3| = 1$, then $c_i \rho_i^n = c_j \rho_j^m$ has a unique solution in $n,m \in \intg$, and we can compute $N_{i,j}$ accordingly.
	Otherwise, there are no solutions and we can take $N_{i,j} = 1$.
	We then define $\widetilde{N}_2 = \max \, \{\widetilde{N}_1, \max_{i, j} N_{i,j} \}$.

	Next, applying \Cref{thm::baker-sum-of-two-powers}, we have that for all $i \ne j$ and $n,m \in \nat$ with $\rho_i \ge \rho_j$ and $u^{(i)}_n \ge \widetilde{N}_2$,
	\begin{equation}
		\label{eq::baker-application-1.5}
		|c_i \rho_i^n - c_j \rho_j^m| = |c_j \rho_j^m - c_i \rho_i^n| > \frac{\rho_i^n}{(n+2)^C}
	\end{equation}
	for a computable constant $C$.
	Since
	\begin{equation}
		\label{eq::baker-application-2}
		u^{(i)}_n - u^{(j)}_m = c_i \rho_i^n - c_j \rho_j^m + v^{(i)}_n - v^{(j)}_m
	\end{equation}
	we can compute $M_{i,j} \ge \widetilde{N}_2$ such that, assuming $u^{(i)}_n \ge M_{i,j}$, $u^{(i)}_n - u^{(j)}_m$ is non-zero and has the same sign as $c_i \rho_i^n - c_j \rho_j^m$.
	Let $\widetilde{N}_3 = \max_{i,j} M_{i,j}$.
	
	Finally, we choose $N > \widetilde{N}_3$ such that for all $1 \le i \le d$, there exists $n$ satisfying $N_3 \le u^{(i)}_n < N$.
	Then (a-b) are satisfied since $N \ge \widetilde{N}_3, \widetilde{N}_2$.
	To see that~(c) is satisfied, consider $i \ne j$ and $n_i \ge m_i$.
	Let $n_j$ be the unique \emph{integer} with the property that $c_j \rho^{n_j} \in [c_i \rho_i^{n_i}, \rho_j \cdot c_i \rho_i^{n_i})$.
	We have to argue that $n_j \in \nat$.
	By construction of $N$, there exists $k_j$ such that $N_3 \le u^{(j)}_{k_j} \le N$.
	Because $u^{(j)}_{k_j} \ge \widetilde{N}_3$, we have that $u^{(j)}_{k_j} - u^{(j)}_{n_j}$ has the same sign as $c_j \rho_j^{k_j} - c_j \rho_j^{n_j}$, and the latter is non-zero.
	Since $u^{(j)}_{n_j} > u^{(j)}_{k_j}$, it follows that $n_j > k_j$ and hence $n_j \in \nat$.
\end{proof}

We can now proceed with the proof of procyclicity and sparsity of $P_1,\ldots,P_d$.

\begin{lemma}
	Given $1\le i,j \le d$ and $K \ge 0$, we can compute $L$ such that for all $u^{(i)}_n, u^{(j)}_m \ge L$,
	\[
	u^{(i)}_n \ne u^{(j)}_m \Rightarrow |u^{(i)}_n - u^{(j)}_m| > K.
	\]
\end{lemma}
\begin{proof}
	If $i = j$, we use \Cref{eq::baker-application-1} and the fact that $|v^{(i)}_n| = o((\rho_i^{1-\varepsilon})^n)$ for all sufficiently small $\varepsilon > 0$ with an effective implied constant.
	Suppose $i \ne j$ and, without loss of generality, that $\rho_i \ge \rho_j$.
	Let $N, m_1,\ldots,m_d$ be as in the previous lemma.
	We first enumerate all solutions of $|u^{(i)}_n - u^{(j)}_m| \le K$ with $u^{(i)}_n < N$ using \Cref{lem::lrs-to-exp-baker}~(a).
	For $u^{(i)}_n \ge N$, by \Cref{eq::baker-application-1.5,eq::baker-application-2} we have that
	\[
	|u^{(i)}_n - u^{(j)}_m| = \frac{\rho_i^n}{(n+2)^{C}} + o(\rho_i^{n(1-\varepsilon)})
	\]
	for all sufficiently small $\varepsilon > 0$ (where the implied constant is effective).
	It remains to compute $\nu$ such that the right-hand side is greater than $K$ for all $n \ge \nu$, and then $L \ge N$ such that $u^{(i)}_n \ge L \Rightarrow n \ge \nu$.
\end{proof}

\begin{corollary}
	The predicates $P_1,\ldots,P_d$ are effectively procyclic and effectively sparse.
\end{corollary}
\begin{proof}
	Enumerate the elements of $P_i$ as $\seq{p^{(i)}_n}$, and let $N, m_1,\ldots,m_d$ be as in the statement of \Cref{lem::lrs-to-exp-baker}.
	By \Cref{lem::lrs-to-exp-baker}~(a-b), we have that a suffix of $\seq{p^{(i)}_n}$ (which can be effectively determined) is equal to $\seq{u^{(i)}_{m_i + n}}$, which is effectively procyclic (\Cref{lem:periodic LRS}).
	Hence $P_i$ is effectively procyclic, which implies that $P_1,\ldots,P_d$ are effectively procyclic.
	That $P_1,\ldots,P_d$ are effectively sparse follows immediately from the preceding lemma.
\end{proof}
Recall that $\alpha$ and $\beta$ denote the characteristic and order words of $P_1,\dots,P_d$, respectively.
\begin{corollary}
	\label{thm::char-to-order-of-dominant-parts}
	Given an automaton $\Acal$, we can compute
	\begin{itemize}
		\item an automaton $\Bcal$ such that
		\[
		\alpha \in L(\Acal) \Leftrightarrow \beta \in L(\Bcal),
		\]
		\item and an automaton $\Ccal$ such that
		\[
		\alpha \in L(\Acal) \Leftrightarrow \order{{r_1\rho_1^n}, \ldots, {r_d\rho_d^n}} \in L(\Ccal)
		\]
		where $r_i = c_i \rho_i^{m_i}$.
	\end{itemize}
\end{corollary}
\begin{proof}
	The first claim follows from \Cref{cor::sparse-char-eq-order}.
	To prove the second claim, let $\gamma$ be the word obtained by deleting all occurrences of $\zerovec$ from $\alpha[N,\infty)$.
	Then $\gamma = \beta[\widetilde{N}, \infty)$ for some $\widetilde{N}$.
	Moreover, by  \Cref{lem::lrs-to-exp-baker}~(a-b), 
	\[
	\gamma = \order{{r_1\rho_1^n},\ldots,{r_d\rho_d^n}}.
	\]
	Therefore, $\Ccal$ can be constructed from $\Bcal$ by changing the initial state to the one obtained after $\Bcal$ reads $\beta[0, \widetilde{N})$.
\end{proof}

Henceforth, when the context is clear, we denote by $N$ the smallest integer satisfying the statement of \Cref{lem::lrs-to-exp-baker}, and define $m_1,\ldots,m_d$ as in the statement of \Cref{lem::lrs-to-exp-baker}.
We further write $r_i = c_i \rho_i^{m_i}$, $\gamma = \order{\seq{r_1\rho_1^n}, \ldots, \seq{r_d\rho_d^n}}$, and denote by $\Ccal$ the automaton given in \Cref{thm::char-to-order-of-dominant-parts}.
Note that by \Cref{lem::lrs-to-exp-baker}, each letter in $\gamma$ is of the form $(b_1,\ldots,b_d)$, where exactly one $b_i = 1$ and $b_j = 0$ for all $i \ne j$.
We replace every such letter with $i$ in both $\gamma$ and $\Ccal$ to construct $\widetilde{\gamma} \in \{1,\ldots,d\}^\omega$ and an automaton $\widetilde{\Ccal}$ such that $\gamma \in L(\Ccal) \Leftrightarrow \widetilde{\gamma} \in L(\widetilde{\Ccal})$.

\subsection{Interlude: applying the theory of cutting sequences}
\label{sec::cutting-sequences}
So far we have that $\alpha \in L(\Acal) \Leftrightarrow \beta \in L(\Bcal) \Leftrightarrow \gamma \in L(\Ccal) \Leftrightarrow \widetilde{\gamma} \in L(\widetilde{\Ccal})$, where $\gamma$ is a suffix of $\beta$ and $\widetilde{\gamma}$ is obtained from $\gamma$ through a renaming of letters.
It turns out that $\widetilde{\gamma}$ belongs to the class of \emph{cutting sequences} (also known as \emph{billiard words}), which have been studied in word combinatorics \cite{arnoux1994complexity,baryshnikov1995complexity,bedaride2009directional}.
We illustrate this connection through an example.
\begin{example}
	\label{example-toric1}
	Consider $c_1 = 3, \rho_1 = 2, c_2 = 10, \rho_2 = 3$, $u^{(1)}_n = 3 \cdot 2^n$ and $u^{(2)}_n = 10 \cdot 3^n$.
	We have that $\textcolor{red}{3 \cdot 2^0} < \textcolor{red}{3 \cdot 2^1} < \textcolor{cyan}{10 \cdot 3^0} < \textcolor{red}{3 \cdot 2^2} < \cdots$ and hence the order word is
	\[
	\beta = 
	 \textcolor{red}{(1,0)} \, \textcolor{red}{(1,0)}  \, \textcolor{cyan}{(0,1)} \, \textcolor{red}{(1,0)} \, \textcolor{red}{(1,0)} \, \textcolor{cyan}{(0,1)} \, \textcolor{red}{(1,0)} \, \textcolor{cyan}{(0,1)} \,\textcolor{red}{(1,0)}  \cdots.
	\]
	The smallest $N$ satisfying \Cref{lem::lrs-to-exp-baker} (particularly note the property~(c)) yields $m_1 = 1$, $m_2 = 0$, and $\gamma = \beta[1, \infty)$.
	Performing the  renaming $(1,0) \to \textcolor{red}{1}$, $(0,1) \to \textcolor{cyan}{2}$, we obtain
	$\widetilde{\gamma} = \textcolor{red}{1}  \textcolor{cyan}{2} \textcolor{red}{1} \textcolor{red}{1} \textcolor{cyan}{2} \textcolor{red}{1} \textcolor{cyan}{2} \textcolor{red}{1}  \cdots$.
	Let $a_n = \log(c_1\rho_1^n) = \log(3) + n\log(2)$ and $b_n = \log(c_2\rho_2^n) =  \log(7)+n\log(3)$.
	Figure~\ref{fig::cutting} (left) illustrates a way to generate $\beta$ (up to the renaming of letters above).
	We start at the point $(0,0)$ and follow the line $y = x$.
	Every time a vertical line $x = a_n$ for some~$n$ is hit, we write~\textcolor{red}{$(1,0)$}.
	When we hit a horizontal line $y = b_n$ for some~$n$, we write \textcolor{cyan}{$(0,1)$}.
	If we discard the first letter of $\beta$, we obtain a cutting sequence, illustrated in Fig.~\ref{fig::cutting} (right).
	The figure on the right is obtained from the one on the left by a translation and a scaling.
	In the former, we start at a point $(0,y)$, where $0 < y < 1$, and follow the dashed line, which has the slope $\log(2)/\log(3)$.
	When we hit a line $x = n$ for $n \in \nat$, we write $\textcolor{red}{1}$;
	When we hit $y = n$, we write~$\textcolor{cyan}{2}$.
\end{example}

\begin{figure}
	\begin{subfigure}[b]{0.48\textwidth}
		\centering
		\begin{tikzpicture}
			\def\size{5.1}
			\draw[thick, ->] (0,0) -- (0,\size);
			\draw[thick, ->] (0,0) -- (\size, 0);
			\draw[dashed] (0,0) -- (\size-0.03,\size-0.03);

			\foreach \x in {0,...,5}
			{
				\draw[very thin]  (\x * \logtwo + \logthree, 0) -- (\x * \logtwo + \logthree, \size);
				\draw[ultra thick,red] (\x * \logtwo + \logthree, \x * \logtwo-0.12+\logthree) -- (\x * \logtwo+\logthree,\x * \logtwo+0.12+\logthree);
				\node at (\x * \logtwo + \logthree, -0.3) {\large$a_{\x}$};
			}
			
			\foreach \x  in {0,...,2}
			{
				\draw[very thin] (0, \logten + \x * \logthree ) -- (\size, \logten + \x * \logthree );
				\draw[ultra thick,cyan] 
				(\logten + \x * \logthree-0.12, \logten + \x * \logthree) 
				-- 
				( \logten + \x * \logthree+0.12, \logten + \x * \logthree);
				\node at (-0.3, \logten + \x * \logthree) {\large$b_{\x}$};
			}
		\end{tikzpicture}
	\end{subfigure}
	\begin{subfigure}[b]{0.48\textwidth}
		\centering
		\begin{tikzpicture}
			\def\size{4.8};
			\def\k{\logtwo/\logthree};
			\def\sizey{\size*\k+1.5};
			\def\b{0.51};

			\draw[thick, ->] (0,0) -- (0,\sizey);
			\draw[thick, ->] (0,0) -- (\size, 0);
			\draw[dashed] (0,\b) -- (\size, {\k*\size + \b});

			\foreach \x in {1,...,4}
			{
				\draw[very thin] (\x,0) -- (\x,\sizey);
				\draw[ultra thick,red] (\x,  {\k*\x+\b - 0.12}) -- (\x,  {\k*\x+\b+0.12});
				\node at (\x, -0.3) {\large$\x$};
			};
			\draw[ultra thick,red] (0,  {\b - 0.12}) -- (0,  {\b+0.12});

			\foreach \y  in {1,...,3}
			{
				\draw[very thin] (0, \y) -- (\size, \y);
				\draw[ultra thick,cyan] ({-0.12 + \y * 1.585 - \b * 1.585}, \y) -- ({+0.12 + \y * 1.585 - \b * 1.585}, \y);
				\node at (-0.3, \y) {\large$\y$};
			};
			\draw[very thin] (0, 4) -- (\size, 4);
			\node at (-0.3, 4) {\large$4$};

		\end{tikzpicture}
	\end{subfigure}
	\caption{Generating the order word $\beta$ (left) and its suffix (up to a renaming) $\widetilde{\gamma}$, which is a cutting sequence (right).}
	\label{fig::cutting}
\end{figure}

Formally, the cutting sequence over $\Sigma = \{1,\ldots,d\}$ with slopes $\lambda_1,\ldots,\lambda_d$ and intercepts $s_1,\ldots,s_d$ is the infinite word obtained by considering intersections of the line $\{(\xi_1+\lambda_1t, \ldots, \xi_d+\lambda_dt) \st t \ge 0\}$ with the grid lines $x_i = c$ (where $c \in \nat$ and $x_1,\ldots,x_d$ denote the $d$ standard coordinates in $\rel^d$) as $t \to \infty$ starting from $t = 0$, assuming no two grid lines are intersected simultaneously.
Cutting sequences are uniformly recurrent, and under some mild assumptions on the slope, there is an explicit formula for the factor complexity, which allows us to apply \Cref{thm::semenov-2}.

\begin{itemize}
	\item[(i)] If $d = 2$ and $\log(\rho_1)/\log(\rho_2)$ is irrational (which is the case in Fig.~\ref{fig::cutting}), then $\gamma$ is a \emph{Sturmian word} and therefore $\pi_\gamma(n) =n +1$. See, e.g.\, \cite[Sec.~10.5]{allouche2003automatic} and \cite[Chap.~2]{lothaire2002algebraic}.
	\item[(ii)] By \cite{arnoux1994complexity}, if $d = 3$, and $1/\log(\rho_1),1/\log(\rho_2),1/\log(\rho_3)$ as well as $\log(\rho_1)$, $\log(\rho_2),\log(\rho_3)$ are linearly independent over $\rat$, then $\pi_\gamma(n) = n^2 + n +1$.
	\item[(iii)] For arbitrary $d > 0$, B\'edaride \cite{bedaride2009directional} gives an exact formula for $\pi_\gamma(n)$ assuming $1/\log(\rho_1),\ldots,1/\log(\rho_d)$ as well as every triple $\log(\rho_i), \log(\rho_j), \log(\rho_k)$ for pairwise distinct $i,j,k$ are linearly independent over $\rat$.
	This generalises the well-known result \cite{baryshnikov1995complexity} of Baryshnikov, which gives an exact formula for the factor complexity assuming both the logarithms and their inverses are linearly independent over $\rat$.
\end{itemize}

Note that the exact value of the intercept does not matter in (i-iii) above:
the only requirement is that no two grid hyperplanes be simultaneously reachable
We will not be using cutting sequences to prove our main results.
However, we could prove weaker results using the theory of cutting sequences:
it is not difficult to prove that $\widetilde{\gamma}$ is the cutting sequence generated by the line $\{(\chi_1+t/\log(\rho_1), \ldots, \chi_d+t/\log(\rho_d)) \st t \ge 0\}$, where $\chi_i \in [0,1)$ for all $i$.
The following is an immediate consequence of \Cref{thm::semenov-2} and the aforementioned result of B\'edaride.

\begin{proposition}
	Problem 1 is decidable for integer LRS  $\seq{u^{(1)}_n},\ldots, \seq{u^{(d)}_n}$ satisfying the conditions (A), (C), and that every triple $\log(\rho_i), \log(\rho_j), \log(\rho_k)$ for pairwise distinct $i,j,k$ are linearly independent over $\rat$.
\end{proposition}

Note that, however, this is strictly weaker than \Cref{powers-main-structure-parametrised}.
Consider, for example, $\rho_1 = 2$, $\rho_2= 3$ and $\rho_3= 6$.
By \Cref{lem::rank-d-2-to-LI}, $1/\log(\rho_1), 1/\log(\rho_2),1/\log(\rho_3)$ are linearly independent over~$\rat$, but $\log(\rho_1),\log(\rho_2),\log(\rho_3)$ are not.

\subsection{Deciding whether $\widetilde{\gamma} \in L(\widetilde{\Ccal})$}
So far, we have only used the assumptions (A-B) to reduce the problem of deciding whether $\alpha \in L(\Acal)$ to whether  $\widetilde{\gamma} \in L(\widetilde{\Ccal})$.
We will show that (i) $\widetilde{\gamma}$ is uniformly recurrent and (ii) apply \Cref{thm::semenov-2} to decide whether $\widetilde{\Ccal}$ accepts $\widetilde{\gamma}$.
Let $\Sigma = \{1,\ldots,d\}$, and recall that $\widetilde{\gamma} \in \Sigma^\omega$ and $\widetilde{\gamma} = \order{\seq{r_1\rho_1^n}, \ldots, \seq{r_d\rho_d^n}}$ up to the renaming of letters that maps the letter $(b_1,\ldots,b_d)$ with $b_i = 1$ and $b_j = 0$ for $j \ne i$ to the letter $i$.
For $b \in \Sigma$ and $w \in \Sigma^*$, we denote by $|w|_b$ the number of occurrences of $b$ in $w$.

Let $w = b_0 b_1 \cdots b_m \in \Sigma^*$ and $b = b_0$.
Define $f(n, i)$ to be the smallest term $r_i \rho_i^m$ such that $m \ge 0$ and $r_i \rho_i^m \ge r_b \rho_b^n$, and let $f(n, i, k) = \rho_i^{k-1} f(n, i)$.
Thus, $f(n,i,k)$ is the $k$th largest term of the form $r_i \rho_i^m$, such that $m \ge 0$ and $r_i \rho_i^m \ge r_b \rho_b^n$, counting from one.
Further write $\nu_i(n) = f(n, b_i, |w[0,i+1)|_{b_i})$ and $\tau_i(n) = f(n, b_i, |w|_{b_i}+1)$.
Then the word $w$ occurs in $\widetilde{\gamma}$ at the position corresponding to $r_b \rho_b^n$ if and only if
\begin{equation}
	\label{eq::pattern-occurrence-1}
	\nu_0(n) < \cdots < \nu_m(n) < \tau_1(n), \tau_2(n), \ldots, \tau_d(n)
\end{equation}
which is equivalent to
\begin{equation}
	\label{eq::pattern-occurrence-1.5}
	\frac{r_b \rho_b^n}{\nu_0(n)} > \cdots > \frac{r_b \rho_b^n}{\nu_m(n)} > \frac{r_b \rho_b^n}{\tau_1(n)}, \frac{r_b \rho_b^n}{\tau_2(n)}, \ldots, \frac{r_b \rho_b^n}{\tau_d(n)}.
\end{equation}
Next, observe that by definition and \Cref{lem::lrs-to-exp-baker}~(c),
\[
\frac{1}{\rho_i} \le \frac{r_b \rho_b^n}{f(n, i)} < 1
\]
for all $i$, which implies that
\begin{equation}
	\label{eq::pattern-occurrence-1.7}
	\frac{1}{\rho_i^{k+1}} \le \frac{r_b \rho_b^n}{f(n, i, k)} < \frac{1}{\rho_i^k}
\end{equation}
for all $i, k$.

Write $f(n,i,k) = r_i \rho_i^l$.
Taking logarithms, \eqref{eq::pattern-occurrence-1.7} is equivalent to
\begin{equation}
	\label{eq::pattern-occurrence-1.8}
	- (k+1) \log(\rho_i) \le
	\log
	\bigg(
	\frac{r_b \rho_b^n}{f(n, i, k)}
	\bigg) = \log \bigg( \frac{r_b}{r_i} \bigg) + n \log(\rho_b) - l \log(\rho_i)
	< - k\log(\rho_i)
\end{equation}
which is equivalent to
\begin{equation}
	\label{eq::pattern-occurrence-1.9}
	0 \le \log \bigg(\frac{r_b}{r_i}\bigg) + n \log(\rho_b) + (k+1-l)\log(\rho_i) < \log(\rho_i).
\end{equation}
Dividing by $\log(\rho_i)$, we obtain that
\begin{equation}
	\label{eq::pattern-occurrence-1.95}
	0 \le \frac{\log(r_b) - \log (r_1)}{\log(\rho_i)} + n \frac{\log(\rho_b)}{\log(\rho_i)} + (k + 1 - l) < 1.
\end{equation}
Define $s_i =  \frac{\log r_b - \log r_i}{\log(\rho_i)}$, $\delta_i = \frac{\log \rho_{b}}{\log \rho_i}$, $s = (s_1,\ldots,s_d)$, and $\delta = (\delta_1,\ldots,\delta_d)$.
Then from \eqref{eq::pattern-occurrence-1.95} we conclude that
\[
\{\gamma_i + n\delta_i\} = \frac{\log(r_b) - \log (r_1)}{\log(\rho_i)} + n \frac{\log(\rho_b)}{\log(\rho_i)} + (k + 1 - l).
\]
From \eqref{eq::pattern-occurrence-1.8} it then follows that
\[
\frac{r_b \rho_b^n}{f(n, i, k)} =
	\exp\big(
	-
	(k+1)\log(\rho_i)
	+
		\log(\rho_i)
		\{
		\gamma_i + n \delta_i
		\}
	\big).
\]
Hence, taking logarithms, we can write \eqref{eq::pattern-occurrence-1.5} as
\begin{multline}
	\label{eq::pattern-occurrence-2}
	k_1\log(\rho_{b_0}) +
	\log(\rho_{b_0})
	\big\{
	\gamma_{b_0}+ n \delta_{b_0}
	\big\}
	>
	\cdots
	>
	k_{m} \log(\rho_{b_m}) +
	\log(\rho_{b_m})
	\big\{
	\gamma_{b_m}+ n \delta_{b_m}
	\big\}
	>
	\\
	l_{1}\log(\rho_{1}) +
	\log(\rho_{1})
	\big\{
	\gamma_1 + n \delta_1
	\big\},
	\ldots,
	l_d\log(\rho_{d}) +
	\log(\rho_{d})
	\big\{
	\gamma_d + n \delta_d
	\big\}
\end{multline}
where $k_i, l_i$ are integers.

So, what did we achieve?
Let $O_w$ be the open subset of $\torus^d$ consisting of all $x = (x_1,\ldots,x_d)$ such that
\begin{multline}
	\label{eq::pattern-occurrence-3}
	k_1\log(\rho_{b_0}) +
	\log(\rho_{b_0})
	x_{b_0}
	>
	\cdots
	>
	k_{m} \log(\rho_{b_m}) +
	\log(\rho_{b_m})
	x_{b_m}
	>
	\\
	l_{1}\log(\rho_{1}) +
	\log(\rho_{1})
	x_1,
	\ldots,
	l_d\log(\rho_{d}) +
	\log(\rho_{d})
	x_d.
\end{multline}
We have the toric dynamical system (see \Cref{sec::toric-words}) given by $x \mapsto x + \delta$, and whether the pattern $w$ occurs at a position corresponding to $r_b \rho_b^n$ is characterised by whether $\{s + n\delta\} \in O_w$, i.e.\ whether the orbit of $s$ falls into the open set~$O_w$ at the step $n$.
By uniform recurrence (\Cref{thm::ur-on-torus}), the set of all $n \in \nat$ such that  $\{s + n \delta\} \in O_w$ is either empty or is infinite and has bounded gaps.
Because the letter $b_0$ occurs infinitely often in $\widetilde{\gamma}$  and with bounded gaps (recall that $\widetilde{\gamma} = \order{\seq{r_1\rho_1^n}, \ldots, \seq{r_d\rho_d^n}}$ up to a renaming of letters), it follows that either $w$ does not occur in $\widetilde{\gamma}$ at all, or does so infinitely often and with bounded gaps.
\textbf{Therefore, $\widetilde{\gamma}$ is uniformly recurrent.}

We now prove all of our main theorems.
Let $w = b_0 b_1 \cdots b_m \in \Sigma^*$ and $b = b_0$, and construct $\delta = (\delta_1,\ldots,\delta_d)$ and $s = (s_1,\ldots,s_d)$ as above.
Recall that by \Cref{thm::semenov-2}, to decide whether $\widetilde{\gamma} \in L(\widetilde{\Ccal})$ it suffices to be able to effectively check whether $w$ occurs in $\widetilde{\gamma}$.
Let $X_\delta$ be the closure of $\seq{\{n \delta\}}$, which is a subset of $\torus^d$ defined by $\intg$-affine equalities; see \Cref{sec::toric-words}.
Further let $X_{\delta,s} = s + X_\delta$; this is the closure of $\seq{\{s+n \delta\}}$.
We have that the word $w$ occurs (infinitely often) in $\widetilde{\gamma}$ if and only if $X_{\delta,s}$ intersects~$O_w$.

\begin{proof}[Proof of \Cref{powers-main-structure-parametrised}]
	Suppose $\frac{1}{\log(\rho_1)},\ldots, \frac{1}{\log(\rho_d)}$ are linearly independent.
	We will show how to decide whether $w$ occurs in $\widetilde{\gamma}$.
	Consider the equation
	\[
	z_1 \delta_1 + \cdots  + z_d\delta_d = z_0
	\]
	where $z_i \in \intg$.
	Because $\delta_b = 1$, it is equivalent to
	\[
	\sum_{i \ne b} \frac{z_i}{\log(\rho_i)} = \frac{a_0-z_b}{\log(\rho_b)}.
	\]
	By the linear independence assumption, the solutions are
	\[
	z_1 = \cdots = z_{b-1} = z_{b+1} = \cdots = z_d = z_0 - z_b = 0.
	\]
	From Kronecker's theorem (\Cref{sec::toric-words}) it follows that 
	\[
	X_\delta = \{(x_1,\ldots,x_d) \in \torus^d \st x_b = 0\}.
	\]
	Since $s_b = 0$, we have that $X_{\delta,s} \coloneqq \{y \st y = x + s \textrm{ for some $x \in X_\delta$}\} = X_\delta$.
	Hence $w$ occurs (infinitely often) in $\widetilde{\gamma}$ if and only if $O_w \cap X_\delta$ is non-empty.
	To check this condition, we perform a change of variables $y_i = \log(\rho_i)x_i$ on the system of equations defining $O_w$ to obtain
	\[
	\begin{cases*}
		\, 0 \le y_i < \log(\rho_i) \: \textrm{ for $1\le i\le d$}, \;\: i \ne b\\
		\, y_b = 0 \\
		\, k_1\log(\rho_{b_0}) +
		y_{b_0}
		>
		\cdots
		>
		k_{m} \log(\rho_{b_m}) +
		y_{b_m}
		>
		l_i \log(\rho_i)  \: \textrm{ for $i = 1,\ldots,d$.}
	\end{cases*}
	\]
	Note that the coefficients of all $y_1,\ldots,y_d$ are 1.
	Eliminating $y_1,\ldots,y_d$ using the Fourier-Motzkin algorithm (\Cref{sec::fourier-motzkin}), we transform the system above into a Boolean combination of linear inequalities in $\log(\rho_1),\ldots,\log(\rho_d)$, which we then solve using \Cref{lem::baker-determining-sign}.
\end{proof}

\begin{proof}[Proof of \Cref{thm::main-ideal-version}]
	Now suppose we have access to a basis of $G_L(1/\log(\rho_1),\ldots,1/\log(\rho_d))$ and the ideal of all polynomial relations between $\log(c_1),\log(\rho_1),\ldots,\log(c_d),\log(\rho_d)$.
	We will show how to decide whether $w$ occurs in $\widetilde{\gamma}$, or equivalently, whether $X_{\delta,s}$ intersects $O_w$.
	As discussed above, for all $z_0,\ldots,z_d\in\intg$,
	\[
	\sum_{i=1} z_i \delta_i = z_0 \Leftrightarrow \sum_{i\ne b} \frac{z_i}{\log(\rho_i)} = \frac{z_0-z_b}{\log(\rho_b)}.
	\]
	Therefore, from a basis of $G_L(1/\log(\rho_1),\ldots,1/\log(\rho_d))$ we can compute a basis of $G_A(\delta_1,\ldots,\delta_d)$ and hence a conjunction of $\intg$-linear equalities defining $X_\delta \subseteq \torus^d$ (see \Cref{sec::toric-words}).
	Hence whether $X_{\delta,s} = s + X_\delta$ intersects $O_w$ can be written as the system consisting of \Cref{eq::pattern-occurrence-3}, the equation $0 \le x_1,\ldots,x_d < 1$, and $\intg$-affine equations (in $x_i -s_i$ for $1\le i \le d$) stating that $(x_1-s_1,\ldots,x_d-s_d) \in X_{\delta}$ obtained from the basis of $G_A(\delta_1,\ldots,\delta_d)$.
	Recall that $\log(r_i)$, which appears in the definition of $s_i$, is equal to $\log (c_i \rho_i^{m_i}) = \log(c_i) + m_i \log(\rho_i)$.
	We thus have a system of polynomial equalities and inequalities in $\log(c_1),\log(\rho_1),\ldots,\log(c_d),\log(\rho_d)$ that holds if and only if $X_{\delta,s}$ intersects~$O_w$.
	To solve the system, we first check which $p(\log(c_1),\log(\rho_1),\ldots,\log(c_d),\log(\rho_d))$, where $p$ is a polynomial appearing in the system, are equal to zero using the ideal of all polynomial relations (i.e., solving the ideal membership problem using, e.g., Gr\"obner bases \cite{Cox2025}).
	Thereafter, we can determine the signs of non-zero $p(\log(c_1),\log(\rho_1),\ldots,\log(c_d),\log(\rho_d))$ by computing sufficiently close over- and under-approximations thereof.
\end{proof}

\begin{proof}[Proof of \Cref{thm::main-schanuel}]
	It suffices to give an algorithm for deciding whether $w$ occurs in $\widetilde{\gamma}$ that relies on Schanuel's conjecture only for termination.
	We run two semi-algorithms in parallel.
	On one hand, we generate larger and larger prefixes of $\widetilde{\gamma}$, and halt if we detect an occurrence of the finite word $w$.
	On the other hand, we enumerate all finite subsets $V$ of $\rat^{d+1}$.
	For each such $V$, using the decision procedure for the first-order theory of $\rexp$, we check whether for all $v \coloneqq (v_0, \ldots, v_d) \in V$, $v_0 = \sum_{i=1}^d \frac{v_i}{\log(\rho_i)}$.
	That is, we check whether $V$ is a set of affine relations satisfied by $\frac{1}{\log(\rho_1)},\ldots,\frac{1}{\log(\rho_d)}$.
	If the answer is negative, we generate the next $V$.
	If the answer is positive, we compute
	\[
	X_V = \{(x_1,\ldots,x_d) \in \torus^d \st v_0 = x_1v_1 + \cdots + x_dv_d\}
	\]
	which satisfies $X_V \supseteq X_\delta$.
	We then check, again using the decision procedure for $\rexp$, whether $s + X_V$ (which contains $X_{\delta,s}$) intersects $O_w$.
	If yes, we move on to the next value of $V$.
	If no, we have found a certificate that $X_{\delta,s}$ does not intersect $O_w$, i.e.\ the word $w$ does not occur in $\widetilde{\gamma}$.
	Note that this algorithm always terminates (assuming Schanuel's conjecture) since in case $w$ does not occur, eventually we will generate $V$ such that $X_V = X_\delta$.
\end{proof}

\section{MSO decidability via expansions in integer bases}
\label{sec:MSO normal numbers}

In this section, we discuss MSO theories of structures of the form $\langle \nat; <, P_1, P_2 \rangle$ where $P_1 = \{qn^d \st n \in \nat\}$ and $P_2 = \{pb^n \st n \in \nat\}$ for some positive integers $q,p,b,d$.
Note that both $\seq{q n^d}$ and $\seq{ p b^n}$ are linear recurrence sequences with a single dominant root ($1$ and $b$, respectively), but in the former case the dominant root is repeated (in particular, it has multiplicity $d+1$), which does not fall into the scope of the previous section.
If $d=1$ or $b=1$, then at least one of $P_1,P_2$ can be defined in $\langle \nat; < \rangle$, and hence the MSO theory of $\langle \nat; <, P_1, P_2 \rangle$ is decidable by the result of Carton and Thomas \cite{cartonthomas}.
Our approach is again to use automata-theoretic tools to reduce the MSO decision problem to a problem about dynamical systems.
However, in this case the relevant dynamical system is given by $x \mapsto \{b \cdot x\}$ (in contrast to $x \mapsto \{x + t\}$), which generates the greedy expansion $\tau \in \{0,\ldots,b-1\}^\omega$ of $x \in (0,1)$ in base $b$.
Our main result is that the MSO theories of our structures are intimately connected to MSO theories of base-$b$ expansions of certain numbers.
Recall that for any $\Sigma \subseteq \nat$, we can view a word $\tau \in \Sigma^\omega$ as a function of type $\nat \to \Sigma$.

\begin{theorem}
	\label{thm::main poly vs exp}
	Let $b, d \ge 2$ and $p, q \ge 1$ be integers, $P_1 = \{qn^d \st n \in \nat\}$, and $P_2 = \{pb^n \st n \in \nat\}$.
	Write $\eta = \sqrt[d]{p/q}$, $\zeta = \sqrt[d]{1/b}$, and let $\gamma_0,\ldots,\gamma_{d-1} \in \{0,\ldots,b-1\}^\omega$ be the base-$b$ expansions of $\{\eta\}, \{\eta \zeta\},\dots,\{\eta\zeta^{d-1}\}$, respectively.
	Then the MSO theories of $\langle \nat; <, P_1,P_2\rangle$ and $\langle \nat; <, \gamma_0,\ldots,\gamma_{d-1} \rangle$ are Turing-equivalent.
\end{theorem}

For example, if $P_1 = \{27n^3 \st n\in\nat\}$ and $P_2 = \powersofk{8}$, then $d = 3$, $\eta = 1/27$, $\zeta = 1/8$, and the base-$8$ expansions of $1/3$, $1/6$, and $1/12$ underlie the pair of predicates $P_1, P_2$.
Recall that the expansion of a rational number in any base is ultimately periodic with a computable period and pre-period (see, e.g., \cite[Thm.~12.4]{rosen2011elementary}), and hence definable in $\langle \nat; < \rangle$.
In our example, because $\eta, \zeta \in \rat$, all three relevant expansions are periodic and the attendant MSO theory is decidable.
Let us formalise a few similar corollaries before proving our main result.

\begin{corollary}
	Assuming \Cref{conj: normal}, the MSO theory of $\langle \nat; <, P_1, P_2 \rangle$ is decidable whenever $d=2$ and at least one of $\eta, \eta\zeta$ is rational.
\end{corollary}
\begin{proof}
	Suppose $\eta \in \rat$ or  $\eta\zeta \in \rat$.
	Note that $\eta, \zeta$ are both algebraic.
	If $\eta, \zeta \in \rat$, then $\gamma_0, \gamma_1$ are periodic and decidability is immediate.
	Otherwise, for some $i \in \{0,1\}$, $\gamma_i$ is periodic and $\gamma_{1-i}$ is disjunctive by \Cref{conj: normal}.
	Decidability then follows from \Cref{normal-decidable}.
\end{proof}

\begin{corollary}
  The MSO theory of $\langle \nat; <, \kthpowers{2}, \powersofk{2}\rangle$ is Turing-equivalent to that of $\langle \nat; <, \gamma\rangle$, where $\gamma$ is the binary expansion of $\sqrt{2}-1$.
\end{corollary}
\begin{proof}
	By \Cref{thm::main poly vs exp}, the MSO theory of $\langle \nat; <, \kthpowers{2}, \powersofk{2}\rangle$ is Turing-equivalent to that of $\langle \nat; <, \gamma_0, \gamma_1 \rangle$ where $\gamma_0(n) = 0$ for all $n$ and $\gamma_1$ is the binary expansion of $\sqrt{\frac 1 2} = \frac{1}{2} \cdot \sqrt{2}$, which is simply a shifted version of the binary expansion of~$\sqrt{2}-1$.
\end{proof}
\begin{corollary}
	\label{thm::main-2-cor-3}
	Let $b,d,p,q, P_1,P_2, \eta, \zeta$ be as in the statement of \Cref{thm::main poly vs exp}. 
	Suppose $1/\zeta =\ell$ for an integer $\ell$.
	Then the (decision problem of the) MSO theory of $\langle \nat; <, P_1,P_2\rangle$ is Turing equivalent to $\Acc{\beta}$, where $\beta$ is the base-$\ell$ expansion of $\eta$.
	In particular, the MSO theory of 
	$\langle \nat; <, \{(k^d)^n \st n \in \nat\}, \kthpowers{d}\rangle$ is decidable for any integers $k,d \ge 2$.
\end{corollary}
\begin{proof}
	We have that $b = \ell^d$.
	Let $\gamma = \gamma_0 \times \cdots \times \gamma_{d-1}$.
	Applying \Cref{thm::main poly vs exp}, it suffices to show that $\Acc{\gamma}$ is Turing-equivalent to $\Acc{\sigma}$.
	
	If $x \in (0,1)$ has the base-$b$ expansion $\sigma \in \{0,\dots,b-1\}^\omega$, then its base-$\ell$ expansion can be obtained by simply replacing each letter $\sigma(n)$ with its base-$\ell$ expansion padded with leading zeros to make the total length equal to $d$.
	Hence we can map the base-$b$ expansion of a number to its base-$\ell$ expansion using a transducer, and vice versa.
	
	Let $\widetilde{\gamma}_i$ be the base-$\ell$ expansion of $\{\eta \zeta^i\} = \{\eta \ell^i\}$, which is just a shift of the base-$\ell$ expansion of $\{\eta\}$.
	Combining this with the earlier argument, we obtain that the base-$b$ expansion $\gamma_i$ of $\{\eta \zeta^i\}$ can be obtained from the base-$\ell$ expansion $\beta$ of~$\{\eta\}$ via a transduction, and vice versa.
	The same conclusion then also holds for $\gamma$ and $\beta$.
	It remains to apply \Cref{cor::transducer reduction}.
	
	To prove the second statement, note that in that case we have $b = k^d$, $1/\zeta = k \in \intg$, $\eta = 1$, and $\beta = 0^\omega$.
\end{proof}

\begin{proof}[Proof of \Cref{thm::main poly vs exp}]
Let $\widetilde{\alpha} \in (\{0,1\}^2)^\omega$ be the characteristic word of $P_1,P_2$.
For $0 \le i < d$, let $Q_i = \{p b^{dn+i} \st n \in \nat\}$, and let $\alpha \in (\{0,1\}^{d+1})^\omega$ be the characteristic word of $P_1, Q_0, \ldots, Q_{d-1}$.
We can construct transducers $\Bcal, \Ccal$ such that $\alpha = \Bcal(\widetilde{\alpha})$ and $\widetilde{\alpha} = \Ccal(\alpha)$. 
By \Cref{thm::transduction}, $\Acc{\widetilde{\alpha}}$ and $\Acc{\alpha}$ are Turing-equivalent.
Therefore, our goal is to show that $\Acc{\alpha}$ is Turing-equivalent to $\Acc{\gamma}$, where $\gamma = \gamma_0 \times \cdots \times \gamma_{d-1} \in (\{0,\ldots,b-1\}^d)^\omega$.

Let $\beta  \in (\{0,1\}^{d+1})^\omega$ be the order word of $P_1,Q_0,\ldots,Q_{d-1}$. 
We first show that $P_1,Q_0, \ldots, Q_{d-1}$ are effectively procyclic and effectively sparse, and hence $\Acc{\alpha}$ is Turing-equivalent to $\Acc{\beta}$.
Each predicate is the value set of a strictly increasing integer linear recurrence sequence, and hence is effectively procyclic (\Cref{lem:periodic LRS}).
It remains to argue that they are collectively effectively sparse.
This follows immediately from the facts that 
\begin{itemize}
	\item $|p b^{dn_1 + r_1} - p b^{dn_2+r_2}| \ge p b^{\min \{dn_1+r_1,n_2+r_2\}} \cdot \big(1 - \frac{1}{b}\big)$ for all $n_1,n_2,r_1,r_2$ such that $p b^{dn_1 + r_1} - p b^{dn_2+r_2} \ne 0$,
	\item $\lim_{n \to \infty} q(n+1)^d - qn^d = \infty$, 
\end{itemize}
and the following classical result of Schinzel and Tijdeman~\cite{schinzel}.
 \begin{theorem}
 	\label{spaced-out}
 	For every $N \ge 1$, the equation $|qn^d - pb^{m}| = N$ has finitely many solutions $(n, m) \in \nat^2$ that can be effectively enumerated.
 \end{theorem}

We next prove that $\Acc{\beta}$ is Turing-equivalent to $\Acc{\gamma}$.
Note that $Q_{i} \cap Q_j = \varnothing$ for all $i\ne j$, and
\[
\order{Q_0,\ldots,Q_{d-1}} = ((1,0,\ldots,0) \: (0,1,0,\ldots,0) \: \cdots \: (0,\ldots,0,1))^\omega.
\]
This is because, when constructing $Q_0, \ldots, Q_{d-1}$, we simply took the $d$ alternating subsequences from the strictly increasing sequence $\seq{pb^n}$.
By \Cref{def::order-word-for-sequences}, we have that
\[
\beta = \order{qn^d, {pb^{nd}}, {pb^{nd+1}},\ldots, {pb^{nd+(d-1)}}}.
\]
Dividing by $q$ and taking $d$th roots, 
\[
\beta = \order{ {n}, v^{(0)}_n, \ldots, v^{(d-1)}_n}
\]
where
\begin{equation}\label{eq::base-expansions-3}
	v^{(i)}_n = \sqrt[d]{\frac{p}{qb^{d-i}}} \cdot b^{n+1} = \eta \zeta^{d-i} \cdot b^{n+1}.
\end{equation}
To see the connection to expansions in base $b$, recall that for any $x > 0$, the $n$th digit in the base-$b$ expansion of $\{x\}$ is given by $\floor{x \cdot b^{n+1}} \bmod b$.
Hence
\begin{equation}\label{eq::base-expansions--1}
	\gamma_i(n) = \floor{v^{(i)}_n} \bmod d
\end{equation}
for all $i,n$.
Next, observe that $v^{(i)}_n < v^{(j)}_n$ 
for all $n$ and $i < j$, and $v^{(i)}_n < v^{(j)}_m$ for all $i,j$ and $n < m$.
We can therefore write
\[
\beta = \prod_{n=0}^\infty w_n, \qquad\qquad 
w_n =
(1,0,\ldots,0)^{k^{(1)}_n} z_1 \cdots (1,0,\ldots,0)^{k^{(d)}_n} z_d
\]
where $k^{(i)}_n \ge 0$ and
\[
z_i = (y_i, \underbrace{0, \ldots, 0}_{\textrm{$i-1$ times}}, 1, 0, \ldots, 0)
\] 
with $y_i \in \{0,1\}$ for all $i$ and $n$.
The term $v^{(i)}_n$ corresponds to the letter $z_i$ in the factor $w_n$.
By simple counting,
\begin{equation}\label{eq::base-expansions-0}
	\lfloor v^{(i)}_n \rfloor = | \{ l \in \nat \st l \le v^{(i)}_n\} | = \sum_{m=0}^{n-1} \sum_{j=0}^{d-1} (k^{(j)}_m + y_j) + \sum_{j=0}^i (k^{(j)}_m + y_j)
\end{equation}
for all $n, i$.
Write $v^{(-1)}_{n+1} = v^{(d-1)}_{n}$ for all $n, i$.
We then have that
\begin{equation}\label{eq::base-expansions-1}
	\lfloor v^{(i)}_n \rfloor  = \floor{v^{(i-1)}_{n}} + k^{(i)}_n + y_i.
\end{equation}
or all $n \ge 1$ and $i$.
Note that
\[
\frac{v^{(i)}_n}{v^{(i-1)}_{n}} = b^{\frac{1}{d}} > 1
\]
for all $n \ge 1$ and $i$, and hence
\begin{equation}\label{eq::base-expansions-1.5}
	\lim_{n \to \infty} k_n^{(i)} = \infty
\end{equation}
for all $i$.
On the other hand, $v^{(i)}_{n+1} = b \cdot v^{(i)}_n$ for all $n, i$, and from \Cref{eq::base-expansions--1} it therefore follows that
\begin{equation}\label{eq::base-expansions--2}
	\floor{v^{(i)}_{n+1}} = b \cdot \floor{v^{(i)}_n} + \gamma_i(n).
\end{equation}
We can now prove that $\Acc{\beta}$ and $\Acc{\gamma}$ are Turing-equivalent.

{\bfseries$\Acc{\gamma}$ reduces to $\Acc{\beta}$.} 
We give a transducer $\Bcal$ such that $\gamma = \Bcal(\beta)$, and apply \Cref{thm::transduction}.
The transducer reads $\beta$ in chunks $\seq{w_n}$, and outputs $\gamma(n)$ after reading $w_n$.
Before reading $w_n$ for $n \ge 1$, the transducer has in memory only the value of $v^{(1)}_{n-1} \bmod b$.
As it reads $w_n$, for all $i$ it records $k^{(i)}_n \bmod b$ using $b$ states, from which it computes $\floor{v^{(i)}_n}$ using \Cref{eq::base-expansions-1} and $\gamma_i(n)$ using \Cref{eq::base-expansions--1}.

{\bfseries$\Acc{\beta}$ reduces to $\Acc{\gamma}$.} 
We will apply \Cref{thm::fancy-transduction-taylor's-version}.
The factorisation of $\beta$ is given by $\beta = \prod_{n=0}^\infty w_n$ as above.
Let $\Sigma_1 = \{0,\ldots,b-1\}^d$, $\Sigma_2 = \{0,1\}^{d+1}$, and $h \st \Sigma^*_2 \to M$ be a morphism into  a finite monoid.
We have to give a transducer~$\Bcal$ such that $\Bcal(\gamma) = \prod_{n=0}^\infty h(w_n)$.
By the classical lasso argument, from $M$ we can construct $L, m$ with $m \ge 1$ such that for any $x \in M$ and $n \ge L$, $x^n = x^{L + ((n-L) \bmod m)}$.

Using \Cref{eq::base-expansions-1.5} and \Cref{eq::base-expansions-3}, compute $N$ such that $k^{(i)}_n > L$ and $\floor{v^{(i)}_n} > L$ for all $i$ and $n \ge N$.
The transducer $\Bcal$ has the values of $h(w_n)$ hard-coded into it for all $n \le N$.
Before reading $\gamma(n)$ for $n > N$, it has in its memory the value of $r^{(i)}_{n-1} \coloneqq (\floor{v^{(i)}_{n-1}} - L) \bmod m$ for all $i$.
After reading $\gamma(n)$, it first computes $r^{(i)}_n \coloneqq (\floor{v^{(i)}_{n}} - L) \bmod m$ for all $i$ using \Cref{eq::base-expansions--2}.
Then it computes $x^{k^{(i)}_n}$ for all $i$, where $x = h((1,0,\ldots,0))$, using the fact that
\[
x^{k^{(i)}_n} = x^{	\lfloor v^{(i)}_n \rfloor - \lfloor v^{(i)}_{n-1} \rfloor - y_i} = x^{L + \big(\big(r^{(i)}_n - r^{(i)}_{n-1} - 1\big ) \bmod m\big)} 
\]
which follows from our construction of $N$.
Finally, it outputs $h(w_n) = x^{k^{(1)}_n} h(z_1) \cdots x^{k^{(d)}_n} h(z_d)$.
\end{proof}

\section{Discussion}
\label{sec::discussion}

The results of \Cref{sec::lrs-with-one-dominant-root} tell us everything that can be said about decidability of the MSO theories of linear recurrence sequences with one dominant root, barring major advances in the open decision problems of LRS including the Skolem Problem, the Positivity Problem, etc.
However, they do not really tell us anything new about the decidability of the MSO theory of $\langle \nat; <, \{u_n \ge 0 \st n \in \nat\}\rangle$, where $\seq{u_n}$ is an arbitrary integer LRS: our focus in this paper was rather on how to combine multiple predicates of arithmetic origin.
For an integer LRS $\seq{u_n}$ with a single dominant root (which must be real), decidability of the MSO theory of $\langle \nat; <, \{u_n \ge 0 \st n \in \nat\}\rangle$  can be shown using the approach of Carton and Thomas, or even Elgot and Rabin.
Recently, it was shown that for non-degenerate $\seq{u_n}$ with exactly two non-repeated non-real dominant roots, the MSO theory of $\langle \nat; <, \{u_n \ge 0 \st n \in \nat\}\rangle$ is decidable \cite{nieuwveld2025expansions}.
The proof is based on a novel idea: such $\seq{u_n}$ are \emph{pro-disjunctive}, defined as follows.
Order $\{u_n \ge 0 \st n \in \nat\}$ (which is guaranteed to be infinite due to the assumption on $\seq{u_n}$) as $\seq{p_n}$.
Then for any $m \ge 1$ and $w \in \Sigma^*$, where $\Sigma = \{0 \le r < m \st r\textrm{ occurs infinitely often in $\seq{p_n \bmod m}$}\}$, the word $w$ occurs infinitely often in $\seq{p_n \bmod m}$.
That is, rather than showing that $\seq{p_n \bmod m}$ is very structured for any $m\ge 1$, it is shown that $\seq{p_n \bmod m}$ is as random as possible for any $m\ge 1$.
However, for LRS with more than two dominant roots (as well as LRS with two repeated non-real dominant roots) decidability of the corresponding MSO theory remains open.

For predicates $P_1$ and $P_2$ of arithmetic origin, the following elementary property can be expressed in the monadic second-order (or even the first-order) language that is nevertheless of great interest.
There exist (infinitely many) pairs $n,m$ such that $n \in P_1$, $m\in P_2$, and $n-m = c$, where $c$ is a fixed integer.
For example, the famously open Brocard-Ramanujan problem is to determine whether $n! + 1 = m^2$ has any solution $(n,m) \in \nat^2$ with $n \notin \{4,5,7\}$.
Therefore, showing decidability of the MSO theory of $\langle \nat; <, \{n!\st n \in \nat\}, \kthpowers{2}\rangle$ would entail major mathematical breakthroughs.
Similarly, for any constant $k \ge 2$, the solutions $(n,m)$ of $|F_n - m^k| = 1$, where $F_n$ denotes the $n$th Fibonacci number, can be effectively enumerated \cite{bugeaud2008fibonacci}.
However, this is already highly non-trivial, and at the time of writing, no algorithm is known for enumerating all solutions  $(n,m)$ of $F_n - m^k = c$ for given constants $k \ge 2$, $c \in \intg$.
There are many other examples of arithmetic predicates whose MSO theories are connected to unsolved problems in number theory: in the cases we considered, the number-theoretic obstacles were all overcome using Baker's theorem.

\begin{acks}
Toghrul Karimov, Jo\"el Ouaknine, and Mihir Vahanwala are supported
by DFG grant 389792660 as part of \href{https://perspicuous-computing.science}{\texttt{TRR 248}}. Jo\"el~Ouaknine is also
supported by ERC grant DynAMiCs (101167561), and has a secondary affiliation with Keble College, Oxford as \href{https://www.emmy.network/}{\texttt{emmy.network}} Fellow.
Val\'erie Berth\'e is supported by  ERC grant DynAMiCs (101167561) and the Agence~Nationale de la Recherche through the project ``SymDynAr'' (ANR-23-CE40-0024-01).
James Worrell is supported by EPSRC Fellowship EP/X033813/1.

The Max Planck Institute for Software Systems is part of the Saarland
Informatics Campus.
This work partly arose through exchanges held at the workshop \emph{Algorithmic Aspects of Dynamical
Systems}, which took place in April and May 2023 at McGill
University's Bellairs Research Institute in Barbados.
\end{acks}

\bibliographystyle{ACM-Reference-Format}
\bibliography{refs}

@article{renyi1957representations,
	title={Representations for real numbers and their ergodic properties},
	author={R{\'e}nyi, Alfr{\'e}d},
	journal={{Acta Mathematica Academiae Scientiarum Hungaricae}},
	volume={8},
	number={3-4},
	pages={477--493},
	year={1957}
}

@article{thomas1975note,
	title={A note on undecidable extensions of monadic second order successor arithmetic},
	author={Thomas, Wolfgang},
	journal={Archiv f{\"u}r mathematische Logik und Grundlagenforschung},
	volume={17},
	number={1},
	pages={43--44},
	year={1975},
	publisher={Springer}
}

@InProceedings{nieuwveld2025expansions,
	author =	{Nieuwveld, Joris and Ouaknine, Jo\"{e}l},
	title =	{{On Expansions of Monadic Second-Order Logic with Dynamical Predicates}},
	booktitle =	{50th International Symposium on Mathematical Foundations of Computer Science (MFCS 2025)},
	pages =	{80:1--80:17},
	series =	{LIPIcs},
	year =	{2025},
	volume =	{345},
	editor =	{Gawrychowski, Pawe{\l} and Mazowiecki, Filip and Skrzypczak, Micha{\l}},
  publisher =	{Schloss Dagstuhl -- Leibniz-Zentrum f{\"u}r Informatik},
  doi =		{10.4230/LIPIcs.MFCS.2025.80}
}

@article{bugeaud2008fibonacci,
	title={Fibonacci numbers at most one away from a perfect power},
	author={Bugeaud, Yann and Luca, Florian and Mignotte, Maurice and Siksek, Samir},
	journal={Elemente der Mathematik},
	volume={63},
	number={2},
	pages={65--75},
	year={2008}
}

@article{baker-rational-sharp-version-1993,
	author = {A. Baker and G. Wüstholz},
	journal = {Journal für die reine und angewandte Mathematik},
	pages = {19-62},
	title = {Logarithmic forms and group varieties.},
	volume = {442},
	year = {1993}
}

@article{combot2025computing,
	title={Computing linear relations between polynomial roots},
	author={Combot, Thierry},
	journal={{Mathematics of Computation}},
	year={2025}
}

@inproceedings{ouaknine2014ultimate,
title={Ultimate positivity is decidable for simple linear recurrence sequences},
author={Ouaknine, Jo{\"e}l and Worrell, James},
booktitle={International Colloquium on Automata, Languages, and Programming},
pages={330--341},
year={2014},
organization={Springer}
}

@article{mignotte-shorey-tijdeman-skolem,
	author = 	 {M. Mignotte and T. N. Shorey and R. Tijdeman},
	title = 	 {The distance between terms of an algebraic
	recurrence sequence},
	journal = 	 {Journal f\"ur die reine und angewandte Mathematik},
	year = 	 1984,
	volume = 	 349
}

@incollection{Cox2025,
	title={Gr{\"o}bner bases},
	author={Cox, David A. and Little, John and O’Shea, Donal},
	booktitle={Ideals, varieties, and algorithms: an introduction to computational algebraic geometry and commutative algebra},
	pages={49--119},
	year={2025},
	publisher={Springer}
}

@article{almagor2021deciding,
	title={Deciding $\omega$-regular properties on linear recurrence sequences},
	author={Almagor, Shaull and Karimov, Toghrul and Kelmendi, Edon and Ouaknine, Jo{\"e}l and Worrell, James},
	journal={Proceedings of the ACM on Programming Languages (POPL)},
	volume={5},
	pages={1--24},
	year={2021},
	publisher={ACM New York, NY, USA}
}

@inproceedings{karimov2023power,
	title={The power of {P}ositivity},
	author={Karimov, Toghrul and Kelmendi, Edon and Nieuwveld, Joris and Ouaknine, Jo{\"e}l and Worrell, James},
	booktitle={2023 38th Annual ACM/IEEE Symposium on Logic in Computer Science (LICS)},
	pages={1--11},
	year={2023},
	publisher={IEEE}
}

@incollection{karimov2025algorithmic,
	title={Algorithmic applications of {S}chanuel’s conjecture},
	author={Karimov, Toghrul and Nieuwveld, Joris and Ouaknine, Jo{\"e}l and Vahanwala, Mihir and Worrell, James},
	booktitle={Principles of Formal Quantitative Analysis: Essays Dedicated to Christel Baier on the Occasion of Her 60th Birthday},
	pages={118--138},
	year={2025},
	publisher={Springer}
}

@inproceedings{berthe2024fullversion,
  title={On the decidability of monadic second-order logic with arithmetic predicates},
  author={Berth{\'e}, Val{\'e}rie and Karimov, Toghrul and Nieuwveld, Joris and Ouaknine, Jo{\"e}l and Vahanwala, Mihir and Worrell, James},
  booktitle={Proceedings of the 39th Annual ACM/IEEE Symposium on Logic in Computer Science},
  pages={1--14},
  year={2024}
}

@book {MR2953186,
	AUTHOR = {Bugeaud, Yann},
	TITLE = {Distribution modulo one and {D}iophantine approximation},
	SERIES = {Cambridge Tracts in Mathematics},
	VOLUME = {193},
	YEAR = {2012}
}

@inproceedings{rabinovich2006decidable,
  title={Decidable theories of the ordering of natural numbers with unary predicates},
  author={Rabinovich, Alexander and Thomas, Wolfgang},
  booktitle={International Workshop on Computer Science Logic},
  pages={562--574},
  year={2006},
  publisher={Springer}
}

@article{POPL22,
  author    = {Toghrul Karimov and
               Engel Lefaucheux and
               Jo{\"{e}}l Ouaknine and
               David Purser and
               Anton Varonka and
               Markus A. Whiteland and
               James Worrell},
  title     = {What's decidable about linear loops?},
  journal   = {Proceedings of the ACM on Programming Languages (POPL)},
  volume    = {6},
  pages     = {1--25},
  year      = {2022}
}

@article{bateman1993decidability,
  title={Decidability and undecidability of theories with a predicate for the primes},
  author={Bateman, Paul T. and Jockusch, Carl G. and Woods, Alan R.},
  journal={The Journal of Symbolic Logic},
  volume={58},
  number={2},
  pages={672--687},
  year={1993},
  publisher={Cambridge University Press}
}

@article{buchi1969definability,
	author = {J. R. B{\"u}chi and L. H. Landweber},
	journal = {The Journal of Symbolic Logic},
	number = {2},
	pages = {166--170},
	publisher = {Association for Symbolic Logic},
	title = {Definability in the monadic second-order theory of successor},
	volume = {34},
	year = {1969}
}

@article{semenov84_logic_theor_one_place_funct,
  author = {A. L. Sem\"enov},
  title = {Logical theories of one-place functions on the set of natural numbers},
  journal = {Mathematics of the USSR-Izvestiya},
  year = {1984},
  volume = {22},
  number = {3},
  pages = {587-618}
}

@article{semenov1980related,
  title={On certain extensions of the arithmetic of addition of natural numbers},
  author={A. L. Sem\"enov},
  journal={Mathematics of The USSR-Izvestiya},
  year={1980},
  volume={15},
  number = {2},
  pages={401-418}
  }

@article{gonek-kronecker,
	title = {Kronecker’s approximation theorem},
	journal = {Indagationes Mathematicae},
	volume = {27},
	number = {2},
	pages = {506-523},
	year = {2016},
	note = {In Memoriam J.G. Van der Corput (1890–1975) Part 2},
	issn = {0019-3577},
	author = {Steven M. Gonek and Hugh L. Montgomery},
	keywords = {Kronecker, Chebyshev, Fejér, Peak function}
}

@incollection{thomas1997languages,
  title={Languages, automata, and logic},
  author={Thomas, Wolfgang},
  booktitle={Handbook of formal languages, volume 3},
  pages={389--455},
  year={1997},
  publisher={Springer}
}

@incollection{buchi-MSO,
	author = {J. R. B{\"u}chi},
	booktitle = {The Collected Works of J. Richard B{\"u}chi},
	pages = {425--435},
	publisher = {Springer New York},
	title = {On a decision method in restricted second order arithmetic},
	year = {1990},
}

@book{lothaire2002algebraic,
  title={Algebraic combinatorics on words},
  author={Lothaire, M.},
  volume={90},
  year={2002},
  publisher={Cambridge University Press}
}

@article{baryshnikov1995complexity,
  title={Complexity of trajectories in rectangular billiards},
  author={Baryshnikov, Yu.},
  journal={Communications in Mathematical Physics},
  volume={174},
  pages={43--56},
  year={1995},
  publisher={Springer}
}

@article{bedaride2009directional,
  title={Directional complexity of the hypercubic billiard},
  author={Bedaride, Nicolas},
  journal={Discrete Mathematics},
  volume={309},
  number={8},
  pages={2053--2066},
  year={2009},
  publisher={Elsevier}
}

@book{allouche2003automatic,
  title={Automatic sequences: theory, applications, generalizations},
  author={Allouche, Jean-Paul and Shallit, Jeffrey},
  year={2003},
  publisher={Cambridge University Press}
}

@article{arnoux1994complexity,
  title={Complexity of sequences defined by billiard in the cube},
  author={Arnoux, Pierre and Mauduit, Christian and Shiokawa, Iekata and Tamura, Jun-ichi},
  journal={Bulletin de la Soci{\'e}t{\'e} Math{\'e}matique de France},
  volume={122},
  number={1},
  pages={1--12},
  year={1994}
}

@book{cohen2013course,
  title={A course in computational algebraic number theory},
  author={Cohen, Henri},
  volume={138},
  year={2013},
  serier={Graduate Texts in Mathematics}
}

@book{fogg2002substitutions,
  title={Substitutions in dynamics, arithmetics and combinatorics},
  author={Fogg, N. Pytheas and Berth{\'e}, Val{\'e}r{\'e} and Ferenczi, S{\'e}bastien and Mauduit, Christian and Siegel, Anne},
  year={2002},
  publisher={Springer}
}

@article{adamczewski2007complexity,
  title={On the complexity of algebraic numbers},
  author={Adamczewski, Boris and Bugeaud, Yann},
  journal={Annals of Mathematics},
  volume={165},
  number={2},
  pages={547--565},
  year={2007},
  publisher={Princeton University}
}

@article{berthe2023monadic,
  title={The monadic theory of toric words},
  author={Berth{\'e}, Val{\'e}rie and Karimov, Toghrul and Nieuwveld, Joris and Ouaknine, Jo{\"e}l and Vahanwala, Mihir and Worrell, James},
  journal={Theoretical Computer Science},
  volume={1025},
  pages={114959},
  year={2025},
  publisher={Elsevier}
}

@misc{mso-notes,
	author = {Achim Blumensath},
	title = {Monadic second-order model theory},
	howpublished = "Pre-print of a book, \url{https://www.fi.muni.cz/~blumens/MSO.pdf}",
	year = {2023}
}

@book{waldschmidt2000,
	author={Waldschmidt, Michel},
	title={Diophantine approximation on linear algebraic groups: transcendence properties of the exponential function in several variables},
	year={2000},
	publisher={Springer Berlin Heidelberg},
	address={Berlin, Heidelberg}
}

@article{cartonthomas,
title = {The monadic theory of morphic infinite words and generalizations},
journal = {Information and Computation},
volume = {176},
number = {1},
pages = {51-65},
year = {2002},
issn = {0890-5401},
author = {Olivier Carton and Wolfgang Thomas}
}

@article{elgotrabin,
	author = {Calvin C. Elgot and Michael O. Rabin},
	title = {Decidability and undecidability of extensions of second (first) order theory of (generalized) successor},
	journal = {The Journal of Symbolic Logic},
	volume = {31},
	number = {02},
	pages = {169-181},
	year = {1966}
}

@book{recseq,
  author = {Graham Everest and Alf van der Poorten and Igor Shparlinski and Thomas Ward},
  title = {Recurrence sequences},
  series = {Mathematical Surveys and Monographs},
  publisher={American Mathematical Society},
  year = {2003}
}

@incollection{harman2002one,
  title={One hundred years of normal numbers},
  author={Harman, Glyn},
  booktitle={Surveys in Number Theory},
  pages={57--74},
  year={2002},
  publisher={AK Peters/CRC Press}
}

@book{Lang66,
author = {Lang, Serge},
year = {1966},
title = {Introduction to transcendental numbers},
publisher = {Addison-Wesley}
}

@incollection{MacintyreW96,
	year = {1996},
	title = {On the decidability of the real exponential field},
	editor = {Piergiorgio Odifreddi},
	pages = {441--467},
	author = {A. Macintyre and A. J. Wilkie},
	booktitle = {Kreiseliana. About and Around Georg Kreisel},
	publisher = {AK Peters}
}

@article{masser1988linear,
  title={Linear relations on algebraic groups},
  author={Masser, David W.},
  journal={New Advances in Transcendence Theory},
  pages={248--262},
  year={1988},
  publisher={Cambridge University Press}
}

@inproceedings{ouaknine2014positivity,
  title={Positivity problems for low-order linear recurrence sequences},
  author={Ouaknine, Jo{\"e}l and Worrell, James},
  booktitle={Proceedings of the twenty-fifth annual ACM-SIAM Symposium on Discrete Algorithms},
  pages={366--379},
  year={2014},
  publisher={SIAM}
}

@article{queffelec2006old,
  title={Old and new results on normality},
  author={Queff{\'e}lec, Martine},
  journal={Dynamics \& Stochastics: Festschrift in Honour of MS Keane},
  volume={48},
  pages={225},
  year={2006},
  publisher={IMS}
}

@article{rabinovich,
title = {On decidability of monadic logic of order over the naturals extended by monadic predicates},
journal = {Information and Computation},
volume = {205},
number = {6},
pages = {870-889},
year = {2007},
issn = {0890-5401},
author = {Alexander Rabinovich}
}

@article{schinzel,
  title={On the equation $y^m = P(x)$},
  author={Andrzej Schinzel and Robert Tijdeman},
  journal={Acta Arithmetica},
  year={1976},
  volume={31},
  pages={199-204}
}

@book{rosen2011elementary,
  title={Elementary number theory},
  author={Rosen, Kenneth H.},
  year={2011},
  publisher={Pearson Education London}
}

@book{bugeaud2012distribution,
  title={Distribution modulo one and {D}iophantine approximation},
  author={Bugeaud, Yann},
  volume={193},
  year={2012},
  publisher={Cambridge University Press}
}

@phdthesis{karimov2023algorithmic,
  title={Algorithmic {V}erification of {L}inear {D}ynamical {S}ystems},
  author={Karimov, Toghrul},
  year={2023},
  school = {University of Saarland},
  publisher={Saarl{\"a}ndische Universit{\"a}ts-und Landesbibliothek}
}

\end{document}